\newcommand{\lemlab}[1]{\label{lemma:#1}}
\newcommand{\lemref}[1]{Lemma~\ref{lemma:#1}}
\newtheorem{theorem}{Theorem}[section]
\newtheorem{cor}[theorem]{Corollary}
\newtheorem{lemma}[theorem]{Lemma}
\def \reals{{\mathbb R}}
\def \sphere{{\mathbb S}}
\def\I{\EuScript{I}}
\def\bd{{\partial}}
\def\eps{{\varepsilon}}
\def\poly{\diamond}
\newcommand{\ignore}[1]{}
\def\bisect{b}
\def\distfn{\varphi}
\def\etal{\textsl{et~al.}}
\def\conv{\mathop{\mathrm{conv}}}
\def\DT{\mathop{\mathrm{DT}}}
\def\VD{\mathop{\mathrm{VD}}}
\def\Vor{{\mathop{\mathrm{Vor}}}}
\def\Generic{Generic\xspace}
\def\generic{generic\xspace}
\def\Degenerate{Singular\xspace}
\def\degenerate{singular\xspace}
\long\def\@makecaption#1#2{
   \vskip 10pt
   \setbox\@tempboxa\hbox{{\footnotesize \textbf{#1.} #2}}
   \ifdim \wd\@tempboxa >\hsize         % IF longer than one line:
       {\footnotesize \textbf{#1.} #2\par}% THEN set as ordinary paragraph.
     \else                              %   ELSE  center.
       \hbox to\hsize{\hfil\box\@tempboxa\hfil}
   \fi}
\begin{document}

\begin{titlepage}

\title{Kinetic Voronoi Diagrams and Delaunay Triangulations under Polygonal Distance Functions\thanks{%
Work by P.A. and M.S. was supported by Grant 2012/229 from the U.S.-Israel
Binational Science Foundation.  Work by P.A. was also supported by NSF under
grants CCF-09-40671, CCF-10-12254, and CCF-11-61359, by an ARO contract
W911NF-13-P-0018, and by an ERDC contract W9132V-11-C-0003.
Work by Haim Kaplan has been supported by grant 822/10 from the
      Israel Science Foundation, grant 1161/2011 from the
      German-Israeli Science Foundation, and by the Israeli Centers
      for Research Excellence (I-CORE) program (center no.~4/11). %
Work by N.R. was partially supported by Grants 975/06 and 338/09 from the
Israel Science Fund, by Minerva Fellowship Program of the Max Planck Society, by the Fondation Sciences Math\'{e}matiques de Paris (FSMP), and by a public grant overseen by the French National Research Agency (ANR) as part of the "Investissements d'Avenir" program (reference: ANR-10-LABX-0098).
Work by Micha Sharir has also been supported by NSF Grant
      CCF-08-30272, by Grants 338/09 and 892/13 from the Israel
      Science Foundation, by the Israeli Centers for
      Research Excellence (I-CORE) program (center no.~4/11), and by
      the Hermann Minkowski--MINERVA Center for Geometry at Tel Aviv
      University.
}}

\author{Pankaj K. Agarwal\thanks{%
Department of Computer Science, Duke University, Durham, NC
27708-0129, USA; {\tt pankaj@cs.duke.edu}.}
\and
Haim Kaplan\thanks{%
School of Computer Science, Tel Aviv University, Tel~Aviv 69978, Israel;
{\tt haimk@tau.ac.il}.}
\and
Natan Rubin\thanks{%
Jussieu Institute of Mathematics, Pierre and Marie Curie University and Paris Diderot University, UMR 7586 du CNRS, Paris 75005, France;
{\tt rubinnat.ac@gmail.com}.}
\and
Micha Sharir\thanks{%
School of Computer Science, Tel Aviv University, Tel~Aviv 69978, Israel;
{\tt michas@tau.ac.il}.}
}

\maketitle

\begin{abstract}
Let $P$ be a set of $n$ points and $Q$ a convex $k$-gon in $\reals^2$.
We analyze in detail the topological (or discrete) changes in the structure
of the Voronoi diagram and the Delaunay triangulation of $P$, under the convex
distance function defined by $Q$, as the points of $P$ move along prespecified
continuous trajectories. Assuming that each point of $P$ moves along an algebraic
trajectory of bounded degree, we establish an upper bound of $O(k^4n\lambda_r(n))$
on the number of topological changes experienced by the diagrams throughout the motion;
here $\lambda_r(n)$ is the maximum length of an $(n,r)$-Davenport-Schinzel
sequence, and $r$ is a constant depending on the algebraic degree of the
motion of the points. Finally, we describe an algorithm for
efficiently maintaining the above structures, using the kinetic data structure (KDS)
framework.
%This extends and completes an earlier work of Chew~\cite{Chew}.
\end{abstract}

\end{titlepage}

\section{Introduction}

%\paragraph{General $Q$-Delaunay diagrams.}
Let $P$ be a set of $n$ points in $\reals^2$, and let $Q$ be a compact convex
(not necessarily polygonal) set in $\reals^2$ with nonempty interior and
with the origin lying in its interior.
For an ordered pair of points $x,y\in \reals^2$, the \emph{$Q$-distance}
from $x$ to $y$ is defined as
$$
d_Q(x,y) = \min\{\lambda \mid y\in x+\lambda Q\};
$$
$d_Q$ is a metric if and only if $Q$ is centrally symmetric
with respect to the origin (otherwise $d_Q$ need not be symmetric).
%Using $d_Q$, We define the $Q$-Voronoi diagram
%$\VD^Q(P)$ of $P$ as the partitioning of the plane into Voronoi cells, one
For a point of $P$, the \emph{$Q$-Voronoi cell} of $p$ is defined as
$$
\Vor_Q(p) = \{ x\in\reals^2 \mid d_Q(x,p) \le d_Q(x,p')\, \forall p'
\in P \}.
$$
If the points of $P$ are in general position with respect to $Q$
(see Section~\ref{sec:static} for the definition), the Voronoi cells
of points in $P$ are nonempty, have pairwise-disjoint interiors, and
partition the plane (see Figure~\ref{fig:VDT}(b)). The planar
subdivision induced by these Voronoi cells is referred to as the
\emph{$Q$-Voronoi diagram} of $P$ and we denote it as $\VD_Q(P)$.

%As before, the figures are barely visible. Please either enlarge or
%(better) make them thicker.

%The Delaunay diagram $\DT^Q(P)$ of $P$ with respect to $Q$ is composed of all
%edges $pq$, with $p,q\in P$, for which there exists a homothetic placement of
%$Q$ whose boundary touches $p$ and $q$ and whose interior contains no other
%points of $P$. Placements of $Q$ with this latter property are called
%\emph{$P$-empty}. (Note that if $Q$ is a unit Euclidean disk then
%$\DT^Q(P)$ is the well-known Euclidean Delaunay triangulation of $P$.)

\begin{figure}[htb]
\centering
\begin{tabular}{ccc}
\includegraphics[scale=0.45]{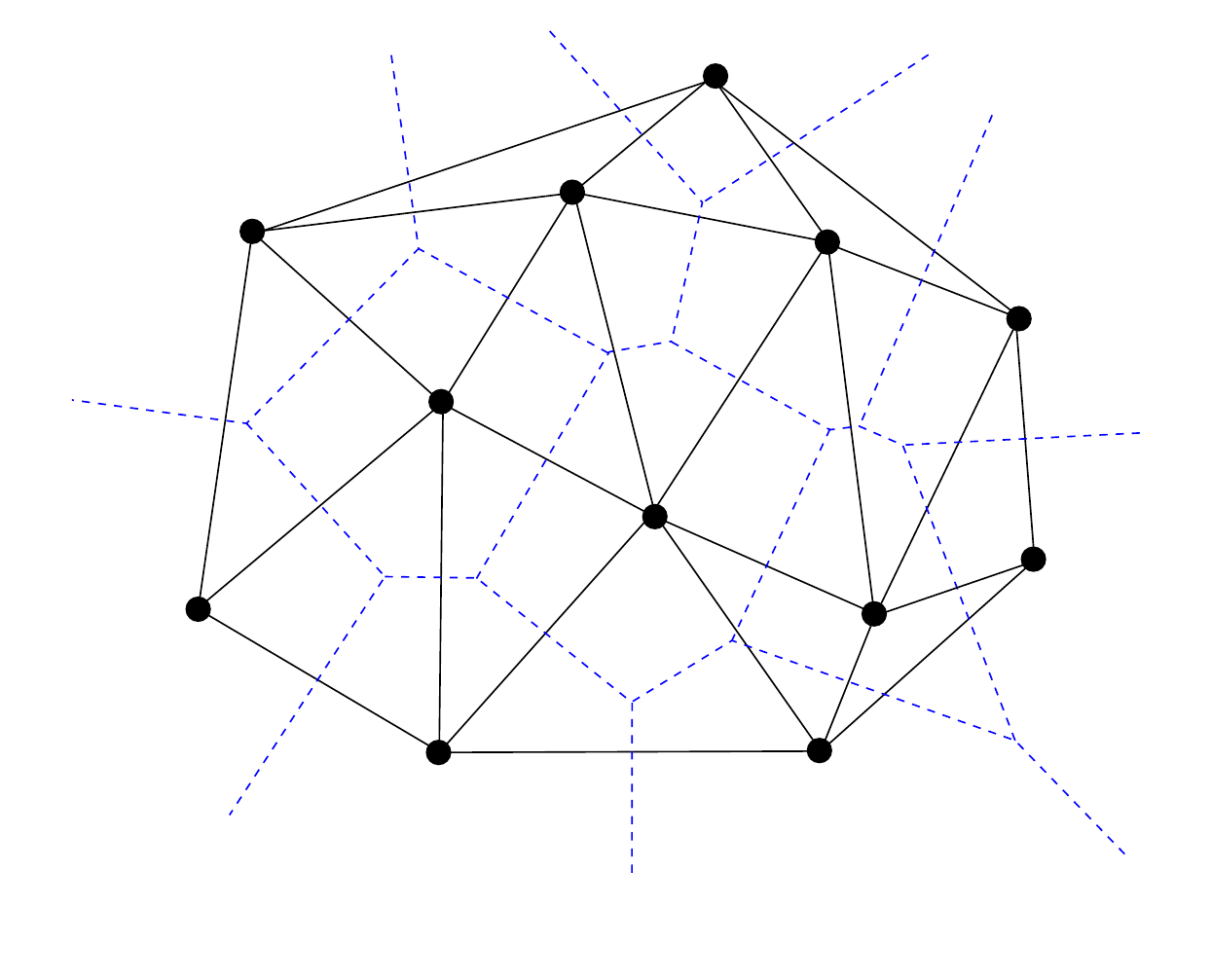}&
\includegraphics[scale=0.45]{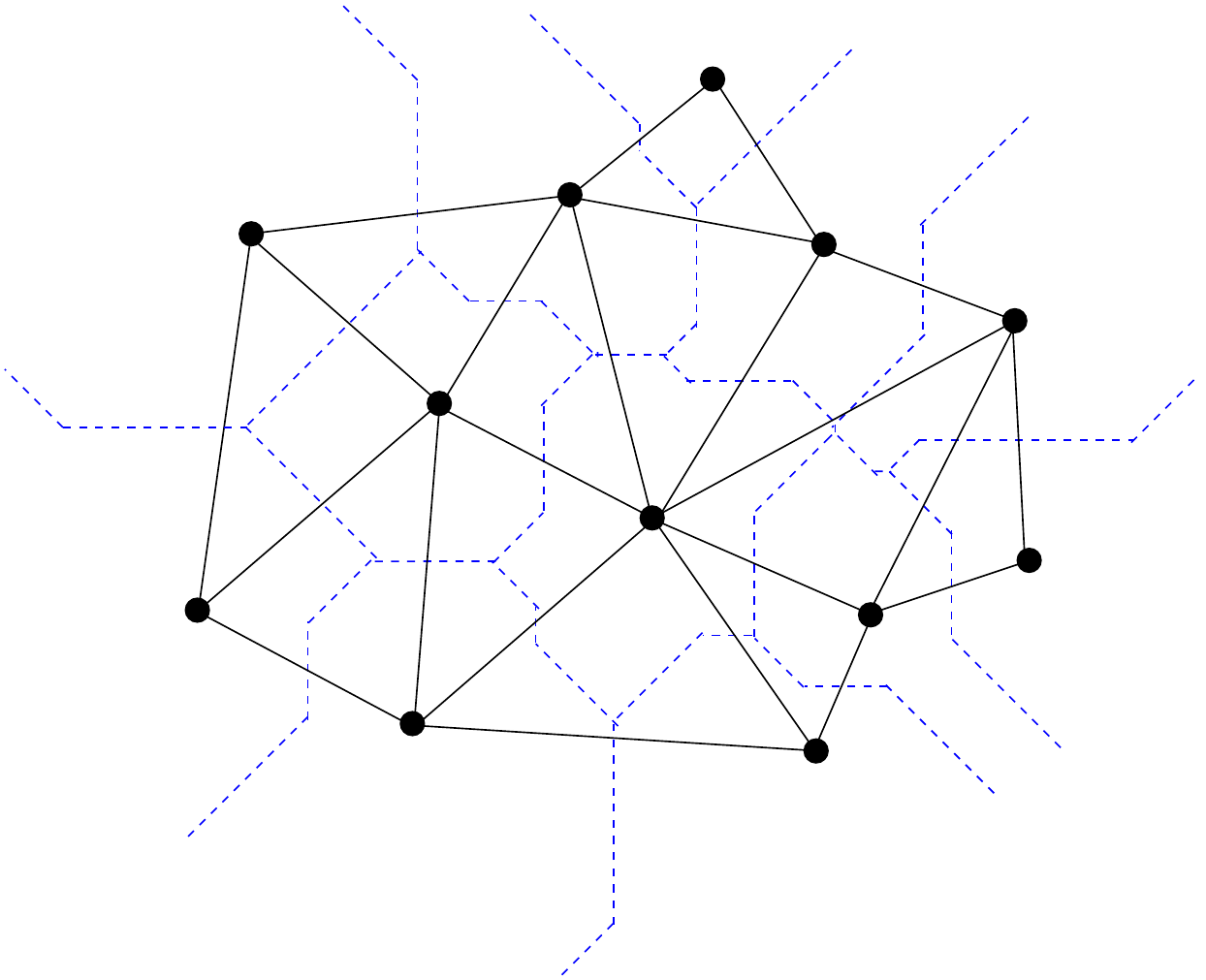}&
\hspace*{5mm}\includegraphics[scale=0.45]{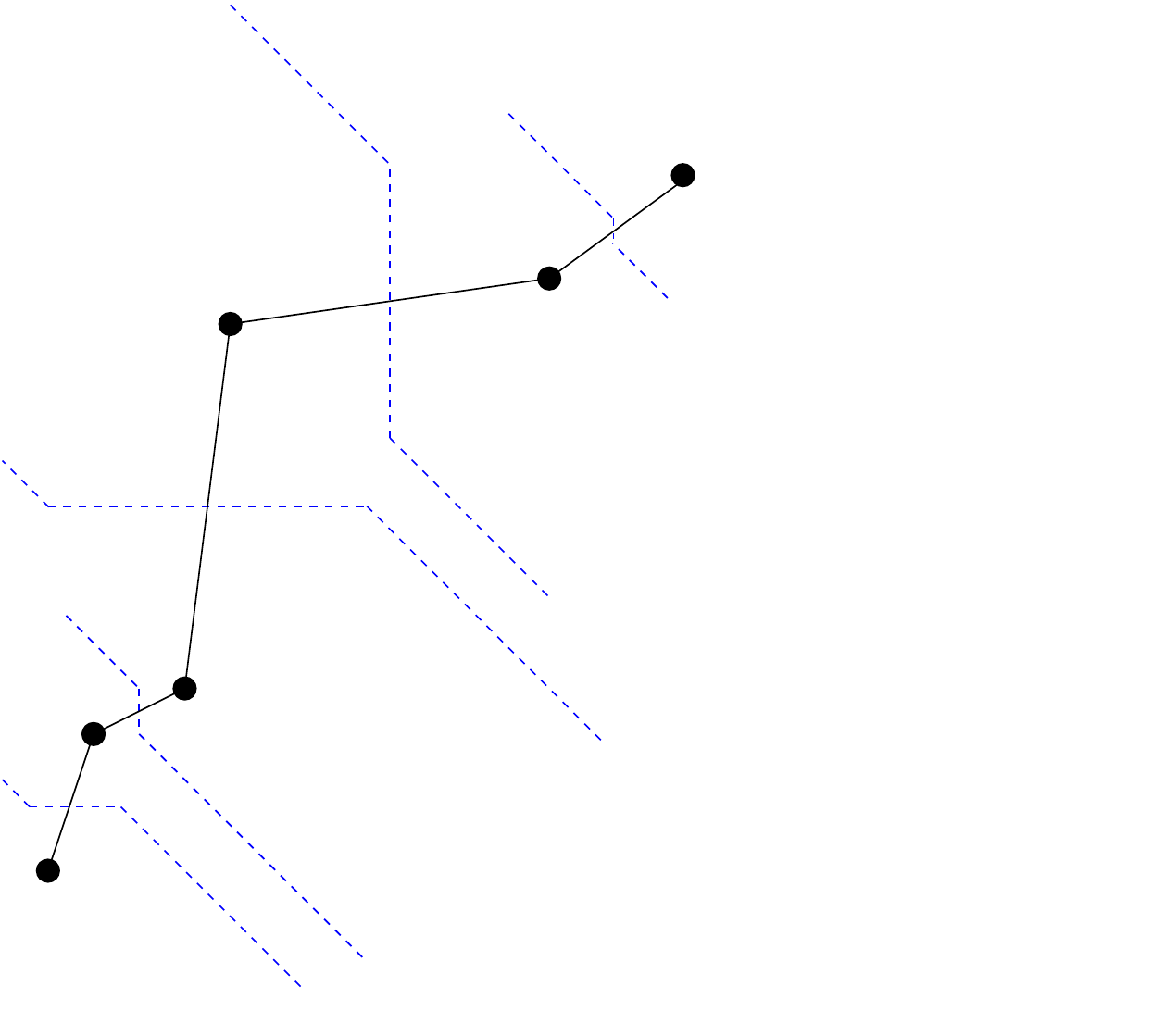}\\
\small (a) & \small (b) & \small (c)
\end{tabular}
\caption{(a) The Euclidean Voronoi diagram (dotted) and Delaunay
triangulation (solid). (b) $\VD_Q(P)$ and $\DT_Q(P)$ for an
axis-parallel square $Q$, i.e., the diagrams $\VD_Q$ and $\DT_Q$
under the $L_\infty$-metric. (c) $\VD_Q(P)$ and $\DT_Q(P)$, for the
same $Q$ as in (b), with an empty-interior support hull ($\VD_Q(P)$
has no vertices in this case).} \label{fig:VDT}
\end{figure}

The \emph{$Q$-Delaunay triangulation} of $P$, denoted by $\DT_Q(P)$,
is the dual structure of $\VD_Q(P)$. Namely, a pair of points
$p,q\in P$ are connected by an edge in $\DT_Q(P)$ if and only if the
boundaries of their respective $Q$-Voronoi cells $\Vor_Q(p)$ and
$\Vor_Q(q)$ share a $Q$-Voronoi edge, given by
$$
e_{pq}=\{x\in\reals^2 \mid d_Q(x,p)=d_Q(x,q) \le d_Q(x,p')\; \forall p'\in P \} .
$$
$\DT_Q(P)$ can be defined directly as well: it is composed of all
edges $pq$, with $p,q\in P$, for which there exists a homothetic
placement of $Q$ whose boundary touches $p$ and $q$ and whose
interior contains no other
points of $P$.\footnote{%
  We remark that $\Vor_Q(p)$ is often defined in the literature as the set
  $\Vor(p)=\{x \in \reals^2 \mid d_Q(p,x) \le d_Q(q,x) \; \forall q \in P\}$~\cite{AKL,CD,WWW}.
  If $Q$ is not centrally symmetric, then this definition of $\Vor(p)$ is not the same
  as the one given above.
  Furthermore, under this definition, $pq$ is an edge of $\DT_Q(P)$ if there exists
  a $P$-empty homothetic placement of $-Q$ (and not of $Q$) whose boundary touches $p$ and $q$.}
Placements of $Q$ with this latter property are called
\emph{$P$-empty}. If $Q$ is a circular disk then $\DT_Q(P)$ (resp.,
$\VD_Q(P)$) is the well-known Euclidean Delaunay triangulation
(resp., Voronoi diagram) of $P$.

If $P$ is in general position with respect to $Q$, then $\DT_Q(P)$
is spanned by so called $Q$-Delaunay triangles. Each of these
triangles $\triangle pqr$ corresponds to the (unique) $P$-empty
homothetic placement $Q_{pqr}$ of $Q$ whose boundary touches $p,q,$
and $r$. That is, $\triangle pqr$ corresponds to a $Q$-Voronoi
vertex $v_{pqr}$ that lies at equal $Q$-distances from $p$, $q$, and
$r$, so that $v_{pqr}$ is the center of $Q_{pqr}$ (that is,
$v_{pqr}$ is the image of the origin under the homothetic mapping of
$Q$ into $Q_{pqr}$).
%(Uniqueness follows from the property that, in general position, the
%boundaries of two homothetic placements of a convex set intersect at most
%twice; see below for a more detailed re-statement of thi property.)
If $Q$ is smooth (e.g., as in the Euclidean case), then $\DT_Q(P)$
is a triangulation of the convex hull of $P$; otherwise it is a
triangulation of a simply-connected polygonal subregion of
$\conv(P)$, sometimes referred to as the \emph{support hull} of $P$
(see~\cite{WWW} and Figure~\ref{fig:VDT}~(b)). The interior of the
support hull may be empty, as shown in Figure~\ref{fig:VDT}~(c).

In many applications of Delaunay/Voronoi methods (e.g., mesh
generation and kinetic collision detection), the points in $P$ move
continuously, so these structures need to be updated efficiently as
motion takes place. Even though the motion of the points of $P$ is
continuous, the topological  structures of $\VD_Q(P)$ and
$\DT_Q(P)$ change only at discrete times when certain \emph{events}
occur.\footnote{The {\em topological structures} of $\DT_Q(P)$ and $\VD_Q(P)$ are  the graphs that they define. More specifically, the topological structure of $\DT_Q(P)$ and $\VD_Q(P)$ consists of  the set of triples of points defining the
Voronoi vertices, and the sets of Voronoi and Delaunay edges. As we will see later each Voronoi  edge is a sequence  of one or more edgelets. Each such edgelet is  defined by a pair of edges of $Q$. The sequences of pairs of edges of $Q$ defining
the edgelet structures of the Voronoi edges
 are also part of the topological structure of $\VD_Q(P)$.}
% (unless the motion of the points is very degenerate, which we assume not to be the case).
Assume that each point of $P$ moves
independently along some known trajectory.
Let $p_i(t)=(x_i(t),y_i(t))$ denote the position of point $p_i$ at time
$t$, and set $P(t)=\{p_1(t), \ldots, p_n(t)\}$.  We call the motion of $P$
\emph{algebraic} if each $x_i(t),y_i(t)$ is a polynomial function of $t$,
and the \emph{degree} of the motion of $P$ is the maximum degree of these
polynomials.\footnote{%
  This assumption can be somewhat relaxed to allow more general motions,
  as can be inferred from the analysis in the paper.}

In this paper we focus on the case when $Q$ is a convex $k$-gon and
study the resulting $Q$-Voronoi and $Q$-Delaunay structures as each
point of $P$ moves continuously along an algebraic trajectory whose
degree is bounded by a constant. Since $Q$ will be either fixed or
obvious from the context, we will use the simplified notations
$\Vor(p)$, $\VD(P)$, and $\DT(P)$ to denote $\Vor_Q(p)$, $\VD_Q(P)$,
and $\DT_Q(P)$, respectively.

\paragraph{Related work.}
There has been extensive work on studying the geometric and topological structure of Voronoi diagrams and Delauany triangulations under convex distance
functions; see e.g.\ \cite{AKL} and the references therein. In the late 1970s,
$O(n\log n)$-time algorithms were proposed for computing the Voronoi diagram of a set of $n$ points in $\reals^2$ under any $L_p$-metric~\cite{Hw,Lee,LW}.
In the mid 1980s, Chew and Drsydale~\cite{CD} and
Widmayer~\etal~\cite{WWW} showed that if $Q$ is a convex $k$-gon, $\VD(P)$ has
$O(nk)$ size and that it can be computed in $O(kn\log n)$ time.
Motivated by a motion-planning application, Leven and
Sharir~\cite{LS} studied Voronoi diagrams under a convex polygonal distance
function for the case where the input sites are convex polygons.
Efficient divide-and-conquer, sweep-line, and edge-flip based incremental
algorithms have been  proposed to compute $\DT(P)$
directly~\cite{Dr,Ma,Sk}. Several recent works study
the structure of $\VD(P)$ under a convex polyhedral
distance function in $\reals^3$~\cite{BSTY,IKLM,KS}.

One of the hardest and best-known open problems in
discrete and computational geometry is to determine the asymptotic
behavior of the maximum possible number of discrete changes
experienced by the \emph{Euclidean} Delaunay triangulation during an
algebraic motion of constant degree of the points of $P$, where the prevailing
conjecture is that this number is nearly quadratic in $n$. A near-cubic bound was proved in~\cite{gmr-vdmpp-92}.
After
almost 25 years of no real progress, two recent works by one of the
authors \cite{Rubin, RubinUnit} substantiate this conjecture, and
establish an almost tight upper bound of $O(n^{2+\eps})$, for any
$\eps>0$, for restricted motions where any four points of $P$ can
become cocircular at most \emph{two} times (in \cite{Rubin}) or at
most \emph{three} times (in \cite{RubinUnit}). In particular, the
latter result \cite{RubinUnit}, involving at most three
cocircularities of any quadruple, applies to the case of points
moving along lines at common (unit) speed. Only near-cubic bounds are
known so far for more general motions.  Chew~\cite{Chew} showed that
the number of topological changes in the Delaunay triangulation
under the $L_1$ or $L_\infty$ metric is $O(n\lambda_r(n))$, where
$\lambda_r(n)$ is the almost-linear maximum length of a Davenport-Schinzel
sequence of order $r$ on $n$ symbols, and $r$ is a constant that
depends on the algebraic degree of the motions of the points. Chew's
result also holds for any convex quadrilateral $Q$.  He focuses on bounding
the number of changes in the Delaunay triangulation and not how it changes at
each ``event,'' so his analysis omits some critical details of how the Delaunay
triangulation and the Voronoi diagram change at an event; changes in the
topological structure of $\VD(P)$ are particularly subtle.
Chew remarks, without supplying any details, that his technique can be
extended to general convex polygons.

Later, Basch~\etal~\cite{bgh-dsmd-99} introduced
the {\em kinetic data structure} (KDS in short) framework
for designing efficient algorithms for maintaining a variety of
geometric and topological structures of mobile data. Several algorithms
have been developed in this framework for kinetically maintaining various
geometric and topological structures; see~\cite{Gu}.
The crux in designing an efficient KDS is finding a set of
{\em certificates} that, on one hand, ensure the correctness of
the configuration currently being maintained,
and, on the other hand, are inexpensive to maintain as the points move.
When a certificate fails during the motion of the objects,
the KDS fixes the configuration, replaces the failing certificate(s) by
new valid ones, and computes their failure times. The failure times, called \emph{events}, are stored in a priority queue, to keep track of the next event that the KDS needs to process.
The performance of a KDS is measured by
the number of events that it processes, the time taken to process
each event, and the total space used. If these parameters are small
(in a sense that may be problem dependent and has to be made precise),
the KDS is called, respectively, \emph{efficient}, \emph{responsive},
and \emph{compact}. See~\cite{bgh-dsmd-99,Gu} for details.
%\footnote{% A fourth parameter, of \emph{locality}, is of secondary significance in our application, and we ignore it.}
%\micha{So we dont want to mention ``local'' at all?}

Delaunay triangulations and Voronoi diagrams are well suited for the KDS framework
because they admit local certifications associated with their
individual features.  These certifications fail only at the events
when the topological structure of the diagrams changes. The resulting KDS is
compact ($O(n)$ certificates suffice) and responsive (each update takes
$O(\log n)$ time, mainly to update the event priority queue), but its
efficiency, namely, the number of events that it has to process, depends on the
number of topological changes in $\DT(P)$, so a near quadratic
bound on the number of events for the Euclidean case holds only when
each point moves along some line with unit speed (or in similar situations
when only three co-circularities can exist for any quadruple of points).
A KDS for $\DT (P)$ when $Q$ is a convex quadrilateral was presented by
Abam and de Berg~\cite{AB}, but it is not straightforward to extend their
KDS for the case where $Q$ is a general convex $k$-gon.
Furthermore, it is not clear how to use their KDS for maintaining $\VD(P)$.

\paragraph{Our contribution.}
First, we establish a few key topological properties of $\VD(P)$
and $DT(P)$ when $P$ is a set of $n$ stationary points
in $\reals^2$ and $Q$ is a convex $k$-gon (Section~\ref{sec:static}).
Although these properties follow from earlier work on this topic
(see~\cite[Chapter~7]{AKL}),
we include them here because they are important for the kinetic setting
and most of them have not been stated in earlier work in exactly the same form as here.

Next, we characterize the topological changes
that $\VD(P)$ and $\DT(P)$ can undergo when
the points of $P$ move along continuous trajectories
(Section~\ref{sec:kinetic}).
These changes occur at critical moments
when the points of $P$ are not in $Q$-general position, so that some
$O(1)$ points of $P$ are involved in a \emph{degenerate}
configuration with respect to $Q$. The most ubiquitous type of such
events is when four points of $P$ become \emph{$Q$-cocircular}, in
the sense that there exists a $P$-empty  homothetic placement of $Q$ whose
boundary touches those four points.

We provide the first comprehensive and rigorous asymptotic analysis
of the maximum number of topological changes that $\VD(P)$ and
$\DT(P)$ can undergo during the motion of the points of $P$ (Section~\ref{sec:count}).
Specifically, if $Q$ has $k$ vertices, then $\VD(P)$ and
$\DT(P)$ experience $O(k^4n\lambda_r(n))$ such changes, where
$\lambda_r(n)$ is the almost-linear maximum length of a Davenport-Schinzel
sequence of order $r$ on $n$ symbols, and $r$ is some constant that
depends on the algebraic degree of the motions of the points.
Some of these changes occur as components of so-called
\emph{\degenerate sequences}, in which several events
that affect the structure of $\VD(P)$ and $\DT(P)$ occur simultaneously, and
their collective effect might involve a massive change in the
topological structures of these diagrams. These compound effects are
a consequence of the non-strict convexity of $Q$, and their
analysis requires extra care. Nevertheless, the above near-quadratic
bound on the number of changes also holds when we count each of
the individual critical events in any such sequence separately.

%\micha{I must say that I dont like ``singular''.}

Finally, we describe an efficient algorithm for maintaining
$\VD(P)$ and $\DT(P)$ during an algebraic
motion of $P$, within the standard KDS framework
(Section~\ref{sec:KDS}).
Here we assume an algebraic model of computation, in which
algebraic computations, including solving a polynomial
equation of constant degree, can be performed in an exact manner,
in constant time.
The precise sense of this assumption is that comparisons between
algebraic quantities that are defined in this manner can be performed
exactly in constant time. This is a standard model used
widely in theory \cite[Section 6.1]{SA95} and nowadays also in practice
(see, e.g., \cite{CGAL}).
This model allows us to perform in constant time the various computations
that are needed by our KDS, the most ubiquitous of which are the calculation
of the failure times of the various certificates being maintained; see
Section~\ref{sec:KDS} for details.

%We conclude in Section~\ref{sec:concl}, with some comments and open problems.
% and
%discuss the aforementioned application to the kinetic maintenance of
% a stable portion of the Euclidean Delaunay diagram.

%%We note that the present study extends and completes the earlier work of
%Chew~\cite{Chew}, which only applies, as presented, to the number of changes in $\DT^Q(P)$ when $Q$ is a triangle or a convex quadrangle.
%The analysis omits many crucial details, does not discuss the changes in $\VD^Q(P)$, and does not describe an algorithm for maintaining $\DT^Q(P), \VD^Q(P)$.
%Chew remarks, without supplying details, that his technique can be
%extended to general convex polygons.
%The present paper fills in the many (and in some cases quite subtle) missing details.

\paragraph{Stable Delaunay edges.}
Our study of Voronoi diagrams under a convex polygonal distance
function, to a large extent, is motivated by the notion of
\emph{stable} Delaunay edges, introduced by the authors in a
companion paper~\cite{Stable}, and defined as follows: Let $pq$ be a
Delaunay edge under the Euclidean norm, and let $\triangle pqr^+$
and $\triangle pqr^-$ be the two Delaunay triangles adjacent to
$pq$. For a fixed parameter $\alpha>0$, $pq$ is called an
\emph{$\alpha$-stable (Euclidean) Delaunay edge} if its opposite
angles in these triangles satisfy $\angle pr^+q + \angle pr^-q \leq
\pi-\alpha$. An equivalent and more useful definition, in terms of
the Voronoi diagram, is that $pq$ is $\alpha$-stable if \textit{the
equal angles at which $p$ and $q$ see their common (Euclidean)
Voronoi edge $e_{pq}$ are at least $\alpha$ each.} It is shown in
\cite{Stable} that if $pq$ is $\alpha$-stable in the Euclidean
Delaunay triangulation, then it also appears, and at least
$\alpha/8$-stable, in the $Q$-Delaunay triangulation $\DT_Q(P)$ for
any  shape $Q$ that is sufficiently close to (in terms of its
Hausdorff distance from) the unit disk. The results in this paper,
along with the aforementioned result, imply that by maintaining
$\DT_Q(P)$, where $Q$ is a regular (convex) $k$-gon, for
$k=\Theta(1/\alpha)$, we can maintain (a superset of) the stable
edges of the Euclidean Delaunay triangulation, as a subgraph of
$\DT_Q(P)$, and that we have to handle only a nearly quadratic
number of topological changes if the motion of the points of $P$ is
algebraic of degree bounded by a constant. See~\cite{Stable} for
details.

%%%%%%%%%%%%%%%%%%%%%%%%%%%%%%%%%%%%%%%%%%%%%%%%%%%%%%%%%%%%%%%%%%%%%%%%%%%%%

\section{The topology of $\VD(P)$}
\label{sec:static}

In this section we state and prove a few geometric and topological
properties of the  $Q$-Voronoi diagram of a set of stationary points when $Q$ is a convex polygon.

\paragraph{Some notations.}
Let $Q$ be a convex $k$-gon with vertices
$v_0,\ldots,v_{k-1}$ in clockwise order, whose interior contains the
origin. For each $0\leq i< k$, let $e_i$ denote the edge
$v_iv_{i+1}$ of $Q$, where index addition is modulo $k$ (so
$v_k=v_0$). We refer to the origin as the \emph{center} of $Q$ and
denote it by $o$. A {\em homothetic placement} (or {\em placement}
for short) $Q'$ of $Q$ is represented by a pair $(p,\lambda)$, with
$p\in\reals^2$ and $\lambda\in\reals^+$, so that $Q' = p + \lambda
Q$; $p$ is the location of the center of $Q'$, and $\lambda$ is the
\emph{scaling factor} of $Q'$ (about its center).  The homothets of
$Q$ thus have three degrees of freedom.

There is an
obvious bijection between the edges (and vertices) of $Q'$ and of $Q$,
so, with a slight abuse of notation, we will not
distinguish between them and use the same notation to refer to an
edge or vertex of $Q$ and to the corresponding edge or vertex of
$Q'$.  For a point $u\in\reals^2$, let $Q[u]$ denote the homothetic
copy of $Q$ centered at $u$ such that its boundary touches the
$d_Q$-nearest neighbor(s) of $u$ in $P$, i.e., $Q[u]$ is represented
by the pair $(u,\lambda)$ where $\lambda=\min_{p\in P} d_Q(u,p)$. In
other words, $Q[u]$ is the largest homothetic copy of $Q$ that is
centered at $u$ whose interior is $P$-empty.

\paragraph{$Q$-general position.}
To simplify the presentation, we assume our point set
$P$ to be in \emph{general position} with respect to the underlying polygon
$Q$. Specifically, this means that
\begin{itemize}
\item[(Q1)] no pair of points of $P$ lie on a line parallel to a boundary edge or a diagonal of $Q$,
\item[(Q2)] no four points of $P$ lie on the boundary of the same homothetic copy $Q'$ of $Q$, and
\item[(Q3)] if some three points in $P$ lie on the boundary of the same homothetic
copy $Q'$ of $Q$, then each of them is incident to a \emph{relatively
open edge} of $\partial Q'$ (and all the three edges are distinct, due
to (Q1)), as opposed to one or more of these points touching a vertex
of $Q'$.
\end{itemize}
The above conditions can be enforced by
an infinitesimally small rotation of $Q$ or of $P$.
%As we will see in the next section, the structure of the $Q$-Delaunay
%and $Q$-Voronoi diagrams can become quite complicated when $P$ is not
%in general position.

\paragraph{Bisectors, corner placements, and edgelets.}
%In what follows we use the superscript $\poly$ to refer to the
%$Q$-Delaunay and $Q$-Voronoi diagrams and to their various features.

The {\em bisector} between two points $p$ and $q$, with respect to
the distance function $d_Q$ induced by $Q$, denoted by
$\bisect_{pq}$ or $\bisect_{qp}$, is the set of all
points $x\in \reals^2$ that satisfy $d_Q(x,p)=d_{Q}(x,q)$.
Equivalently, $\bisect_{pq}$ is the locus of the centers of
all  homothetic placements $Q'$ of $Q$ that touch $p$ and $q$ on
their boundaries; $Q'$ does not have to be $P$-empty, so it
may contain additional points of $P\setminus \{p,q\}$.
If $p$ and $q$ are not parallel to an edge of $Q$ (assumption (Q1)), then
$b_{pq}$ is a one-dimensional polygonal curve, whose structure will
be described in detail momentarily.

A homothetic placement $Q'$ centered along $\bisect_{pq}$ that
touches one of $p$ and $q$, say, $p$, at a vertex, and touches $q$ at
the relative interior of an edge (as must be the case in general
position) is called a \emph{corner placement}
 \emph{at} $p$; see
Figure~\ref{Fig:CornerEdge}~(a). Note that a corner placement at
which a vertex $v_i$ of (a copy $Q'$ of) $Q$ touches $p$ has the
property that the center $o'$ of $Q'$ lies on the fixed ray
emanating from $p$ in direction $\vec{v_io}$.
% Note also that
%condition (i) of general position rules out two simultaneous corner
%contacts.

\begin{figure}[htbp]
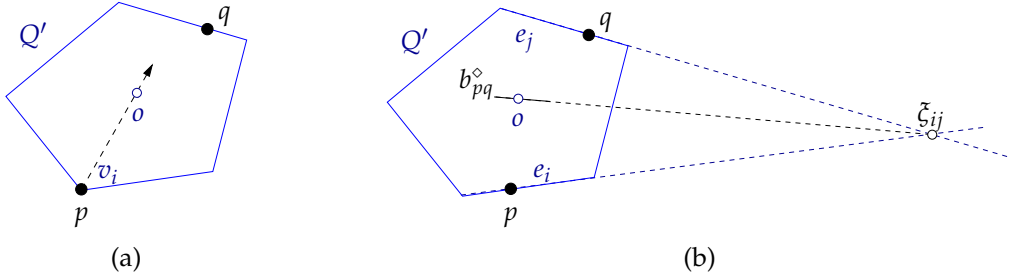

\centering
\begin{tabular}{ccc}
\input{CornerPlacement.pspdftex}& \hspace*{1cm}&
\input{EdgeletStraight.pspdftex}\\
\small (a) &&\small (b)
\end{tabular}
\caption{Possible placements of $Q'$ on $\bisect_{pq}$. (a) A corner placement
at $p$. The center of $Q'$ lies on the fixed ray emanating from $p$
in direction $\vec{v_io}$. (b) A placement of $Q'$ with edge
contacts of $e_i$ and $e_j$ at $p$ and $q$, respectively. (The
centers of) such placements trace an edgelet of $\bisect_{pq}$
with label $(e_i,e_j)$.}
\label{Fig:CornerEdge}
\end{figure}

A non-corner placement $Q'$ centered on $\bisect_{pq}$ can be
classified according to the pair of edges of $Q'$, say, $e_i$ and
$e_j$, that touch $p$ and $q$, respectively. We may assume (by (Q1)) that
$e_i\ne e_j$. Slide $Q'$ so
that its center $o'$ moves along $\bisect_{pq}$ and its size
expands or shrinks to keep it touching $p$ and $q$ at the edges
$e_i$ and $e_j$, respectively. If $e_i$ and $e_j$ are parallel, then
the center $o'$ of $Q'$ traces a line segment in the direction
parallel to $e_i$ and $e_j$; otherwise $o'$ traces a segment in the
direction that connects it to the intersection point $\xi_{ij}$ of
the lines containing (the copies on $\bd Q'$ of) $e_i$ and $e_j$.
See Figure \ref{Fig:CornerEdge}~(b) for the latter scenario. We
refer to such a segment $g$ as an \emph{edgelet} of $b_{pq}$,
and label it by the pair $\psi(g)=(e_i,e_j)$ (or by
$(i,j)$ for brevity). The orientation of the edgelet depends only on
the corresponding edges $e_i,e_j$, and is independent of $p$ and
$q$. The structure of $\bisect_{pq}$ is fully determined by
the following proposition, with a fairly straightforward proof that
is omitted from here.

\begin{lemma}\label{Prop:edgelet}
An edgelet $g$ with the label $\psi(g)=(e_i,e_j)$ appears on
$\bisect_{pq}$ if and only if there is an oriented line parallel to $\vec{pq}$
that crosses $\partial Q$ at (the relative interiors of) $e_i$ and $e_j$,
in this order.
\end{lemma}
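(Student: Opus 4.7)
My plan is to translate both sides of the equivalence into a single algebraic statement about chords of $Q$, after which the if-and-only-if becomes essentially a bookkeeping statement. The starting observation is that a homothetic placement $Q'=o'+\lambda Q$ (with $\lambda>0$) has the $e_i$-copy of $\partial Q'$ passing through $p$ and the $e_j$-copy passing through $q$, in their relative interiors, if and only if there exist points $u$ in the relative interior of $e_i$ and $w$ in the relative interior of $e_j$ (as subsets of $\partial Q$) with
\[
p=o'+\lambda u,\qquad q=o'+\lambda w .
\]
Subtracting these two identities yields $q-p=\lambda(w-u)$ with $\lambda>0$; equivalently, the chord $\overline{uw}$ of $Q$ is parallel to $\vec{pq}$ and is traversed from $u\in e_i$ to $w\in e_j$ in the same sense as $\vec{pq}$.

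For the \emph{if} direction, given an oriented line $\ell$ parallel to $\vec{pq}$ that crosses $\partial Q$ first at a point $u$ in the relative interior of $e_i$ and then at a point $w$ in the relative interior of $e_j$, I would set $\lambda:=\|q-p\|/\|w-u\|>0$ and $o':=p-\lambda u$; the parallelism relation forces $o'=q-\lambda w$ as well, so the placement $Q'=o'+\lambda Q$ witnesses $o'\in\bisect_{pq}$ together with the label $(e_i,e_j)$. To promote this single center to an entire edgelet, I would invoke assumption (Q1): since $\vec{pq}$ is not parallel to any edge of $Q$, the two crossings of $\ell$ with $\partial Q$ are transversal, so every sufficiently small parallel translate of $\ell$ still enters $Q$ through the relative interior of $e_i$ and exits through the relative interior of $e_j$. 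The above construction then produces a continuous one-parameter family of centers, tracing a nondegenerate segment of $\bisect_{pq}$ labeled $(e_i,e_j)$.

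The \emph{only if} direction simply reverses this correspondence: if the edgelet exists, any placement $Q'=o'+\lambda Q$ on it provides points $u:=(p-o')/\lambda\in e_i$ and $w:=(q-o')/\lambda\in e_j$ with $w-u=(q-p)/\lambda$, so the oriented line carrying the chord $\overline{uw}$ is parallel to $\vec{pq}$ and crosses $\partial Q$ through $e_i$ before $e_j$. The only genuinely delicate point in the argument is the bookkeeping of orientations: the positivity of the scaling factor $\lambda$ is precisely what forces the cyclic order ``first $e_i$, then $e_j$'' along $\ell$ and excludes the reverse order, and this is what the phrase ``in this order'' captures. A secondary, minor point is the transversality step showing that a single valid line already yields a full edgelet rather than an isolated bisector point, and this is where assumption (Q1) enters the proof in an essential way.
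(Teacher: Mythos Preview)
Your argument is correct and is exactly the intended one. The paper omits the proof of this lemma, calling it ``fairly straightforward,'' but the sketch it gives for the subsequent Lemma~\ref{lem:edgelet-labels} (sweeping a line parallel to $\vec{pq}$ and mapping each position to the homothetic placement of $Q$ that touches $p$ and $q$ at the images of the two intersection points) is precisely your bijection $u,w\mapsto o'=p-\lambda u=q-\lambda w$ with $\lambda=\|q-p\|/\|w-u\|$; your use of (Q1) to guarantee transversality, and hence a genuine segment rather than a single point, is the right supporting detail.
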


\begin{figure}[htbp]
\centering
\input{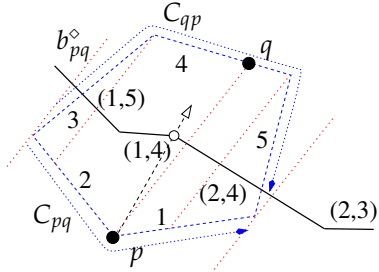}
%\hspace{3cm}\input{TerminalEdgelet.pspdftex}
\caption{The edgelets of $\bisect_{pq}$. The breakpoints of
$\bisect_{pq}$ correspond to corner placements of $Q$. We have
$C_{pq}=\langle 2,1\rangle$ and $C_{qp}=\langle 3,4,5\rangle$.
The terminal edgelets of $\bisect_{pq}$ are the rays with labels $(1,5)$
and $(2,3)$.}
\label{Fig:BisectLabel}
\end{figure}

See Figure \ref{Fig:BisectLabel}  for an illustration.  The
endpoints of edgelets are called the \emph{breakpoints} of
$\bisect_{pq}$. Each breakpoint is the center of a corner
placement of $Q$; If $e_i$ and $e_j$ are adjacent, then the edgelet
labeled $(i,j)$ is a ray and the common endpoint of $e_i,e_j$ is one
of the two vertices of $Q$ extremal in the direction orthogonal to
$\vec{pq}$ (i.e., these vertices have a supporting line parallel to
$\vec{pq}$).

Assuming $Q$-general position of $P$, Lemma
\ref{lem:edgelet-labels} below implies that $\bisect_{pq}$ is
the concatenation of exactly $k-1$ edgelets. Let $C_{pq}$ and
$C_{qp}$ denote the two chains of $\partial Q$, delimited by the
vertices that are extremal in the direction orthogonal to
$\vec{pq}$, such that $p$ lies on an edge of $C_{pq}$ and $q$ on an
edge of $C_{qp}$ at all placements of $Q$ touching $p$ and $q$ and
centered along the bisector $\bisect_{pq}$. We orient both
$C_{pq}$, $C_{qp}$ so that they start (resp., terminate) at the
vertex of $Q$ that is furthest to the left (resp., to the right) of
$\vec{pq}$; see Figure \ref{Fig:BisectLabel}.

Our characterization of $\bisect_{pq}$ is completed by the
following lemma, which follows from Lemma \ref{Prop:edgelet}
and the preceding discussion.
%\micha{As I got it from Pankaj, the left/right sides were reversed.
%This was a place we kept changing. PLEASE triple check that it is now OK.}
%%%%%%%%%%%%%%%%%%%%%%%%%%%%%%%%%%%%%%%
\begin{lemma}\label{lem:edgelet-labels}
Let $(e_{11},e_{21}), \ldots, (e_{1h},e_{2h})$ be the sequence of
labels of the edgelets of $\bisect_{pq}$ in their order along
$\bisect_{pq}$ when we trace it so that $p$ lies on its left
side and $q$ on its right side. Then $e_{11},\ldots,e_{1h}$ appear
(with possible repetitions as consecutive elements) in
this order along $C_{pq}$, and $e_{21},\ldots,e_{2h}$ appear
(again, with possible repetitions) this order
along $C_{qp}$. Furthermore, the following additional properties
also hold:
\begin{itemize}
\item[(a)] All the edges of $C_{pq}$ (resp., $C_{qp}$) appear,
possibly with repetitions, in the first (resp., second) sequence.
\item[(b)] The elements of $C_{qp}$ appear in clockwise order and the
elements of $C_{pq}$ in counterclockwise order along $\bd Q$.
\item[(c)] Assuming general position, the passage from a label
$(e_{1i},e_{2i})$ to the next label $(e_{1(i+1)},e_{2(i+1)})$ is
effected by either replacing  $e_{1i}$ by the following edge on
$C_{pq}$ or by replacing $e_{2i}$ by the following edge on $C_{qp}$.
In the former (resp., latter)  case, the common endpoint of the two edgelets corresponds to the corner placement of $p$ (resp., $q$) at the common vertex of
$e_{1i}$ and $e_{1(i+1)}$ (resp., $e_{2i}$ and $e_{2(i+1)}$).
\end{itemize}
\end{lemma}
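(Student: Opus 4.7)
The plan is to analyze $\bisect_{pq}$ by tracking the homothetic copy $Q'$, centered on $\bisect_{pq}$ and touching both $p$ and $q$, as its center slides along $\bisect_{pq}$. I would use Lemma~\ref{Prop:edgelet} together with the $Q$-general position assumptions (Q1)--(Q3) throughout.

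First, I would argue that every label $(e_i,e_j)$ that appears on $\bisect_{pq}$ satisfies $e_i\in C_{pq}$ and $e_j\in C_{qp}$. Indeed, by Lemma~\ref{Prop:edgelet} the label is realized by an oriented line parallel to $\vec{pq}$ that enters $\partial Q$ at $e_i$ and exits at $e_j$; by the very definition of the two chains, $e_i$ lies on the ``$p$-side'' chain $C_{pq}$ and $e_j$ on the ``$q$-side'' chain $C_{qp}$. Conversely, every edge of $C_{pq}$ admits, by convexity of $Q$, an oriented line parallel to $\vec{pq}$ that meets its relative interior and exits through some edge of $C_{qp}$, so every edge of each chain must appear in some label.

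Second, I would examine the breakpoints of $\bisect_{pq}$. Each breakpoint is, by definition, the center of a corner placement, at which one of $p,q$ lies on a vertex of $Q'$. Under $Q$-general position $p$ and $q$ cannot touch vertices of the \emph{same} homothet simultaneously (such a coincidence can be destroyed by an arbitrarily small perturbation of $Q$, in the same spirit as~(Q1)). Hence at each breakpoint exactly one coordinate of the edgelet label changes, and it changes to an edge of $Q$ adjacent to the old one. This is precisely the ``one-coordinate-at-a-time'' structure asserted in~(c), and it identifies the incident breakpoint with the claimed corner placement at $p$ or $q$.

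Third, I would determine the direction of the advance, which in turn yields (a) and (b). Orient $\bisect_{pq}$ so that $p$ lies on its left and $q$ on its right. The two terminal edgelets are rays whose labels are the pairs of edges of $C_{pq}\times C_{qp}$ incident to the two vertices of $Q$ extremal in the direction orthogonal to $\vec{pq}$, i.e., to the endpoints of $C_{pq}$ and $C_{qp}$ in their fixed orientations. Between these two ends the center of $Q'$ traverses each straight edgelet monotonically, in the direction dictated by the apex $\xi_{ij}$ (or by the common direction of $e_i,e_j$ in the parallel case); a local analysis at each corner placement shows that the only way this motion can continue across the breakpoint is by advancing the contact edge to the \emph{next} edge along $C_{pq}$ (respectively, $C_{qp}$) in the specified orientation, since the opposite advance would force the center of $Q'$ to reverse direction along the incoming or outgoing edgelet. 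Combined with the fixed labels at the two termini this yields~(a); the clockwise/counterclockwise orientations in~(b) then follow from the clockwise labeling of $v_0,\ldots,v_{k-1}$ together with $C_{pq}$ and $C_{qp}$ lying on opposite sides of the line through the origin of $Q$ parallel to $\vec{pq}$.

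The main obstacle is the monotonicity step. A priori, crossing a breakpoint could reverse the current contact edge on either chain, and ruling this out requires a careful comparison between the geometry of the corner placement (which vertex of $Q$ is incident to $p$ or $q$, and where that vertex sits relative to $\vec{pq}$) and the direction in which the center of $Q'$ moves along the two incident edgelets. The key geometric fact is that at a corner placement where some vertex $v_i$ of $Q'$ coincides with $p$, the center of $Q'$ lies on the fixed ray emanating from $p$ in direction $\vec{v_i o}$; this pins down on which side of the breakpoint each incident edgelet extends, and hence forces the advance to be in the declared orientation along the corresponding chain.
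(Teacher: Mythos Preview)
Your proposal is correct but takes a different route from the paper. The paper's proof (only sketched there) sweeps a line $\ell$ parallel to $\vec{pq}$ across $Q$, from one extreme vertex to the other, and records the pair of edges of $\partial Q$ that $\ell$ crosses; each position of $\ell$ is then mapped, via the unique homothety sending the two intersection points to $p$ and $q$, to a placement centered on $\bisect_{pq}$. Because $\ell$ moves monotonically and $Q$ is convex, the two intersection points advance monotonically along $C_{pq}$ and $C_{qp}$, so properties (a)--(c) drop out at once, and the clockwise/counterclockwise orientations in (b) are read off directly from the sweep direction. Your approach instead follows $Q'$ along $\bisect_{pq}$ and argues locally at each breakpoint; this is sound, but as you yourself flag, it makes the monotonicity step the crux, requiring a case analysis of the direction of motion on the two edgelets incident to each corner placement. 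The sweep-line device sidesteps this obstacle entirely: monotonicity is built into the sweep, and the correspondence with $\bisect_{pq}$ merely transports it. Your argument has the virtue of staying in the placement picture throughout, but the paper's trick of working on $Q$ itself is shorter and avoids the delicate local analysis.
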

%%%%%%%%%%%%%%%%%%%%%%%%%%%%%%%%%%%%%%%

The proof of the lemma, whose details are omitted, proceeds by sweeping
a line parallel to $\vec{pq}$, and keeping track of the pairs of edges of $Q$
that are crossed by the line, mapping each position of the line to a homothetic
placement of $Q$ that touches $p$ and $q$ at the images of the two intersection
points.

For $0 \le i < j < k$, let $\theta_{ij}$ be the orientation of the line
passing through the vertices $v_i$ and $v_j$ of $Q$, and let $\Theta$ be the set of these orientations. $\Theta$ partitions the unit circle
$\sphere^1$ into a collection $\I$ of $O(k^2)$ angular intervals (for a regular $k$-gon,
the number of intervals is only $\Theta(k)$).
Lemmas~\ref{Prop:edgelet} and~\ref{lem:edgelet-labels} implies the following corollary:
%%%%%%%%%%%%%%%%%%%%%%%%%%%%%%%%%%%%%%
\begin{cor}
\label{cor:bisect-seq}
The sequence of edgelet labels along $\bisect_{pq}$ is the same for all the
ordered pairs of points $p, q$ such that the orientation of the vector $pq$ lies
in the same interval of $\I$.
\end{cor}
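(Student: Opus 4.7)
The plan is to show that every piece of data controlling the sequence of edgelet labels along $\bisect_{pq}$ depends only on the orientation of $\vec{pq}$, and is invariant as that orientation varies within a single interval of $\I$. Since $\I$ is, by construction, the partition of $\sphere^1$ induced by the vertex-pair orientations $\Theta=\{\theta_{ij}\}$, it suffices to identify all ``critical'' orientations at which the label sequence could change and to check that each of them belongs to $\Theta$.

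First, Lemma~\ref{Prop:edgelet} reduces the presence of an edgelet labeled $(e_i,e_j)$ on $\bisect_{pq}$ to the purely combinatorial condition that some oriented line parallel to $\vec{pq}$ crosses $\partial Q$ through the relative interiors of $e_i$ and then $e_j$. This condition depends only on the orientation of $\vec{pq}$, and it can change only when a line parallel to $\vec{pq}$ passes simultaneously through two vertices of $Q$, i.e., when $\vec{pq}$ becomes parallel to a chord $v_iv_j$ of $Q$. All such transition orientations lie in $\Theta$, so the set of labels appearing on $\bisect_{pq}$ is constant over each interval of $\I$.

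Second, I would verify that the chains $C_{pq}$ and $C_{qp}$ are also fixed on each interval of $\I$. These chains are delimited by the two vertices of $Q$ extremal in the direction orthogonal to $\vec{pq}$, and the identities of those extremal vertices change only when the orthogonal direction aligns with the outer normal of some edge of $Q$, i.e., when $\vec{pq}$ becomes parallel to an edge $e_i=v_iv_{i+1}$ of $Q$, which is an orientation of the form $\theta_{i,i+1}\in\Theta$. Hence, within a single interval of $\I$, both chains and their induced orientations are fixed.

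Finally, Lemma~\ref{lem:edgelet-labels}(a)--(c) reconstructs the label sequence deterministically from $C_{pq}$ and $C_{qp}$: starting from the pair of first edges of the two (oriented) chains, each transition advances along exactly one of the two chains, and the choice at a breakpoint is dictated by which of the two candidate corner placements is encountered first as the center $o'$ slides along $\bisect_{pq}$. The hardest part of the plan will be this last step: I must confirm that the order in which these competing corner placements occur is itself stable throughout each interval of $\I$. This reduces to the observation that two candidate vertices of $Q$ can swap their sweep order along $\bisect_{pq}$ only when $\vec{pq}$ becomes parallel to the chord joining them, which is once again an orientation in $\Theta$. Combining the three observations yields the corollary.
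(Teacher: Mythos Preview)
Your proposal is correct and follows essentially the same route the paper intends: the paper states the corollary as an immediate consequence of Lemmas~\ref{Prop:edgelet} and~\ref{lem:edgelet-labels}, whose omitted proof proceeds by sweeping a line parallel to $\vec{pq}$ across $Q$ and recording the ordered pairs of edges it crosses. Your three steps unpack exactly this idea---the set of labels, the delimiting chains $C_{pq},C_{qp}$, and the order of corner placements all depend only on the projection order of the vertices of $Q$ in the direction orthogonal to $\vec{pq}$, which changes only at orientations in $\Theta$---so your ``hardest'' step~3 is in fact just a rephrasing of that sweep and needs no further work.
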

%%%%%%%%%%%%%%%%%%%%%%%%%%%%%%%%%%%%%%

The following additional property of bisectors is
crucial for understanding the topological structure of $\VD(P)$.

\begin{lemma}
\lemlab{edge-connect}
Let $p, q_1, q_2$ be three distinct points of $P$.
The bisectors $b_{pq_1}$, $b_{pq_2}$ can intersect
at most once, assuming that $p,q_1$ and $q_2$ are in $Q$-general position.
\end{lemma}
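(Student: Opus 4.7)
The plan is to argue by contradiction, by first establishing that each half-region $H_{pq_j} := \{x \in \reals^2 : d_Q(x,p) \le d_Q(x,q_j)\}$, $j=1,2$, is star-shaped with respect to $p$, and then combining this with the edgelet structure of bisectors from Lemma~\ref{lemma:edgelet-labels}. For $x \in H_{pq_j}$ and $y = p + t(x-p)$ with $t \in [0,1]$, positive homogeneity of the Minkowski functional $\|\cdot\|_Q$ gives both $d_Q(y,p) = t\, d_Q(x,p)$ and $d_Q(x,y) = (1-t)\, d_Q(x,p)$, so the triangle inequality $d_Q(x,q_j) \le d_Q(x,y) + d_Q(y,q_j)$ yields $d_Q(y,q_j) \ge d_Q(x,q_j) - (1-t)\,d_Q(x,p) \ge t\, d_Q(x,p) = d_Q(y,p)$, hence $y \in H_{pq_j}$. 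Consequently each open ray from $p$ meets $b_{pq_j} = \partial H_{pq_j}$ in at most one point, and $b_{pq_j}$ is described, over its angular range around $p$, by a well-defined radial function $r_j(\theta) > 0$.

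Assume for contradiction that $b_{pq_1}$ and $b_{pq_2}$ meet at two distinct points $x_1 \ne x_2$; by star-shapedness these correspond to two distinct directions $\theta_1 \ne \theta_2$ at which $\Delta(\theta) := r_1(\theta) - r_2(\theta)$ vanishes, inside the common angular range of the two bisectors. I would then partition this common angular range into maximal sub-intervals on which all three relevant contact edges (those of $p$, $q_1$, and $q_2$, with the evolving homothets whose centers lie on the rays in direction $\theta$ on $b_{pq_1}$ and $b_{pq_2}$ respectively) remain fixed. On each such sub-interval, the fixed edge normals give explicit affine-rational expressions for $r_1(\theta)$ and $r_2(\theta)$, so $\Delta(\theta)$ can vanish at most once within it. The angular breakpoints between consecutive sub-intervals are of two kinds: corner placements of either bisector, and transitions of the contact edge of one of the three points.

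The main obstacle is to upgrade ``at most one zero per sub-interval'' to ``at most one zero overall,'' which is where the structure of Lemma~\ref{lemma:edgelet-labels} is essential: as $\theta$ sweeps the common range, the contact edges on $b_{pq_1}$ walk monotonically along $C_{pq_1}$ and $C_{q_1p}$, and similarly for $b_{pq_2}$, with the transitions prescribed by Lemma~\ref{lemma:edgelet-labels}(c). Using these monotone orderings together with the convexity of $Q$, I would verify at each breakpoint that the sign of $\Delta$ is preserved (intuitively, once one radial function has overtaken the other, the monotone sliding of contact edges prevents a second crossing). Combined with the per-sub-interval bound, this gives that $\Delta$ changes sign at most once along the whole common range, contradicting the assumption of two zeros and hence two intersection points, and thereby proving the lemma.
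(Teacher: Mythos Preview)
Your star-shapedness argument is correct, and the radial parametrization that follows is a reasonable setup. The per-sub-interval claim is also verifiable: with fixed contact edges one finds $r_j(\theta) = A_j/(B_j\cos\theta + C_j\sin\theta)$, so $r_1(\theta)=r_2(\theta)$ reduces to a single linear equation in $(\cos\theta,\sin\theta)$ and has at most one solution in the sub-interval.

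The genuine gap is the gluing across sub-intervals. The sentence ``I would verify at each breakpoint that the sign of $\Delta$ is preserved'' is vacuous as stated (continuity of $\Delta$ already gives that), and the parenthetical ``intuitively, once one radial function has overtaken the other, the monotone sliding of contact edges prevents a second crossing'' is precisely the content of the lemma, not an argument for it. Nothing in the monotone ordering of edgelet labels prevents $\Delta$ from crossing zero once in one sub-interval and once again in a later sub-interval while remaining nonzero at all breakpoints; your outline yields at most one zero \emph{per sub-interval}, hence only an $O(k)$ bound, not $1$. To close the gap you would need a genuine monotonicity or convexity statement about $\Delta$ itself (or about the curves $r_1,r_2$) across edgelet transitions, and you have not supplied one.

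The paper's proof avoids this edgelet-by-edgelet analysis altogether. Two intersection points of $b_{pq_1}$ and $b_{pq_2}$ would yield two homothetic copies $Q_1,Q_2$ of $Q$, each having $p,q_1,q_2$ on its boundary. Homothetic copies of a convex body form a family of pseudo-disks, so $\partial Q_1\cap\partial Q_2$ has at most two connected components, each of which is a point or a segment parallel to an edge of $Q$. By pigeonhole, two of $p,q_1,q_2$ lie in the same component and hence span a direction parallel to an edge of $Q$, contradicting assumption (Q1). This packages all the geometry into the pseudo-disk property and is a one-paragraph argument; your route, even if the gap were filled, would be substantially longer.
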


\begin{proof}
Suppose to the contrary that  $b_{pq_1}$, $b_{pq_2}$ intersect
at two points. Then there exist two homothetic copies $Q_1$ and $Q_2$
of $Q$ such that $p, q_1, q_2 \in \partial Q_1 \cap \partial Q_2$.
However, it is well known
%(and we have already mentioned this property
%twice before)
that homothetic placements of $Q$ behave like pseudo-disks, in the
sense that the portion of the boundary of each of them outside the
other homothetic placement is connected; see, e.g., \cite{KLPS}.
Therefore, $\partial Q_1$ and $\partial Q_2$  intersect in at most
two connected portions, each of which is either a point or  a
segment parallel to some edge of $Q$. Clearly, one of these
connected components of $\partial Q_1 \cap \partial Q_2$  must
contains two out of the three points $p$, $q_1$, and $q_2$, in
contradiction to the fact that the points are in $Q$-general
position.
\end{proof}

The following lemma provides additional details concerning the
structure of the breakpoints of the bisectors in case $Q$ is a regular $k$-gon.

%%%%%%%%%%%%%%%%%%%%%%%%%%%%%%%%%
\begin{lemma} \lemlab{interleave}
Let  $Q$ be a regular $k$-gon, and let $p$ and $q$ be two points in
general position with respect to $Q$. The breakpoints along the
bisector $\bisect_{pq}$ correspond  alternatingly  to corner
placements at $p$ and corner placements at $q$.
\end{lemma}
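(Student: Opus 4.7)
The plan is to translate the claim, via Lemma~\ref{lem:edgelet-labels}(c), into a statement about the order in which the interior vertices of $Q$ are encountered as we sweep a line parallel to $\vec{pq}$ across $Q$, from the extreme vertex $v_{\max}$ to the extreme vertex $v_{\min}$, and then verify alternation via a short computation that exploits the rotational symmetry of a regular $k$-gon.

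First, I would observe that by Lemma~\ref{lem:edgelet-labels}(c), each breakpoint of $\bisect_{pq}$ is produced by a transition that replaces either an edge of $C_{pq}$ (giving a corner placement at $p$) or an edge of $C_{qp}$ (giving a corner placement at $q$), the transition vertex being the common endpoint of the two consecutive edges involved. Parameterizing the placements of $Q$ centered on $\bisect_{pq}$ by a sweep line parallel to $\vec{pq}$ that moves across $Q$ from $v_{\max}$ to $v_{\min}$, the breakpoints appear in the order in which the interior vertices of $C_{pq}\cup C_{qp}$ are crossed by the sweep line, i.e., sorted by their projections onto the direction orthogonal to $\vec{pq}$. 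It thus suffices to show that this sorted sequence alternates between the two chains.

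For the regular $k$-gon, write the vertices as $v_j$ ($j\in\mathbb{Z}_k$) with angular positions $\alpha_j=\alpha_0+2\pi j/k\pmod{2\pi}$ reduced to $(-\pi,\pi]$, measured from the sweep direction, so that $v_0=v_{\max}$ minimizes $|\alpha_j|$. Then $\alpha_0\in(-\pi/k,\pi/k)$, and by $Q$-general position $\alpha_0\neq 0$; assume WLOG $\alpha_0\in(0,\pi/k)$. A direct calculation gives, for every $m\ge 1$,
$$
|\alpha_{-m}|=\tfrac{2\pi m}{k}-\alpha_0\;<\;\tfrac{2\pi m}{k}+\alpha_0=|\alpha_m|\;<\;\tfrac{2\pi(m+1)}{k}-\alpha_0=|\alpha_{-(m+1)}|,
$$
the second inequality using $\alpha_0<\pi/k$. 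Sorting by $|\alpha_j|$ therefore yields the interleaved order $v_0,v_{-1},v_1,v_{-2},v_2,\ldots,v_{\min}$, in which the signs of $\alpha_j$ strictly alternate. Since the interior vertices with $\alpha_j<0$ form one of $C_{pq},C_{qp}$ and those with $\alpha_j>0$ form the other, this yields the claimed alternation of corner placements at $p$ and at $q$.

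The main potential hazard is essentially bookkeeping: handling $v_{\max},v_{\min}$ as shared chain endpoints rather than breakpoints, the slight asymmetry in chain lengths when $k$ is odd, and matching signs of $\alpha_j$ to the two chains $C_{pq},C_{qp}$. None of these introduce real difficulty. Regularity of $Q$ enters only through the uniform angular spacing $2\pi/k$ between consecutive vertices and the bound $|\alpha_0|<\pi/k$ on $v_{\max}$; without both of these the alternation can fail for a general convex $k$-gon.
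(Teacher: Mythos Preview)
Your proof is correct but takes a genuinely different route from the paper's.

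The paper argues by contradiction: if two consecutive breakpoints were both corner placements at $p$, then by Lemma~\ref{lem:edgelet-labels} they would occur at adjacent vertices $v_0,v_1$ of $Q$ while $q$ remains on a single edge $e$; this forces the projections of $v_0,v_1$ in direction $\vec{pq}$ to lie strictly inside the projection of $e$. The paper then observes that in a regular $k$-gon the convex hull of $e_0=v_0v_1$ and $e$ is an isosceles trapezoid, and uses this symmetry to rule out $e$ crossing both lines of the strip through $v_0,v_1$ parallel to $\vec{pq}$.

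Your argument is a direct computation: you parametrize the vertices by angles $\alpha_j=\alpha_0+2\pi j/k$ measured from the direction orthogonal to $\vec{pq}$, and show by the chain of inequalities $|\alpha_{-m}|<|\alpha_m|<|\alpha_{-(m+1)}|$ (which uses exactly the bound $0<\alpha_0<\pi/k$ coming from regularity and $Q$-general position) that the vertices, sorted by height, strictly alternate between the two chains. This is arguably more elementary---it is just arithmetic with angles---and it yields more: you recover the entire order $v_{-1},v_1,v_{-2},v_2,\ldots$ of the breakpoints, not merely the alternation. The paper's trapezoid argument is shorter and more geometric, and makes the obstruction for two consecutive same-side corners visually clear, but it does not produce the explicit ordering. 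Your remark that regularity enters only through the uniform spacing $2\pi/k$ and the bound $|\alpha_0|<\pi/k$ is a nice sharpening of where the hypothesis is used.
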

%%%%%%%%%%%%%%%%%%%%%%%%%%%%%%%%%

\begin{figure}[htbp]
\centering
\input{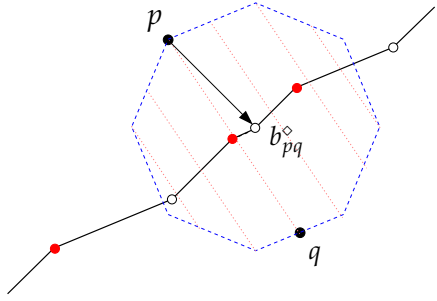}
\caption{The bisector $\bisect_{pq}$ for a regular octagon
$Q$; it has seven edgelets and the centers of the corner placements
along $\bisect_{pq}$ alternate between $p$ (hollow circles) and $q$
(filled circles).}
\label{Fig:bisect}
\end{figure}

\begin{proof}
Refer to Figure~\ref{Fig:bisect}.
Suppose that two consecutive breakpoints of $\bisect_{pq}$
correspond to corner placements at $p$. From
Lemmas~\ref{Prop:edgelet} and~\ref{lem:edgelet-labels}, we
obtain that these corner placements are formed by two adjacent
vertices, say $v_0$ and $v_1$ of $Q$, and $q$ lies in the relative
interior of (the homothetic copies of) the same edge $e$ of $Q$ at
these placements. This implies that the projections of $v_0$ and
$v_1$ in direction $\vec{pq}$ lie in the interior of the projection
of the edge $e$ of $Q$ in direction $\vec{pq}$, which is impossible
if $Q$ is a regular $k$-gon. Indeed, the convex hull of $e_0=v_0v_1$
and of $e$ is an isosceles trapezoid $\tau$, which
implies that, for any other strip $\sigma$ bounded by two parallel
lines through $v_0$ and $v_1$, $e$ cannot cross both boundary lines
of $\sigma$. We note that if $\sigma$ is the strip
spanned by $\tau$, then $e$ touches both lines bounding $\sigma$ but
does not cross any of them. This completes the proof of the lemma.
\end{proof}

\paragraph{Voronoi cells, edges, and vertices.}
Each bisector $\bisect_{pq}$ partitions the plane into open
regions $H_{pq}=\{x\mid d_Q(p,x)<d_Q(q,x)\}$ and
$H_{qp}=\{x\mid d_Q(q,x)<d_Q(p,x)\}$. Hence, for each point
$p\in P$, its $Q$-Voronoi cell $\Vor(p)$ can be described as
$\bigcap_{q\in P\setminus \{p\}} H_{pq}$.

By $Q$-general position of $P$, for any $p \in P$,
$\bd\Vor(p)$ is composed of \emph{$Q$-Voronoi edges},
where each such edge is a maximal connected portion of the bisector
$\bisect_{pq}$, for some other point $q\in P$, that lies within
$\bd \Vor(p)\cap \bd \Vor(q)$.
The portion of $\bisect_{pq}$ within this common boundary
can be described as
$$
\bisect_{pq}\cap \bigcap_{r\neq p,q}H_{pr} =
\bisect_{pq}\cap \bigcap_{r\neq p,q}H_{qr}.
$$
That is, this portion is the locus of all centers $x$ of placements
 of $Q$ for which the equal distances
$d_Q(x,p)=d_Q(x,q)$ are the smallest among the distances from $x$ to
the points of $P$. Note that the homothetic copy $x+d_Q(x,p) Q$ of $Q$ touches $p$ and $q$ and is $P$-empty.

Since $P$ is in $Q$-general position,
\lemref{edge-connect} guarantees that this portion of $\bisect_{pq}$ is either
connected or empty. Therefore, any bisector $\bisect_{pq}$
contains \emph{at most one} $Q$-Voronoi edge, which we denote by $e_{pq}$.
This edge is called a \emph{corner edge} if it contains
a breakpoint (i.e., a center of a corner placement); otherwise it
is a \emph{non-corner edge}---a line segment.

The endpoints of $Q$-Voronoi edges $e_{pq}$ are called
\emph{$Q$-Voronoi vertices}. By the $Q$-general position of $P$,
each such vertex is incident on exactly three
Voronoi cells $\Vor(p)$, $\Vor(q)$, and $\Vor(r)$.
This vertex, denoted by $\nu_{pqr}$, can be described as the center of the unique
homothetic $P$-empty placement $Q'=Q[\nu_{pqr}]$ of $Q$, whose boundary contains only
the three points $p$, $q$, and $r$ of $P$.
>From the Delaunay point
of view, $\DT(P)$ contains the triangle $\triangle pqr$.
%We denote
%the triangle in $\DT(P)$ corresponding to a vertex $c$ of $\VD(P)$ as
%$\triangle_c$.

We say that $g$ is an \emph{edgelet of} $e_{pq}$ if (i) $g$ is
an edgelet of $\bisect_{pq}$, and (ii) the Voronoi edge
$e_{pq}$ either contains or, at least, overlaps $g$.
We refer to an edgelet $g$ of $e_{pq}$ as \emph{external}
if it contains one of the endpoints of $e_{pq}$, namely,
a vertex of $\VD(P)$, and as \emph{internal} otherwise.
In general position, an external edgelet of $e_{pq}$ is always
properly contained in an edgelet of $\bisect_{pq}$.

We conclude this section by making the following remarks. If assumption (Q3)
does not hold, then a Voronoi vertex may coincide with a breakpoint
of an edge adjacent to it; if (Q2) does not hold then a Voronoi
vertex may have degree larger than three; if the segment $pq$ connecting a pair
$p,q \in P$ is parallel to a diagonal of $Q$, then an edgelet of a Voronoi edge may
degenerate to a single point; and if such a segment $pq$ is parallel to an edge of
$Q$ then $\bisect_{pq}$ may be a two-dimensional region
(Figure~\ref{Fig:DegenBisect} middle).  These degenerate configurations are
discussed in detail in the next section.

%Note that each $Q$-Voronoi edge $e_{pq}$ is
%the locus of the centers of placements of $Q$ along
%$\bisect_{pq}$ for which the corresponding homothetic copy
%$Q'$ that touches $p$ and $q$ is $P$-empty.
 %As argued in \cite{CD}, any pair of edges in
%$\DT(P)$ have disjoint relative interiors (assuming general
%position of $P$ with respect to $Q$).

%%%%%%%%%%%%%%%%%%%%%%%%%%%%%%%%%%%%%%%%%%%%%%%%%%%%%%%%%%%%%%%%%%%%%%%%%%%%%%%%%%%%

\section{Kinetic Voronoi and Delaunay diagrams}
\label{sec:kinetic}

As the points of $P$ move along continuous trajectories, $\VD(P)$ also changes continuously,
namely, vertices of $\VD(P)$ and breakpoints of edgelets trace continuous trajectories, but,
unless the motion is very degenerate, the topological structure of $\VD(P)$ changes only
at discrete times, at which an edgelet in a Voronoi edge appears/disappears, a Voronoi vertex
moves from one edgelet to another, or two adjacent Voronoi cells cease to be adjacent or
vice versa (equivalently, an edge appears or disappears in $\DT(P)$), because of a
$Q$-cocircularity of four points of $P$.
In this section we discuss when do these changes occur and how does
$\VD(P)$ change at such instances. To simplify the presentation, (i) we
assume that the orientations of the edges and diagonals $v_iv_j$,
for all pairs of vertices $v_i, v_j$ of $Q$,
are distinct, and that they are different from those of $ov_i$, for any vertex
$v_i$; (ii) we make certain general-position assumptions on the trajectories of
$P$; and (iii) we augment $P$ with some points at infinity.
At the end of the section, we remark what happens if we do not make
these assumptions or do not augment $P$ in this manner.

\paragraph{Augmenting $P$.}
We add points to $P$ so that the convex
hull of the augmented set does not change as the (original) points
move, and the boundary of $\DT(P)$ is this stationary convex hull at all times.
Specifically, for each vertex $v_i$ of $Q$, we add a corresponding
point $q_i$ at infinity, so that $q_i$ lies in the direction
$\vec{v_io}$. Let $P_\infty$ denote the set of these $k$ new points.
We maintain $\VD(P\cup P_\infty)$ and $\DT(P\cup
P_\infty)$. It can be checked that $\DT(P\cup P_\infty)$
contains all edges of $\DT(P)$, some ``unbounded'' edges
(connecting points of $P$ to points of $P_\infty$), and $k$ edges at
infinity (forming the convex hull of $P\cup P_\infty$). Furthermore,
every edge of $\DT(P\cup P_\infty)$ incident on at least one
point of $P$ is adjacent to two triangles; only the edges at
infinity are ``boundary'' edges of the triangulation. During the
motion of the points of $P$, the points of $P_\infty$ remain
stationary.

Let $\triangle$ be a triangle of $\DT(P\cup P_\infty)$. There
is a ($P\cup P_\infty$)-empty homothetic copy
$Q_\triangle$ of $Q$ associated with $\triangle$, whose boundary
touches the three vertices of $\triangle$. If two vertices of
$\triangle$ belong to $P$ and one vertex of $\triangle$ is a point $q_i$
at infinity, then $Q_\triangle$ is a wedge formed by the two
corresponding consecutive edges $e_{i-1}$ and $e_i$ of $Q$, each
touching a vertex of $\triangle$ not in $P_\infty$ (e.g., $C_1$ in
Figure~\ref{Fig:extended-DT}).  If only one vertex of $\triangle$,
say $p$, belongs to $P$, then $\triangle$ is of the form $\triangle
pq_iq_{i+1}$ (for some $0\leq i< k$), and there are arbitrarily
large empty homothetic copies of $Q$ incident on $p$ at the edge
$e_i=v_iv_{i+1}$ (e.g., $C_2$ in Figure~\ref{Fig:extended-DT}). The
number of triangles of the latter kind is only $k$, one for each
edge of $Q$. Abusing the notation slightly, we will use $P$ to
denote $P\cup P_\infty$ from now on.

\begin{figure}[htbp]
\centering
\input{extended-DT.pspdftex}
%\input{_t}
%\vspace{1in}
\caption{The extended $\DT(P)$ under the $L_\infty$-metric
(where $Q$ is an axis-parallel square)
for the set of points in Figure~\protect\ref{fig:VDT}, with four points
$q_1, \ldots, q_4$ added at infinity in directions $(\pm 1, \pm 1)$.
Each of the four shaded triangles has two vertices at infinity, and
the unbounded half-strips between them represent triangles with one
vertex at infinity.  The empty wedge $C_1$ corresponds to
$\triangle 12q_1$ (the half-strip right above the left shaded triangle),
and the arbitrarily large empty square $C_2$ (a halfplane in the limit)
corresponds to $\triangle 9q_2q_3$ (the right shaded triangle).}
\label{Fig:extended-DT}
\end{figure}

\paragraph{$Q$-general position for trajectories.}
We assume that the trajectories of the points of $P$ are in
\emph{$Q$-general position}, which we define below. Informally, if
the motion of each point of $P$ is algebraic of bounded degree, as we assume,
then the time instances at which degenerate configurations occur, namely,
configurations violating one of the assumptions (Q1)--(Q3), can be represented
as the roots of certain constant-degree polynomials in $t$. The present
``kinetic'' general-position assumption for the trajectories says that none
of these polynomials is identically zero (so each
of them has $O(1)$ roots), that each root has multiplicity one (so the sign of
the polynomial changes in the neighborhood of each root), and that the roots
of all polynomials are distinct.
We now spell out these conditions in more detail and make them geometrically concrete.

\medskip
\noindent\textbf{(T1)}
For any pair of points $p,q\in P$, $p(t) \ne q(t)$ for all $t$,
namely, $p$ and $q$ do not collide during the motion.

\medskip
\noindent\textbf{(T2)}
For any pair of points $p,q\in P$, there exist at most
$O(1)$ times when the segment $pq$ is parallel to any given edge or
diagonal $v_iv_j$ of $Q$, and at each of these times $p$ {\it properly}
crosses the line through $q$ parallel to $v_iv_j$, which moves continuously,
together with $q$ (and $q$ does the same for the parallel line through $p$).

\medskip
\noindent\textbf{(T3)}
For any ordered set of three points $p,q,r\in P$ and
for any vertex $v_p$ and a pair of edges $e_q$, $e_r$ of $Q$, there
exist at most $O(1)$ times when $q$ touches $e_q$ and $r$ touches $e_r$ at
a corner placement $Q'$ of $Q$ at $p$ in which $p$ touches $v_p$.
Furthermore, given that $e_q$ is not adjacent to $v_p$, at each such time the point $r$ either enters or leaves the interior 
of the unique $P$-empty homothetic copy of $Q$ that touches $p$ at $v_p$ and $q$ at $e_q$.

%Let $Q_{pq}(t)$ (resp. $Q_{pr}(t)$) be the homothetic placement of
%$Q$ with $p$ touching $v$ and $q$ (resp., $r$) touching the edge
%$e_q$ (resp. $e_r$), and let $t^-, t^+$ be the time instances
%immediately before and after an event of the above type. If $r\in Q_{pq}(t^-)$,
%then $r \not\in Q_{pq}(t^+)$, $q \not\in Q_{pr}(t^-)$, and $q\in Q_{pr}(t^+)$.

\medskip

\noindent\textbf{(T4)}
%Any four points $p,q,a,b\in P$ can become
%$Q$-cocircular (i.e., lie on the boundary of the same homothetic
%copy of $Q$) for at most $O(k^4)$ times. More precisely,
For any four points $p,q,a,b\in P$ and  any ordered quadruple $e_p$,
$e_q$, $e_a$, $e_b$ of  edges of $Q$, such that at least three of
these edges are distinct, there are only $O(1)$ times at which there
exists a placement $Q'$ of $Q$ such that $p,q,a,b$ touch the respective
relative interiors of  $e_p$, $e_q$, $e_a$, $e_b$. We say that
$p,q,a,b$ are \emph{$Q$-cocircular} at these $O(1)$ times.
At any such $Q$-cocircularity, the four points $p,q,a,b$
are partitioned into two pairs, say, $(p,q)$ and $(a,b)$, so that
right before the cocircularity there exists a homothetic copy of $Q$ that
is disjoint from $a$ and $b$ and whose boundary touches $p$ and $q$,
and right afterwards there exists a homothetic copy of $Q$ that is
disjoint from $p$ and $q$ and whose boundary touches $a$ and $b$.
\medskip

\noindent\textbf{(T5)}
Events of type (T2)--(T4) do not occur simultaneously, except when
two points $p$ and $q$ become parallel to an edge of $Q$.
In this case there could be many events of type (T3) and
(T4) that occur simultaneously, each of which involves $p, q$;
see below for more details.

\paragraph{Events.}
Since the motion of $P$ is continuous, the topological changes in $\VD(P)$ occur
only when some points of $P$ are involved in a degenerate configuration, i.e.,
they violate one of the assumptions (Q1)--(Q3). However not every degenerate
configuration causes a change in $\VD(P)$. We define an \emph{event} to be the
occurrence of a $P$-empty placement of a homothetic copy of $Q$ whose boundary
contains two, three, or four points of $P$ that are in a $Q$-degenerate configuration.
The center of such a placement lies on an edge or at a vertex of $\VD(P)$.
The subset of points involved in the degenerate configuration is referred to as
the subset \emph{involved in the event}.
The event is called a \emph{bisector, corner}, or \emph{flip} event
if assumption (Q1), (Q2), or (Q3), respectively, is violated.

An event is called \emph{\degenerate} if some pair among the (constantly
many) points involved in the event span a line parallel to an edge of
$Q$. Otherwise, we say that the event is \emph{\generic}.
The $Q$-general-position assumption on the trajectories of the points of $P$ implies,
in particular, that (i) no \generic event can occur simultaneously with any other event, and
(ii) all \degenerate events that occur at a given time, must involve the
\emph{same} pair of points $p,q$ that span a line parallel to an edge of $Q$.

The changes in $\VD(P)$ are simple and local at a \generic event, but $\VD(P)$ can undergo
a major change at a \degenerate event. We therefore first discuss the changes at
a \generic event and then discuss \degenerate events.

\begin{figure}[htb]
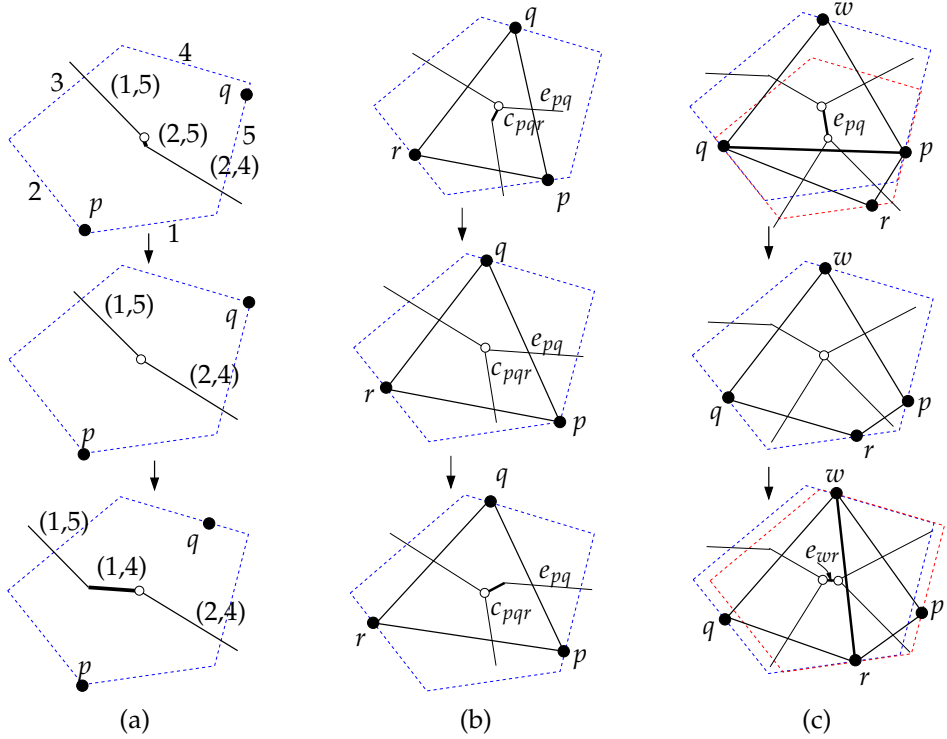

\centering
\begin{tabular}{ccccc}
\input{bisector-event.pspdftex}&\hspace*{3mm}&
\input{corner-event.pspdftex}&\hspace*{3mm}&
\input{generic-flip.pspdftex}\\
\small (a)&& \small (b)&& \small (c)
\end{tabular}
\caption{Different types of \generic events for a pentagon $Q$:
(a) a bisector event, (b) a corner event, and (c) a flip event.
Thick segments denote an appropriate portion of $\DT(P)$ and
the elements of $\VD(P)$ that change at the event.}
\label{Fig:events}
\end{figure}

\subsection{\Generic events}
\label{Subsec:Generic}

Recall that the orientation of $pq$, for every pair $p,q \in P$,  at a \generic event is different from that of any edge of $Q$, which implies that no two
points of $P$ lie on the same edge of  a homothetic copy of $Q$ at a \generic event.

\paragraph{Bisector event.}
A pair of points $p, q \in P$ are incident to the vertices
 $v_i$ and $v_j$, respectively, of a $P$-empty homothetic copy of $Q$ so that the
vertices $v_i$ and $v_j$ are \emph{not consecutive} along $\bd Q$.
In particular, $e_{pq}$ is an edge in $\VD(P)$.

Recall that in our notation, $v_i$ is adjacent to the consecutive
edges $e_{i-1},e_i$, and $v_j$ is adjacent to the consecutive edges $e_{j-1},e_{j}$; in
the present scenario, these four edges are all distinct. Without
loss of generality, we may assume that before the event,
there is an oriented line parallel to $\vec{pq}$ that intersects
$e_{i-1}$ and $e_{j-1}$ in this order, and there is no such line after the event.
Similarly, after the event there is an oriented line parallel to $\vec{pq}$ that intersects
$e_{i}$ and $e_{j}$ in this order, and there is no such line before the event.
 Hence, Lemma \ref{Prop:edgelet}
and assumption (T2) imply that $e_{pq}$ loses a bounded edgelet with label
$(i-1,j-1)$, which is replaced by a new bounded edgelet with label $(i,j)$.
Our assumption (T5) implies
$(i-1,j-1)$ cannot be an external
edgelet of $e_{pq}$. Hence, $(i-1,j-1)$ appears shortly before the event as an internal edgelet of
$e_{pq}$, shrinks to a point and is replaced by
the new internal edgelet $(i,j)$; see Figure~\ref{Fig:events}~(a).
This is the only topological change in $\VD(P)$ at this event.

Notice that whenever the direction of $\vec{pq}$ coincides with that of a diagonal $v_iv_j$ of $Q$, $p$ and $q$ are incident to the vertices $v_i$ and $v_j$, respectively, of a unique copy of $Q$.
If this copy contains further points of $P$, then the bisector $\bisect_{pq}$ still loses an edgelet
$(i-1,j-1)$, which is replaced by a new edgelet $(i,j)$. However, both of these edgelets now belong to the portion of $\bisect_{pq}$ outside $e_{pq}$, so the discrete structure of $\VD(P)$ does not change (and, therefore, no bisector event is recorded).

%Notice that $Q$ can have several mutually parallel diagonals, in which
%case several interior edgelets (up to $\Theta(k)$ in the worst case)
%of $\bisect_{pq}$ are simultaneously
%replaced by new ones when $pq$ becomes parallel to all of them.
%Nevertheless, in the combinatorial analysis given in
%Section~\ref{sec:count} we will count each of them separately.

\paragraph{Corner event.}
A corner event occurs when there is a $P$-empty homothetic copy
$Q'=Q[\nu_{pqr}]$ of $Q$ with a corner placement of a vertex of $v\in Q$ at
$p$ and two other points $q$ and $r$ lie on two
distinct edges of $Q$, none of which is incident to $v$.
We refer to such an event as a \emph{\generic corner event of $p$}.

This event corresponds to a vertex
$\nu_{pqr}$ of $\VD(P)$, an endpoint of an edge $e_{pq}$,
coinciding with a breakpoint of $\bisect_{pq}$.
Then $\nu_{pqr}$, also an endpoint of the Voronoi edge
$e_{pr}$, coincides with a breakpoint of $\bisect_{pr}$
as well.  By assumption (T3), one of the $Q$-Voronoi edges $e_{pq}$ and $e_{pr}$ gains
a new edgelet and the other loses an edgelet at this event;
see Figure~\ref{Fig:events}~(b).

\paragraph{Flip event.}
A flip event occurs when there is a  $P$-empty homothetic copy $Q'$ of
$Q$ that touches four points $p, p', q, q'$ at four distinct edges of $Q'$,
in this circular order along $\partial Q'$.
By assumption (T4), up to a cyclic relabeling of the points, the
Voronoi edge $e_{pq}$ \emph{flips} to a new Voronoi edge
$e_{p'q'}$ at this event; see Figure~\ref{Fig:events}~(c).
%This occurs when the four points $p,q,p',q'$, vertices of two
%adjacent triangles $\triangle pqp'$ and $\triangle pqq'$ of
%$\DT(P)$ (sharing the edge $pq$),  become $Q$-cocircular.
Note that $e_{pq}$ (resp., $e_{p'q'}$) is
a non-corner edge immediately before (resp., after) the flip event, as
both the vanishing edge $e_{pq}$ and the newly emerging edge
$e_{p'q'}$ are ``too short'' to have breakpoints near the event
(this is a consequence of the kinetic $Q$-general position
assumptions).

\medskip

This completes the description of the changes in $\VD(P)$ at a \generic event.
We remark that a Voronoi edge newly appears or disappears only at a flip event, so, by
definition, $\DT(P)$ changes only at a flip event.  Suppose the Voronoi edge
$e_{pq}$ flips to the edge $e_{p'q'}$ at a flip event.  Then $p,q,p',q'$ are
vertices of two adjacent triangles $\triangle pqp'$ and $\triangle pqq'$ immediately
before the event, the edge $pq$ of $\DT(P)$ flips to $p'q'$ at the event, and the
Delaunay triangles $\triangle pqp', \triangle pqq'$ flip to
$\triangle p'q'p, \triangle p'q'q$ at the event; again, see Figure~\ref{Fig:events}~(c).

\begin{figure}[htbp]
\centering
\input{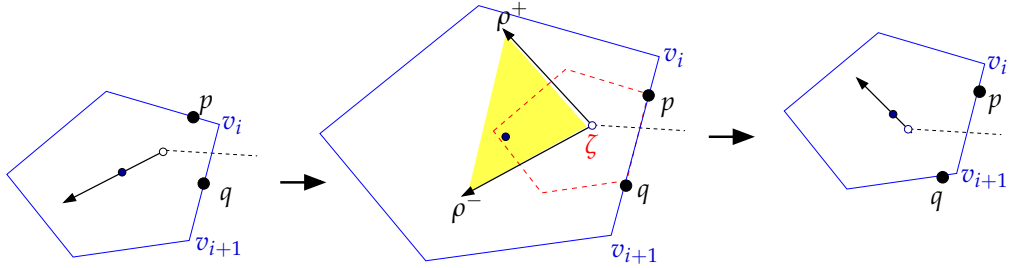}
\caption{A \degenerate bisector event. The ray $\rho^-$ of $\bisect_{pq}$
 is replaced instantly by the ray $\rho^+$,
and the entire shaded wedge is part of $\bisect_{pq}$ at the
event itself.} \label{Fig:DegenBisect}
\end{figure}

\subsection{\Degenerate events}\label{Subsec:Degenerate}

Recall that a \degenerate event occurs, at time $t_0$, if
two points $p,q\in P$ lie on an edge $e_i=v_iv_{i+1}$ of a $P$-empty
homothetic copy of $Q$. Hence, a {\it \degenerate bisector event}
(involving $p$ and $q$) occurs at $t_0$.
We may assume that neither $p$ nor $q$ is
in $P_\infty$ since in such a case the
orientation of $pq$ remains fixed throughout the motion (namely, it is
$\vec{ov_i}$) and, as we have
assumed, different from the orientations of the edges of $Q$.

\paragraph{Changes in bisectors.}
Assume that $pq$ becomes parallel to the edge $e_i$, and, without loss of generality,
assume that $\vec{pq}$ and $\vec{v_iv_{i+1}}$ have the same orientation, as in
Figure~\ref{Fig:DegenBisect} (center). When this occurs,
the set of placements $Q'$ of $Q$ at which both $p$
and $q$ touch $e_i$ is a wedge $W_0$ whose boundary rays $\rho^-$ and
$\rho^+$ have respective directions $\vec{v_io}$ and
$\vec{v_{i+1}o}$, and whose apex $\zeta$ corresponds to the
placement at which $p$ and $q$ touch $v_i$ and $v_{i+1}$
respectively; see Figure~\ref{Fig:DegenBisect}~(center).
% The roles of $p$ as the point that touches $v_{i}$ and of $q$ as the point
% that touches $v_{i+1}$ are made for specificity, and without loss of generality.

Let $t_0^-$ (resp., $t_0^+$) denote an instance of time immediately
before (resp., after) $t_0$, so that no event occurs in the interval
$[t_0^-,t_0)$ (resp., $(t_0,t_0^+]$). Then the terminal ray of $\bisect_{pq}$
that becomes the wedge $W_0$ at time $t_0$ is either in direction $\vec{v_io}$
or $\vec{v_{i+1}o}$ at time $t_0^-$. Without loss of generality, throughout the
present discussion of the \degenerate event, we assume that this ray is in the direction
$\vec{v_io}$, i.e., it consists of all  placements with $e_i$ touching
$q$ and $e_{i-1}$, the other edge adjacent to $v_i$, touching
$p$. This ray is parallel to $\rho^-$ and approaches $\rho^-$  as
time approaches $t_0$; see Figure~\ref{Fig:DegenBisect}~(left).
By assumption (T2), the bisector $\bisect_{pq}$
at time $t_0^+$ contains a terminal ray parallel to $\rho^+$,
which consists of all
placements with $e_i$ touching $p$, and with the other edge
$e_{i+1}$ adjacent to $v_{i+1}$ touching $q$. At time $t_0$ this ray
coincides with $\rho^+$, which is clearly different from $\rho^-$.
See Figure~\ref{Fig:DegenBisect}~(right). That is, the terminal ray
of $\bisect_{pq}$ \emph{instantly} switches from $\rho^-$ to
$\rho^+$ at time $t_0$.

% \micha{The next sentence makes no sense. Surely it is wrong when $r$ is too near $pq$. Delete?}
% Finally, we remark that for any point $r \ne p,q$, $\bisect_{pr}$ and
% $\bisect_{qr}$ overlap inside $W_0$, i.e., $\bisect_{pr}\cap
% W_0=\bisect_{qr}\cap W_0$.

%If $Q$ contains another edge $e_j$ which is parallel to $e_i$ (as would
%be the case when $Q$ is a regular $k$-gon and $k$ is even),
%then the opposite terminal ray of $\bisect_{pq}$ undergoes a similar
%metamorphosis at the same time. Note, though, that the interior (bounded)
%edgelets of $\bisect_{pq}$ are not affected by this process, although,
%in special cases (such as that of a regular $k$-gon) some bounded edgelets
%of $\bisect_{pq}$ might also be affected. Nevertheless, the
%changes that they undergo are independent of the changes that occur at
%the unbounded edgelets, and are not considered as part of the degenerate
%event under consideration.
%Below we only describe the discrete changes that occur within
%$\VD(P)$ and $\DT(P)$ as the terminal ray switches from $\rho^-$
%to $\rho^+$; the opposite ray can be treated in a fully symmetric manner.

%\paragraph{Degenerate bisector event.}
%For a degenerate bisector event to happen (at time $t_0$)
%$e_{pq}$ should overlap with a ray parallel to and approaching
%$\rho^-$ as time moves from $t_0^-$ to  $t_0$ and it  should overlap
%with a ray parallel to and approaching $\rho^+$ as time move
%backwards from $t_0^+$ to $t_0$. At time $t_0$, $e_{pq}$
%contains part of the wedge bounded by $\rho^-$ and $\rho^+$.

\paragraph{Changes in $\VD(P)$.}
All topological changes in $\VD(P)$ at the
time $t_0$ of a \degenerate event occur on the boundaries of
the $Q$-Voronoi cells $\Vor(p)$ and $\Vor(q)$. Since $p$
and $q$ are not in $P_\infty$, both of these cells are
\emph{bounded}.

\begin{figure}[htb]
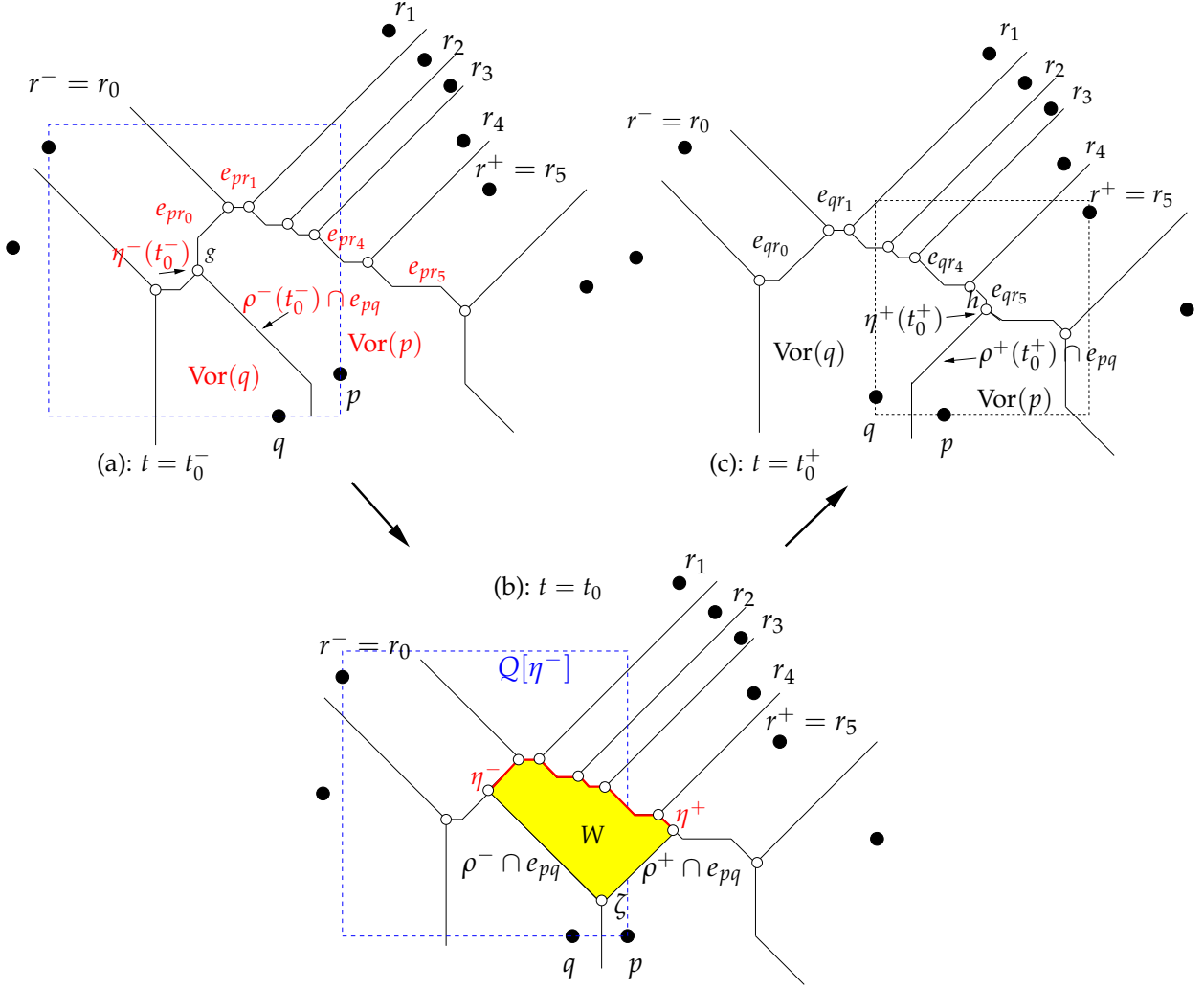

\centering
\input{degenerateVorBefore.pspdftex} \hspace{0.3cm}
\input{degenerateVorAfter.pspdftex}\\
\input{arrows.pspdftex}\\
\input{degenerateVort0.pspdftex}
\caption{Changes in $\VD(P)$ at the time $t_0$ of a \degenerate
bisector event. (a): $\VD(P(t_0^-))$; (b):$\VD(P(t_0))$:
the terminal ray of $\bisect_{pq}$ instantly switches
from $\rho^-$ to $\rho^+$ and the cell $\Vor(p)$ loses the
entire region $W$ to $\Vor(q)$; (c): $\VD(P(t_0^+))$.}
\label{Fig:Change-degenerate-VD}
\end{figure}

Refer to the state of $\VD(P)$ at times $t_0^-$, $t_0$, and
$t_0^+$, as illustrated in Figure \ref{Fig:Change-degenerate-VD}.
For $t\in [t_0^-,t_0)$, let $\rho^-(t)$ be the edgelet of
the bisector $b_{pq}$ that is parallel to $\rho^-$ at time $t$, and
let $\eta^-(t)$ be the
Voronoi vertex which is incident to $e_{pq}\cap \rho^-(t)$ and
to some other pair of edges $e_{pr^-}$ and $e_{qr^-}$.
Similarly, for $t\in (t_0,t_0^+]$, let $\rho^+(t)$ be the edgelet
of the bisector $b_{pq}$ that is parallel to $\rho^+$ at time $t$,
and let $\eta^+(t)$ be the Voronoi vertex which is incident to
$e_{pq}\cap \rho^+(t)$ and to some other pair of edges
$e_{pr^+}$ and $e_{qr^+}$.

Let $\eta^-$ (resp., $\eta^+$) be the limit of
$\eta^-(t)$ (resp., of $\eta^+(t)$) as
$t\uparrow t_0$ (resp., $t\downarrow t_0$).
Alternatively, $\eta^-$ is the center of a $P$-empty corner
placement of $Q$ that touches, at time $t_0$, $p$ at $v_i$, and also
touches $q$ (at $e_i$) and $r^-$. As is easily checked, the edgelets $e_{pr^-}$ and
$e_{qr^-}$ incident on $\eta^-$ are collinear at time $t_0$, but just before that time
they were separated by another short edgelet that has shrunk to a point;
see Figure~\ref{Fig:Change-degenerate-VD} (a).
Similarly, $\eta^+$ is the center of a $P$-empty corner
placement of $Q$ that touches, at time $t_0$, $q$ at $v_{i+1}$, and also touches $p$ (at $e_i$)
and $r^+$. Here too the edgelets of $e_{pr^+}$ and $e_{qr^+}$ incident on $\eta^+$ are collinear
at $t_0$, and get separated from each other by a newly emerging short edgelet;
see Figure~\ref{Fig:Change-degenerate-VD} (c).

Let $\gamma$ be the polygonal chain connecting $\eta^-$ and
$\eta^+$ at time $t_0$, consisting of all centers of placements
touching $p$ and $q$ (at $e_i$) and some other point $r_i$ of $P$.
At time $t_0$ the degenerate Voronoi edge $e_{pq}$ includes a two-dimensional
star-shaped polygonal region, denoted by $W$, bounded by $\zeta\eta^-$, $\zeta\eta^+$,
and $\gamma$. (It is a portion of the wedge $W_0$ discussed earlier.)

At time $t_0$ the terminal edgelet of $e_{pq}$ instantly
switches from $\zeta\eta^-$ to $\zeta\eta^+$, the Voronoi cell
$\Vor(p)$ loses the entire region $W$ to
$\Vor(q)$. As a result, $\VD(P)$ can experience
$\Omega(n)$ topological changes at time $t_0$ (in addition to
the obvious change of the edgelet structure of $e_{pq}$, as just discussed).

Specifically, let $e_{pr^-}=e_{pr_0},e_{pr_1},\ldots,e_{pr_s}=e_{pr^+}$
be the  edges of $\Vor(p)$ at times $t\in [t_0^-,t_0)$ that
also overlap $\gamma$ at $t_0$ (listed in the order their
appearance along $\partial \Vor(p)$). Right after time $t_0$, the point $\eta^+=\rho^+\cap
e_{pr^+}$ becomes a new Voronoi vertex
instead of $\eta^- = \rho^- \cap e_{pr^-}$.

Assuming $s\geq 1$ (or alternatively, $r^-\neq r^+$), every old
edge $e_{pr_j}$ of $\Vor(P)$, that is completely
contained in $\gamma$ (such edges exist only if $s \ge 2$),
is instantly \emph{relabeled} as the new edge $e_{qr_j}$ of $\Vor(q)$.
%(This is because every point of $W_{t_0}$ lies at
%time $t_0$ at equal $d_Q$-distance to $p$ and $q$, so the
%corresponding portions of $\bisect_{pr_j}$ and
%$\bisect_{qr_j}$ coincide. Hence, immediately after $t_0$,
%every Voronoi edge $e_{pr_j}$ instantly becomes
%$e_{qr_j}$.)
The edge $e_{pr^-}$, which was incident at times $t\in
[t_0^-,t_0)$ to $\eta^-(t)$, loses its portion within $\gamma$
to the adjacent edge $e_{qr^-}$. Symmetrically, the old edge
$e_{pr^+}$ of $\Vor(p)$, which is  hit by $\rho^+(t)$,
for $t \in (t_0,t_0^+]$ is split at $\eta^+$ into $e_{qr^+}$
(its portion within $\gamma$) and $e_{pr^+}$ (its
portion outside $\gamma$).

\paragraph{Changes in $\DT(P)$.}
If $s\geq 1$ then each of the old Delaunay edges $pr_j$, for
$j=0,\ldots,s-1$, flips in $\DT(P)$ to the new edge
$qr_{j+1}$; see Figure \ref{Fig:degenerateManyFlips}. These are the
only changes that $\DT(P)$ experiences at time $t_0$.

%
%\micha{There was here a para that I deleted. It made no sense, was out of context, etc. If you (Pankaj) want it back, fight for it.}
%\medskip
%For every breakpoint $\nu\in \gamma$ of a Voronoi edge $e_{pr_j}$, for $0 \le j \le s$, $Q[\nu]$ contains $p, q, r_j$ on its boundary with one of them lying on a vertex of $Q[\nu]$. That is, it corresponds to a \degenerate corner event. However, unlike a \generic event, $\nu$ is not a Voronoi vertex unless $\nu=\eta^-$ or $\nu=\eta^+$.
%Similarly, for $j \le 0 < s$, the common endpoint $\eta_{pr_jr_{j+1}}$ of
%$e_{pr_j}$ and $e_{pr_{j+1}}$ corresponds to a flip event in the sense that
%$Q[\eta_{pr_jr_{j+1}}]$ contains $p$, $q$, $r_j$, and $r_{j+1}$ at its boundary.
%The Voronoi edge $e_{pr_j}$ flips to $e_{qr_{j+1}}$, and the Delaunay edge $pr_j$ to $qr_{j+1}$, at this event.
 %However, unlike its \generic counterpart, immediately before the event,
%$e_{pr_j}$ is not necessarily a non-corner edge and the triangle $\triangle qpr_j$ may not exist in $\DT(P)$.

\paragraph{\Degenerate sequences}
While we can regard the overall change in $\VD(P)$ and $\DT(P)$ at a \degenerate event as
a single compound event, it is more convenient for our analysis, for the KDS
implementation presented in Section~\ref{sec:KDS}, and perhaps also for a better
intuitive perception of this change, to consider it as a sequence of individual
separate \degenerate corner and flip events, as we describe next.
%Furthermore this also allows \degenerate corner/flip events to be treated like \generic
%corner/flip events---$\VD(P), \DT(P)$ can be updated by making
%local changes; see below for details.

\begin{figure}[htb]
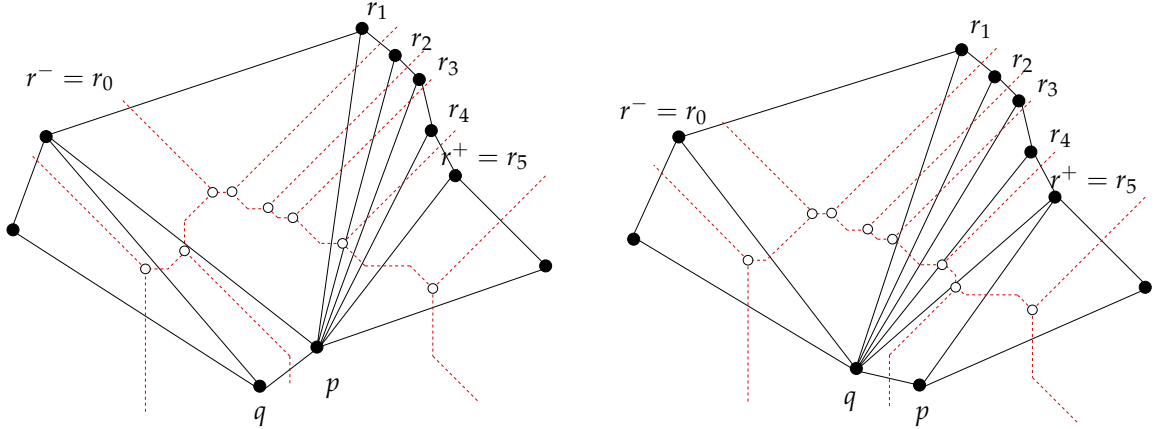

\centering
\input{degenerateDelBefore-new.pspdftex}\hspace{1cm}\input{degenerateDelAfter-new.pspdftex}\\
\caption{Topological changes in $\DT(P)$
at the time $t_0$ of a \degenerate bisector event; five \degenerate flip
events occur at $t_0$, where $pr_0,\ldots,pr_4$ (left) flip to
$qr_1,\ldots,qr_5$ (right).}
\label{Fig:degenerateManyFlips}
\end{figure}

To do so, we ``stretch" the time $t_0$ and
\emph{continuously} rotate the terminal ray $\rho$ of
$\bisect_{pq}$ from $\rho^-$ to $\rho^+$.
% (while fixing all the other bisectors, and the inner edgelets of
% $\bisect_{pq}$).
Hence, the intersection point
$\eta=\rho\cap \gamma$ traces $\gamma$ from
$\eta^-$ to $\eta^+$; see Figure \ref{Fig:degenerateSweep}.
At any given moment during this virtual rotation, $\rho$ hits some old Voronoi edge
$e_{pr_j}$, for $0\leq j\leq s$, which is split by the current
$\eta$ into $e_{pr_j}$ (the portion not yet swept by $\rho$)
and $e_{qr_j}$ (the swept portion). That is, we can interpret
$\eta$ as an instantaneous $Q$-Voronoi vertex $\nu_{pqr_j}$ which is incident to three Voronoi
edges, namely, $e_{pq}$, $e_{pr_j}$ and $e_{qr_j}$ (since $e_{pq}$ is two-dimensional,
this does not fix the vertex and indeed leaves it with one degree of freedom, of moving along
the appropriate portion of $\gamma$).
The topological structure of $\VD(P)$ changes (during the
above rotation) only at the following two kinds of events, which
closely resemble their \generic counterparts in Section \ref{Subsec:Generic}.
%\micha{Delete the next sentence? This is the way these features are defined; no need to assume anything.}
%We describe the two events under the assumption that $\bisect_{pq}$ has a ray parallel
%to $\rho^-$ at time $t_0^-$; the case when the ray is parallel to $\rho^+$ is
%symmetric---simply run the clock backward.

\begin{figure}[htb]
\centering
\input{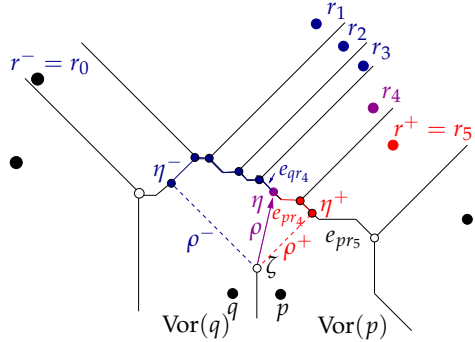}
\caption{``Stretching" the time $t_0$. As the terminal ray $\rho$ of
$\bisect_{pq}$ continuously rotates from $\rho^-$ to $\rho^+$,
the vertex $\eta=\rho\cap \gamma$ traces $\gamma$ from
$\eta^-$ to $\eta^+$. We only show the intersection of the rays
$\rho^-$, $\rho$, and $\rho^+$ with $e_{pq}$. In the depicted
snapshot $\rho$ hits $\gamma$ within the old edge
$e_{pr_4}$, splitting it into $e_{pr_4}$ and
$e_{qr_4}$. Hence, $\eta$ is the Voronoi vertex adjacent to
$\Vor(p)$, $\Vor(q)$, and $\Vor(r_4)$.}
\label{Fig:degenerateSweep}
\end{figure}

%\micha{The following text was totally incomprehensible and needs a lot of
%work. I fixed what I could but another pass is required.}\natan{Please re-read it.}

%\begin{figure}[htbp]
%\begin{center}
%\input{DegenerateCorner1.pstex_t}\\
%\vspace{1cm}
%\input{LoseEdgelet-p.pstex_t}\\
%\caption{Top: The degenerate corner event of $p$ at time $t_0$, which occurs at
%the beginning of the rotation of $\rho$ (i.e., when $\rho$ coincides
%with $\rho^-$). A corner of $Q[\eta^-]$ touches $p$, so $\eta^-$
%coincides with a breakpoint of $p$ on $\bisect_{pr_0}$.
%Bottom: The edge $e_{pr_0}$ loses the edgelet $g$ that it
%possessed at time $t_0^-$, which corresponds to $p$ touching
%$e_{i-1}=v_iv_{i-1}$ and $r_0$ touching $e^-$. Notice that the lost
%edgelet $g$ was adjacent to a pair of parallel edgelets of $e_{pr_0}$
%and $e_{qr_0}$, and thus $e_{qr_0}$ does not gain an edgelet.}
%\label{Fig:degenerate-corner1}
%\end{center}
%\end{figure}

\paragraph{\Degenerate corner event.}
This occurs when the Voronoi vertex $\eta=\nu_{pqr_j}$ coincides with a breakpoint
of $\gamma\subseteq \bisect_{pr_j} = \bisect_{qr_j}$, occurring along
the corresponding Voronoi edge.
There are three types of \degenerate corner events, depending on whether the corner placement occurs at $p$, $q$, or $r_j$ for some $0 \le j \le s$.
Each of the first two types occurs just once at a \degenerate
bisector event, whereas the third type may occur multiple times.
\medskip

\noindent {\bf (i)} $\eta=\nu_{pqr_j}$ is  the center of a corner
placement at $p$ along $e_{pr_j}$.
% we refer to this event as a \emph{degenerate corner event of $p$} happening at $\rho$.
%Note that, unless $\eta=\eta^-$  and $\rho=\rho^-$,  $p$ touches $Q[\eta]$ at the relative interior of the edge $e_i$, so $Q[\eta]$ cannot be a corner placement at $p$. However,  the placement
This occurs at $\eta=\eta^-$, i.e., at the starting point of the rotation,
for $Q[\eta^-]$ is indeed a corner placement at $p$.
We refer to this event as the \emph{initial} corner event of the \degenerate sequence.

%We  conclude that a \emph{degenerate corner event of $p$} occurs only at  $\rho=\rho^-$ when the rotation starts.

As already discussed, $Q[\eta^-]$ touches $p$ at the vertex $v_i$, $q$ on the
edge $v_iv_{i+1}$, and the third point $r^-$ at some other edge
$e^-$ (see Figure \ref{Fig:Change-degenerate-VD} (a)---the figure depicts what happens just before
$t_0$; the limiting situation is depicted in Figure \ref{Fig:Change-degenerate-VD} (b)).
Immediately after the \degenerate corner event of $p$ (in the
sense of stretching the time during the virtual rotation of $\rho$), the
$Q$-Voronoi edge $e_{pr^-}$ loses an edgelet (denoted as $g$ in
that figure), namely the edgelet corresponding to $p$ touching
$e_{i-1}$ and $r^-$ touching $e^-$. It is interesting to note that,
unlike at a \generic corner event, $e_{pq}$ does not gain an edgelet.
(The only edgelet that it could have gained is the one that encodes
the double contact of $p$ and $q$ at $e_i$, which is not a real edgelet.)
Note also that the external edgelet of $e_{qr^-}$ becomes aligned with the
next edgelet of $e_{pr^-}$, and as $\eta$ starts rotating, begins to ``annex'' it.
\medskip

\noindent {\bf (ii)} $\eta=\nu_{pqr_j}$ is the center of a corner placement at $q$ along $e_{qr_j}$.
% we refer to this event as a \emph{degenerate corner event of $q$} happening at $\rho$.
This event occurs only at the end of the rotation,
when $\rho=\rho^+$ and $\eta$ coincides with the Voronoi vertex
$\eta^+=\nu_{pqr^+}$ newly created at this \degenerate event. We refer to this event as the \emph{final} corner event of the \degenerate sequence.

The situation is symmetric to that in (i). Specifically,
$Q[\eta^+]$ touches $q$ at the vertex $v_{i+1}$, $p$ at
the edge $v_iv_{i+1}$, and the third point $r^+$ at some other edge
$e^+$. Immediately after the \degenerate corner event at $q$
(as the ``real'' time $t$ increases past $t_0$), the
$Q$-Voronoi edge $e_{qr^+}$ gains an edgelet (marked by $h$ in
Figure \ref{Fig:Change-degenerate-VD} (c)), but $e_{pq}$
does not lose an edgelet. The external edges of $e_{pr^+}$ and of
$e_{qr^+}$, which were aligned as $\eta$ approaches $\eta^+$,
begin to shift apart from each other, with $h$ in between them.

\begin{figure}[htbp]
\centering
\input{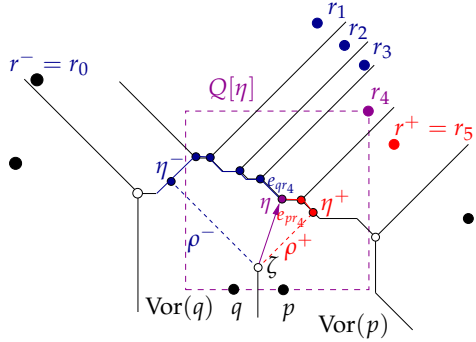}
\caption{A \degenerate corner event at $r_4$. A corner of $Q[\eta]$
touches $r_4$, so $\eta$ coincides with a breakpoint on
each of the overlapping bisectors $\bisect_{pr_4}$ and
$\bisect_{qr_4}$.}
\label{Fig:degenerate-corner2}
\end{figure}

\noindent {\bf (iii)} $\eta=\nu_{pqr_j}$ coincides with the center of a
corner placement at $r_j$ along the currently traced edge
$e_{pr_j}$ (see Figure~\ref{Fig:degenerate-corner2}).
%we refer to this event as a \emph{degenerate corner event of $r_j$}
Let $Q[\eta]=Q[\nu_{pqr_j}]$ be the resulting corner placement of $Q$
at $r_j$, which touches also $p$ and $q$, both on the edge
$e_i=v_iv_{i+1}$ of $Q[\eta]$.
We refer to this event as an \emph{intermediate} corner event of the \degenerate sequence.
Immediately after a \degenerate corner event of $r_j$ (again, in the sense of
the stretched time), the
$Q$-Voronoi edge $e_{pr_j}$ loses an edgelet, and the adjacent
$Q$-Voronoi edge $e_{qr_j}$ gains an edgelet.

\paragraph{\Degenerate flip event.}
The moving vertex $\eta=\nu_{pqr_j}$ coincides with a
$Q$-Voronoi vertex $\nu_{pr_jr_{j+1}}$, which is the common endpoint of the Voronoi edges $e_{pr_j}$ and
$e_{pr_{j+1}}$ along $\gamma$ (thinking of the scenario just before $t_0$; see Figure
\ref{Fig:degenerate-flip}). In the stretched time during
the rotation of $\rho$, the $Q$-Voronoi edge $e_{pr_j}$
shrinks (or, more precisely, ``overtaken'' by $e_{qr_j}$) and disappears from $\VD(P)$. The new edge
$e_{qr_{j+1}}$ is born, as $\eta$ starts moving along the old
edge $e_{pr_{j+1}}$ of $\Vor(p)$, annexing a portion of it for $\Vor(q)$.
The growing portion of that edge between $\nu_{qr_jr_{j+1}}$ and $\eta$ becomes
$e_{qr_{j+1}}$, and $e_{pr_{j+1}}$ is the shrinking
remainder of that edge.  Accordingly, the edge $pr_j$ of
$\DT(P)$ flips to $qr_{j+1}$.

Analogous to a \generic flip event, $e_{pr_j}$ is a non-corner edge when the above flip event occurs. It shrinks to a point at the event,
and the four points $p,q,r_j,r_{j+1}$, which are  $Q$-cocircular, are
vertices of the two adjacent triangles $\triangle pr_jq$ and
$\triangle pr_jr_{j+1}$ of $\DT(P)$ (sharing the edge $pr_j$).
The event is \degenerate because $p$ and $q$ are on the same edge of $Q[\eta]$
(with label $e_i$), and the remaining two points $r_j$ and $r_{j+1}$ are
incident to some pair of other distinct edges of $Q[\eta]$.

\begin{figure}[htbp]
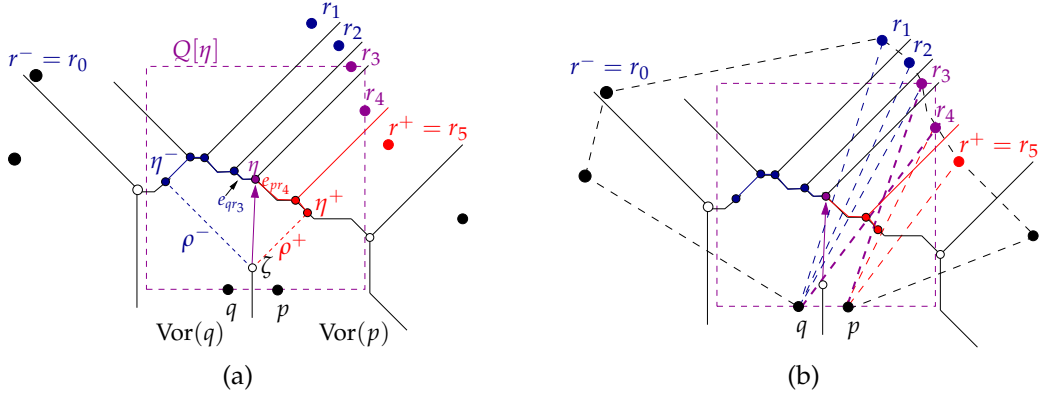

\centering
\begin{tabular}{ccc}
\input{DegenerateFlip.pspdftex}&
\hspace*{0.5cm}&
\input{degenerateFlipDelaunay-new.pspdftex}\\
\small (a)&&\small (b)
\end{tabular}
\caption{A \degenerate flip event involving $p,q,r_3$ and $r_4$.
(a): The Voronoi perspective. The vertex $\eta$ coincides with the
common endpoint $\nu_{pr_3r_4}$ of $e_{pr_3}$ and
$e_{pr_4}$. The corresponding copy $Q[\eta]$ touches $p,q,r_3$
and $r_4$ in this clockwise order. (b): The Delaunay perspective.
The previously Delaunay edge $pr_3$ flips into $qr_4$.}
\label{Fig:degenerate-flip}
\end{figure}

We remark that, unless $j=0$, the edge $qr_j$
does not belong to $\DT(P)$ at time $t^-_0$. It  becomes an
edge of $\DT(P)$ only after executing the previous \degenerate flip,
which adds it to $\DT(P)$. In other words, the rotational
order in which $\rho$ generates these events is such that each flip
event facilitates the next one, by introducing the appropriate edges
$qr_j$ into the diagram.

%Clearly, every degenerate corner or flip event must occur
%simultaneously with some degenerate bisector event. Indeed, in each
%of these events we have an empty homothetic copy $Q'=Q[\eta]$ of $Q$
%so that two points $p,q\in P$ lie on the same edge of $Q'$, which is
%exactly the situation at which a degenerate bisector event takes
%place. Furthermore,
 %all discrete changes that occur in $\VD(P)$ and
%$\DT(P)$ at the time $t_0$ of a degenerate bisector event of
%$p$ and $q$ can be attributed to appropriate corresponding
%degenerate corner and flip events (with exactly one degenerate
%corner event of $p$, and exactly one degenerate corner event of
%$q$); each of these degenerate events causes only $O(1)$ discrete
%changes in $\VD(P)$. Furthermore, only degenerate flip events
%cause changes in $\DT(P)$.

\medskip

We refer to the entire sequence of events that are triggered by a
\degenerate bisector event (including the \degenerate bisector event
itself) as a \emph{\degenerate sequence}. The above order, in which
the events of a \degenerate sequence are encountered during the
continuous rotation of the terminal edgelet $\rho$, is called the
\emph{rotational order} of this sequence.

A crucial property of this interpretation of a \degenerate event is that,
in the stretched time, the various \degenerate corner and flip events
occur in the order at which $\eta$ encounters the various breakpoints
and vertices along $\gamma$. At each instance of the rotation, only
one \degenerate event, namely the next one in the sequence, is locally
consistent with the current structure of the diagrams, in the sense
that it can be effected by the appropriate local rule---a flip of
Delaunay and Voronoi edges, or the passage of an edgelet from one
Voronoi edge to another.
%This will be crucial for the kinetic maintenance
%of the structures, as will be described in Section~\ref{sec:KDS}.

\medskip
\noindent\textbf{Remark.}
(i) If we allow $Q$ to have several mutually parallel diagonals, then
several interior edgelets (up to $\Theta(k)$ in the worst case)
of $\bisect_{pq}$ are simultaneously
replaced by new ones when $pq$ becomes parallel to all of them. We treat each of them as a separate bisector event. Similarly, at a \degenerate event, besides the changes discussed above, some interior edgelets of $\bisect_{pq}$ may also change
(if the relevant edge of $Q$ is parallel to some diagonals).

\smallskip
\noindent
(ii) If the trajectories of $P$ are not in general position, as defined above,
we can perform symbolic perturbation, using the framework proposed by
Yap~\cite{Yap}, to simulate general position.

\smallskip
\noindent
(iii) If we do not augment $P$ with $P_\infty$, the set of points at infinity,
some of the edges of $\DT(P)$ are adjacent to only one
triangle, i.e., the edges of the support hull of $P$
(defined in the introduction, cf.~Figure~\ref{fig:VDT}~(b,c)), and this set can change over time
with the motion of $P$. Consequently, besides edge flips, edges need
to be inserted into or deleted from $\DT(P)$. From the Voronoi
diagram perspective, some of the Voronoi cells of $P$ will be
unbounded, and their status from bounded to unbounded will change
over time, as their defining sites become or cease to be vertices of the support
hull of $P$. Consequently, at some \degenerate biscetor events
involving a pair $p, q$, the two-dimensional region of
$\bisect_{pq}$ (the polygonal region $W$ in
Figure~\ref{Fig:Change-degenerate-VD}) may be
unbounded. Therefore extra care is needed while performing the
rotational sweep to process the \degenerate sequence of these events.
Adding $P_\infty$ simplifies and unifies all these special handlings.

%The insistence on presenting a \degenerate bisector event, like the
%one depicted in
%Figures~\ref{Fig:Change-degenerate-VD}--\ref{Fig:degenerate-flip},
%as a \emph{sequence} of simpler events, occurring in a specific
%``time-stretched'' order, stems from the following reasons: (i) It
%is simpler and more natural, in our opinion, to understand the
%compound event in this manner, and (ii) more importantly, it fits
%perfectly into the KDS mechanism that  we will use in
%Section~\ref{Sec:Maintain} for maintaining the diagrams, in the
%sense that this mechanism will automatically process the degenerate
%event as this sequence of sub-events, handled one by one in this
%order.

%%%%%%%%%%%%%%%%%%%%%%%%%%%%%%%%%%%%%%%%%%%%%%%%%%%%%%%%%%%%%%%%%%%%%%%%%%%%%%%%%%%

\section{Bounding the number of events}
\label{sec:count}

In this section we show that the overall number of events, at which the
topological structure of
$\VD(P)$ changes, during the motion of the points is only
nearly quadratic in $n$. The bounds on the number of events hold
even if the trajectories of points in $P$ are not in $Q$-general
position, but for simplicity, the proofs are presented under the general-position
assumption. We can use a symbolic-perturbation based argument to extend the proof
when trajectories are not in general position; see for
example \cite[Chapter~5]{SA95}.

\paragraph{The number of bisector events.}
The number of \generic (resp., \degenerate) bisector events is
$O(k^2n^2)$ (resp., $O(kn^2)$) since the orientation of the segment
$pq$ for a given pair $p,q\in P$, can become parallel to a fixed
diagonal $v_iv_j$ (resp., a fixed edge $e_i=v_iv_{i+1}$) of $Q$ only
$O(1)$ times, as implied by our kinetic general position assumption.

Each time the orientation of $pq$ becomes parallel to a fixed
diagonal or edge of $Q$ there are $O(1)$ topological changes in the
structure of $e_{pq}$.  Therefore, all the $Q$-Voronoi edges
$e_{pq}$ undergo a total of $O(k^2n^2)$ changes in their
edgelet structure.
%Note though that some of these structural changes in $e_{pq}$ may occur simultaneously as $pq$ can become parallel to up to $\Theta(k)$ parallel diagonals $v_iv_j$ at the same time.  \micha{(a) This is in contrast of all the assumptions made in the previous section. (b) If we really want to make this change, a simple figure would help here.}
Note that we only count the bisector events themselves
in the $O(kn^2)$ bound on the number of \degenerate bisector events,
and not the \degenerate sequences of corner and flip events associated with them; these
other events will be bounded separately in what follows.

\paragraph{The number of corner events.}
The number of corner events (which can be either \generic or \degenerate) is bounded
by the following lemma.

\begin{lemma}
\lemlab{corners}
Let $Q$ be a convex $k$-gon, and let $P$ be a set of $n$ moving points in $\reals^2$
along algebraic trajectories of bounded degree.
% and in general position with respect to $P$.
The overall number of corner events (\generic and \degenerate) in $\DT(P)$
is $O(k^2n\lambda_r(n))$, where $r$ is a constant that depends on the degree of
the motion of the points of $P$.
\end{lemma}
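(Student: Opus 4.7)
The plan is to charge every corner event (generic or degenerate) to the unique pair $(p,v_\ell)$, where $p\in P$ is the point touched by a vertex $v_\ell$ of $Q$ on the boundary of the underlying $P$-empty placement. Inspecting the four cases of Section~3, we see that every corner event has such a unique corner-point/corner-vertex pair: in a generic corner event and in degenerate events of types (i) and (iii) the role of $p$ is played by the point touched by a vertex of $Q$; in a type (ii) event it is the point $q$ of the degenerate pair that is touched at the corner. Since there are $O(kn)$ such pairs, it suffices to prove that at most $O(\lambda_s(kn))$ corner events are charged to any fixed pair $(p,v_\ell)$, for a suitable constant $s$ depending on the algebraic degree of the motion.

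Fix such a pair $(p,v_\ell)$. Consider the one-parameter family of homothetic copies $Q'(\lambda,t) := p(t)+\lambda(Q-v_\ell)$, which encodes all placements of $Q$ whose vertex $v_\ell$ touches $p(t)$, parameterized by the scaling factor $\lambda>0$. For each other point $q\in P$, let $\lambda_q(t)$ be the smallest value of $\lambda>0$ for which $q(t)\in Q'(\lambda,t)$, and set $\lambda_q(t)=+\infty$ when no such $\lambda$ exists. As long as the direction $\vec{pq}(t)$ lies in a fixed angular sector determined by two consecutive vertices of $Q-v_\ell$, the function $\lambda_q(t)$ is a single algebraic function of $t$ of degree bounded by the motion degree; the sector changes only when $\vec{pq}(t)$ becomes parallel to some diagonal $v_\ell v_j$ of $Q$, which, by assumption (T2), happens $O(1)$ times per diagonal and thus $O(k)$ times in total. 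Hence each $\lambda_q(\cdot)$ is piecewise algebraic with $O(k)$ pieces, yielding $O(kn)$ pieces in total over all $q\in P\setminus\{p\}$, with any two pieces meeting in $O(1)$ points.

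Let $L(t)=\min_{q\neq p}\lambda_q(t)$ be the lower envelope of these functions: $L(t)$ is the scale of the maximal $P$-empty placement $Q'(L(t),t)$ with vertex $v_\ell$ at $p$. At every corner event charged to $(p,v_\ell)$, by definition the maximal $P$-empty placement has, besides $p$ at $v_\ell$, at least two additional points of $P$ on its boundary (both on the relative interior of some edge, possibly of the same edge in the degenerate type~(iii) case). That is, at such an event the minimum $L(t)$ is attained by \emph{two distinct} functions $\lambda_q,\lambda_r$, so $t$ is an external breakpoint of $L$. The standard Davenport--Schinzel bound (see, e.g., \cite{SA95}) then yields at most $O(\lambda_s(kn))$ breakpoints for $L$, where $s$ is a constant depending on the number of intersections per pair of pieces. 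Summing over the $O(kn)$ pairs $(p,v_\ell)$ and using the routine estimate $\lambda_s(kn)=O(k\,\lambda_s(n))$ (valid for $k\le n^{O(1)}$), the total number of corner events is
$$
O(kn)\cdot O(\lambda_s(kn)) \;=\; O(k^2 n\,\lambda_s(n)),
$$
which is the asserted bound (with $r=s$).

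The main obstacle will be justifying that the degenerate intermediate type~(iii) corner events are genuinely captured as external breakpoints of $L$ at the pair $(r_j,v)$ (where $r_j$ is the point touched at the vertex), rather than being absorbed into the continuous motion of a single function $\lambda_q$ through an internal piece-switch; the key point is that type~(iii) involves two distinct points $p,q$ on the same edge of the placement, so two distinct functions $\lambda_p,\lambda_q$ coincide at the minimum, producing a legitimate external breakpoint. A small additional verification, under the general position assumptions (T3)--(T5), is needed to ensure that each external breakpoint of $L$ corresponds to $O(1)$ corner events, so that no multiplicative blow-up is incurred.
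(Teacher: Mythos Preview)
Your proposal is correct and follows essentially the same approach as the paper: fix the pair $(p,v_\ell)$, parametrize the one-dimensional family of placements pinning $v_\ell$ at $p$, associate to every other point $q$ a piecewise-algebraic function with $O(k)$ pieces, and read off corner events as breakpoints of the lower envelope. The only cosmetic differences are that the paper uses the distance from $p$ to the center along the fixed ray $u_i[p]$ (rather than the scaling factor $\lambda$) as the function value, and it quotes the envelope bound directly as $O(k\lambda_r(n))$ from \cite[Corollary~1.6]{SA95} rather than writing $O(\lambda_s(kn))$ and then invoking $\lambda_s(kn)=O(k\lambda_s(n))$.
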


\begin{proof}
Fix a point $p$ and a vertex $v_i$ of $Q$, and consider all the corner
events in which (for an appropriate homothetic copy of $Q$)
the vertex (with label) $v_i$ touches $p$. As noted above, at any such event the
center $c$ of $Q$ lies on the ray $u_i[p]$ emanating from $p$ in direction $v_io$
(where $o$ denotes the center of $Q$). Note that, since $p$ is moving, $u_i[p]$
is a moving ray, but its orientation remains fixed. For every other point
$q\in P\setminus\{p\}$, let $\distfn_i[p,q](t)$ denote the distance,
at time $t$, from $p$ along $u_i[p]$ to the center of a homothetic copy of
$Q$ that touches $p$ (at $v_i$) and $q$.

Each function $\distfn_i[p,q](t)$ is well defined only when
$q$ lies strictly in the wedge between two rays emanating from $p$,
one in the direction $\vec{v_i v_{i-1}}$ and the other in the
direction $\vec{v_i v_{i+1}}$. Since the segment $pq$ becomes parallel to
$e_{i-1}$ or to $e_{i}$ only $O(1)$ times, the domain in which
$\distfn_i[p,q](t)$ is defined consists of $O(1)$ open intervals.

The value $\min_q \distfn_i[p,q](t)$ represents the
intersection of $\bd \Vor(p)$ with $u_i[p]$ at time $t$ (recall that
$\Vor(p)$ is the Voronoi cell of $p$ in $\VD(P)$). The point $q$ that
attains the minimum defines the Voronoi edge $e_{pq}$ of $\Vor(p)$ that
$u_i[p]$ intersects.  At appropriate discrete times, the minimum may be
attained by more than one point $q$, and then $u_i[p]$ hits a vertex of $\Vor(p)$.

In other words, we have a collection of $n-1$ partially defined
functions $\distfn_i[p,q]$, for $q\in P\setminus \{p\}$, and the
breakpoints of their lower envelope (where two different functions
attain the envelope simultaneously) represent the corner events that
involve the contact of $v_i$ with $p$.

By our assumption on the motion of $P$, each function
$\distfn_i[p,q]$, within each interval of its domain, is
piecewise algebraic, with $O(k)$ pieces. Each piece encodes a
continuously varying contact of $q$ with a specific edge of $Q$, and
has constant description complexity. Hence (see, e.g.,
\cite[Corollary 1.6]{SA95}) the complexity of the envelope is at
most $O(k\lambda_r(n))$, for an appropriate constant $r$ that
depends on the degree of the motion of the points of $P$. Repeating
the analysis for each point $p$ and each vertex $v_i$ of $Q$, we
obtain the bound asserted in the lemma.

To complete the proof, we also need to take into account the
initial and the final corner events of each \degenerate sequence.
The number of these events, though, is subsumed by the bound
asserted in the lemma, and the proof is completed.
\end{proof}

\paragraph{The number of flip events.}
We distinguish between \generic flips and their \degenerate
counterparts, handling the \degenerate flip events first.

\begin{lemma}
\lemlab{degenerate}
Let $Q$ be a convex $k$-gon, and let $P$ be a set of $n$ moving points
in $\reals^2$ along algebraic trajectories of bounded degree.
% and in general position with respect to $Q$.
The number of \degenerate flip
events in $\DT(P)$ is $O(k^3n\lambda_r(n))$, where $r$ is the
same constant as in the statement of \lemref{corners}.
\end{lemma}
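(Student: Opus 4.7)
The plan is to extend the envelope-based argument used in the proof of Lemma~\ref{lemma:corners}, by incorporating an additional edge-parameter of $Q$. Since we wish to establish a bound of the form $O(k^3n\lambda_r(n))=k\cdot O(k^2n\lambda_r(n))$, the natural strategy is to set up $O(nk^2)$ lower envelopes, each of complexity $O(k\lambda_r(n))$, and charge each degenerate flip event to $O(1)$ breakpoints of such an envelope.

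Concretely, I fix a point $p\in P$, a vertex $v_\ell$ of $Q$, and an edge $e_\beta$ of $Q$ not incident to $v_\ell$; there are $O(nk^2)$ such triples. For each other point $q\in P\setminus\{p\}$ I consider the unique placement of $Q$ in which $p$ lies at $v_\ell$ and $q$ lies in the relative interior of $e_\beta$. As in Lemma~\ref{lemma:corners}, the center of such a placement lies on the fixed-orientation ray $u_\ell[p](t)$ emanating from $p(t)$ in direction $\vec{v_\ell o}$, and I define $\psi_{\ell,\beta}[p,q](t)$ to be the (signed) distance from $p(t)$ along $u_\ell[p](t)$ to that center. By the kinetic general-position assumption (T1)--(T5), each $\psi_{\ell,\beta}[p,q]$ is defined on $O(1)$ time intervals and is algebraic of constant description complexity within each interval; summing over the $O(k)$ possible labels induced by $q$'s passage across different edges of~$Q$ in the bisector structure of Lemma~\ref{lem:edgelet-labels}, each function has a total of $O(k)$ pieces.

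The key step is to argue that each degenerate flip event, involving four $Q$-cocircular points $p,q,r,r'$ on edges $(e_i,e_i,e_\alpha,e_\beta)$ of $Q[\eta]$, can be charged to $O(1)$ breakpoints of the lower envelope $\min_{q'}\psi_{\ell,\beta'}[p,q']$, for suitable choices of $(v_\ell, e_{\beta'})$ dictated by the edge labels of the four cocircular contacts (and symmetrically with the roles of $p$ played by any of the four points). Such a breakpoint arises where two of the functions $\psi_{\ell,\beta'}$ attain the envelope simultaneously, which is precisely a configuration in which a virtual ``corner extension'' of the $Q[\eta]$ placement pins two points of $P$ on $e_{\beta'}$---a configuration forced (at a nearby value of $t$) by the cocircularity structure of the degenerate flip. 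Standard Davenport--Schinzel sequence arguments~\cite[Corollary~1.6]{SA95} then show that each such lower envelope has complexity $O(k\lambda_r(n))$, with the same constant $r$ as in Lemma~\ref{lemma:corners}.

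Summing over the $O(nk^2)$ triples $(p,v_\ell,e_\beta)$ gives the claimed bound $O(k^3n\lambda_r(n))$. The principal obstacle is to specify the charging map precisely: unlike a generic corner event, a degenerate flip event does not itself involve any actual corner placement of $Q$, so the correspondence with a breakpoint of one of the envelopes has to go through a ``virtual'' corner placement obtained by shrinking the edge $e_i$ of $Q[\eta]$ carrying $p$ and $q$ to a single vertex (or, equivalently, by tracking the motion of the Voronoi vertex $\nu_{pr_jr_{j+1}}(t)$ as it approaches $\bisect_{pq}$ at time~$t_0$). Verifying that this charging is $O(1)$-to-one---so that no envelope breakpoint is charged by too many degenerate flips---is the main technical work and is where the general-position assumptions (T2)--(T5) are used critically.
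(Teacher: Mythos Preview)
Your proposal has a genuine gap in the central charging step, and the envelope you set up does not capture the events you need.

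First, the functions $\psi_{\ell,\beta}[p,q]$ are not well matched to \degenerate flip events. A breakpoint of the envelope $\min_{q'}\psi_{\ell,\beta}[p,q']$ occurs at a time when two points $q_1,q_2$ simultaneously lie on $e_\beta$ of a corner placement of $Q$ with $p$ at $v_\ell$. But a \degenerate flip event involves no corner placement of $Q$ at all: the four points sit on three edges $e_i,e_\alpha,e_\beta$ with both $p$ and $q$ in the \emph{relative interior} of $e_i$. Your ``virtual'' fix---collapsing $e_i$ to a vertex---does not produce a breakpoint of the envelope you actually defined; it produces a configuration for a different polygon. Moreover, even the breakpoints of your envelope need not correspond to $P$-empty placements, since you only minimize over points touching the single edge $e_\beta$ and ignore points on the remaining edges of $Q$. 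So the map from flip events to envelope breakpoints, and its claimed $O(1)$-to-one property, are both unspecified and, as stated, false. (Your piece count is also inconsistent: with $e_\beta$ fixed, each $\psi_{\ell,\beta}[p,q]$ is a single algebraic piece, not $O(k)$ pieces.)

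The paper's argument avoids all of this by a reduction that you are missing. It fixes the three edges $e_1,e_2,e_3$ involved in the \degenerate cocircularity (two points on $e_1$, one each on $e_2,e_3$), replaces $Q$ by the constant-size polygon $\tilde Q=\conv(e_1,e_2,e_3)$, and then performs a geometric sliding of $\tilde Q$ that maintains three of the four contacts while moving away from the fourth. Because $\tilde Q$ has at most six edges, this sliding terminates after at most two steps at an actual corner event of $\tilde Q$, and the backward map is $O(1)$-to-one by construction. Applying \lemref{corners} to $\tilde Q$ then gives $O(n\lambda_r(n))$ per triple of edges, and the $O(k^3)$ factor comes from the choice of $(e_1,e_2,e_3)$. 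The reduction to a constant-size $\tilde Q$ is the key idea; without it, a sliding argument on $Q$ itself could pass through $\Theta(k)$ edges (or hit many intermediate points) before reaching a corner, and no $O(1)$-to-one charging would follow.
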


\begin{proof}
We choose three arbitrary edges $e_1,e_2,e_3$ of $Q$, and analyze the
number of \degenerate flips at which two points of $P$ touch $e_1$ and one
point of $P$ lies on each of $e_2$ and $e_3$.  The overall bound on the
number of \degenerate flips is obtained by repeating our analysis for
the $O(k^3)$ such triples of edges.

Let $\tilde{Q}$ be the convex hull of $e_1, e_2$ and $e_3$, with at most
six vertices. Every \degenerate flip event with respect to $Q$ that involves
$e_1$, $e_2$, and $e_3$ in the manner stated in the preceding paragraph
is also a \degenerate flip event with respect to $\tilde{Q}$. To bound the
number of \degenerate flip events for $\tilde{Q}$, we first show how to
charge each of them to a \degenerate corner event (again, with respect to
$\tilde{Q}$) so that each corner event is charged only $O(1)$ times, and
then make use of \lemref{corners}.

Consider the homothetic copy of $\tilde{Q}$ associated with a
\degenerate flip involving four points $p,q,r,s \in P$; we abuse the
notation slightly, as we did earlier, and refer to this copy also as $\tilde{Q}$.
Suppose, more concretely, that $p$ and $q$ touch $e_1$, $r$ touches
$e_2$, and $s$ touches $e_3$. We continuously slide $\tilde{Q}$ so
that it maintains its contacts with
$p,q$ at $e_1$, and with $r$ at $e_2$, and moves\footnote{%
  Note that our continuous motion argument refers to the \emph{stationary}
  point set $P=P(t_0)$, where $t_0$ is the time of the \degenerate flip event.
  Thus we move $\tilde{Q}$ (for the purpose of the analysis) but the
  points of $P$ remain fixed.}
away from $s$.  Specifically, the segments $pr,qr$ partition $\tilde{Q}$
into three parts: the portion that touches $s$, the triangle $\triangle pqr$,
and the third complementary portion (bounded by $pr$). During the continuous motion
of $\tilde{Q}$, the part that was previously incident to $s$ shrinks,
the triangle $\triangle pqr$ does not change (as a portion of $\tilde{Q}$),
and the third part expands; see Figure~\ref{Fig:degenerate-flip-charge}~(a).
The motion of $\tilde{Q}$ stops when one of the following situations is reached:

\begin{figure}[htb]
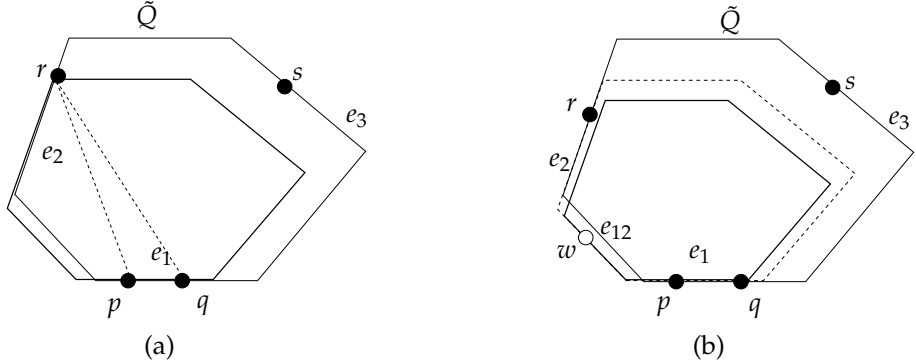

\centering
\input degenerate-flip-count.pspdftex
\caption{Charging a \degenerate flip event. We slide $\tilde{Q}$
while maintaining the contacts with $p$, $q$, $r$:
(a) one of the
points runs into a vertex, (b) $\partial\tilde{Q}$ touches a new point.
Thick polygons denote the corner event to which the flip
event is charged.}
\label{Fig:degenerate-flip-charge}
\end{figure}

\smallskip
\noindent
(i) \emph{One of the points $p,q,r$ runs into a vertex of
$\tilde{Q}$};
see Figure~\ref{Fig:degenerate-flip-charge}~(a) (where $r$ is that
point):  This is a \degenerate corner event of $p,q$ or $r$, which
can be charged for our \degenerate flip event.
This is because both
events occur simultaneously, and any corner event is charged only
$O(1)$ times in this manner (reversing the sliding process,
the fourth point $s$ is the first new point that the expanding
part of $\tilde{Q}$ hits).

\smallskip
\noindent
(ii) \emph{The boundary of the expanding third portion of $\tilde{Q}$
touches a new point $w$ of $P$}; see Figure~\ref{Fig:degenerate-flip-charge}~(b).
In this case $e_2$ cannot be adjacent
to $e_1$, because then there would be no expanding portion of $\tilde{Q}$,
and no new point could be reached. Let $e_{12}$ be the (unique) edge of
$\tilde{Q}$ joining $e_1$ and $e_2$. Since $e_3$ belongs to the
shrinking portion of $\tilde{Q}$, the boundary of the expanding
portion of $\tilde{Q}$ consists of $e_{12}$ and of portions of
$e_1$, $e_2$, and thus $w$ must lie on $e_{12}$;
see Figure \ref{Fig:degenerate-flip-charge}~(b).
We now slide $\tilde{Q}$ along its contacts with $p, q$, and $w$
and away from $r$. Since $e_{12}$ is adjacent to $e_1$, the motion
now stops, as in case (i), when one of $p,q,w$ reaches a vertex of
$\tilde{Q}$. We charge the flip event to this corner event.
There are $O(1)$ possible ways to backtrack from the resulting
corner event to the original \degenerate flip (first by expanding until
$r$ is reached, and then by expanding again until $s$ is reached), which
occurs simultaneously with it. Hence, each corner event is charged
in this manner only $O(1)$ times.

By \lemref{corners}, the number of \degenerate corner events with
respect to $\tilde{Q}$  is $O(n\lambda_r(n))$, so the number of
\degenerate flip events with respect to $\tilde{Q}$ is also
$O(n\lambda_r(n))$. Multiplying this bound by the number $O(k^3)$
of triples of edges $e_1$, $e_2$, $e_3$ yields the bound asserted
in the lemma.
\end{proof}

%%%%%%%%%%%%%%%%%%%%%%%%%%%%
\begin{lemma} \lemlab{flips}
Let $Q$ be a convex $k$-gon, and let $P$ be a set of $n$ moving points in $\reals^2$ along algebraic
trajectories of bounded degree.
% and in general position with respect to $Q$.
The number of \generic flip events in $\DT(P)$ is
$O(k^4n\lambda_r(n))$, where $r$ is the same constant as in
the statement of \lemref{corners}.
\end{lemma}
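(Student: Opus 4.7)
The plan mirrors the strategy of Lemma \lemref{degenerate}: fix the four edges of $Q$ touched at the cocircularity, reduce to an auxiliary polygon $\tilde{Q}$ of constant complexity, and charge each \generic flip to a corner event of $\DT_{\tilde{Q}}(P)$. First, I would fix an ordered 4-tuple $(e_1, e_2, e_3, e_4)$ of pairwise distinct edges of $Q$---there are $O(k^4)$ such tuples---and count only the \generic flips whose $Q$-cocircular configuration assigns the four involved points, in this prescribed cyclic order, to the relative interiors of $e_1, e_2, e_3, e_4$. Setting $\tilde{Q} := \conv(e_1 \cup e_2 \cup e_3 \cup e_4)$, a convex polygon with at most eight vertices, every such flip is also a \generic flip of $\DT_{\tilde{Q}}(P)$ in which the same four points touch the same four edges of $\tilde{Q}$; it suffices to bound the flips with respect to $\tilde{Q}$.

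For the charging, I would fix $P$ at $P(t_0)$, where $t_0$ is the flip instant, and consider the unique cocircular placement $\tilde{Q}'$ with $p, q, r, s$ on $e_1, e_2, e_3, e_4$. Continuously moving $\tilde{Q}'$ while maintaining contacts with $p, q, r$ and retracting the boundary portion incident to $s$, I would stop when either (i) one of $p, q, r$ reaches a vertex of $\tilde{Q}$, giving a corner event to which the flip is charged, or (ii) the retracting boundary encounters a new point $w \in P$, at which I would pivot to the triple $\{p, q, w\}$ (releasing $r$) and continue. Because $\tilde{Q}$ has $O(1)$ vertices, only $O(1)$ pivots occur before a corner event is reached, and a symmetric backtracking from each corner event produces only $O(1)$ sliding trajectories, so the charging is $O(1)$-to-one. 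Applying Lemma \lemref{corners} to $\tilde{Q}$ bounds the corner events of $\DT_{\tilde{Q}}(P)$ by $O(n\lambda_r(n))$, and summing over the $O(k^4)$ edge tuples yields $O(k^4 n \lambda_r(n))$.

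The main obstacle is rigorously realizing the one-parameter sliding motion in the \generic setting. In Lemma \lemref{degenerate}, two of the three maintained contacts lie on the same edge and the segment between them is parallel to it, so the three incidences collapse to only two independent constraints on the three placement parameters and a one-parameter family arises naturally. Here, three contacts on three distinct edges impose three independent constraints and generically isolate the placement, so the motion must instead be realized by maintaining only two contacts (say $p \in e_1$ and $q \in e_2$, giving a one-parameter family parameterized by the scale of $\tilde{Q}'$) while tracking the third contact with $r$ until it collides with a vertex of $\tilde{Q}$ or drifts onto an adjacent edge; the retracting side toward $s$ is identified by the direction of decreasing scale. This yields a piecewise one-parameter motion whose breakpoints are precisely the pivots and corner events enumerated above. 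Once this technical point is set up, the remaining bookkeeping---counting pivots, verifying the $O(1)$-to-one charging, and handling the few geometric cases in which multiple breakpoints coincide---follows the template of Lemma \lemref{degenerate} with only minor adjustments for the larger (though still constant-size) combinatorics of $\tilde{Q}$.
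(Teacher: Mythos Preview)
Your approach has a genuine gap that goes beyond the degrees-of-freedom difficulty you flag. In \lemref{degenerate} the sliding at the frozen configuration $P(t_0)$ succeeds because the \degenerate flip and the corner placement the slide reaches occur \emph{at the same physical time} $t_0$: both are members of one \degenerate sequence, so the corner placement is a bona fide \degenerate corner event counted by \lemref{corners}. In the \generic setting this link is severed. By assumption (T5) no corner event coincides with a \generic flip, so whatever corner placement your slide reaches in $P(t_0)$ is merely a static configuration, not a kinetic corner event; there is nothing in the count of \lemref{corners} to receive the charge. Your backtracking justification (``$O(1)$ sliding trajectories from each corner event'') implicitly assumes the target is one of the $O(n\lambda_r(n))$ kinetic corner events, and it is not. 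The degrees-of-freedom issue is a symptom of the same phenomenon: in the \degenerate case the redundant contact (two points on one edge at a moment when they are parallel to it) is exactly what creates both the one-parameter family \emph{and} the simultaneous corner event; with four distinct edges neither is available.

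The paper proceeds differently after the identical reduction to $\tilde{Q}$. It observes that the vanishing Voronoi edge $e_{p_1p_3}$ is a non-corner edge, and considers the maximal connected subgraph $T$ of non-corner edges of $\VD_{\tilde{Q}}(P)$ containing it. A short case analysis (using that $\tilde{Q}$ has at most eight edges, and that an edge whose edgelet label consists of two adjacent edges of $\tilde{Q}$ must be a corner edge) shows $T$ has at most five Voronoi edges and is determined by $O(1)$ points of $P$. The charging is then \emph{temporal}: as long as none of the corner edges bordering $T$ undergoes a corner event, $T$ is a constant-size structure defined by $O(1)$ bounded-degree trajectories and can experience only $O(1)$ discrete changes, flips included. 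Thus the total number of \generic flips is at most a constant times the number of $\tilde{Q}$-corner events, which is $O(n\lambda_r(n))$ by \lemref{corners}. No placement is ever slid; flips are charged to the corner events that bracket them in time.
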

%%%%%%%%%%%%%%%%%%%%%%%%%%%%

\begin{proof}
Each \generic flip event involves a
placement of an empty homothetic copy $Q'$ of $Q$ that touches
simultaneously four points $p_1,p_2,p_3,p_4$ of $P$, in this
counterclockwise order along $\partial Q'$, so that the
Voronoi edge $e_{p_1p_3}$, which was a non-corner edge
before the event, shrinks to a point and is replaced by the
newly emerging non-corner edge $e_{p_2p_4}$ right
after the event. Let $e_i$ denote the edge of $Q'$ that touches $p_i$,
for $i=1,2,3,4$. Since this is a \generic flip, the $e_i$'s are distinct.

We fix the quadruple of edges $e_1,e_2,e_3,e_4$, analyze the number
of flip events involving a quadruple contact with these edges,
and sum the bound over all $O(k^4)$ choices of four edges of $Q$.
For a fixed quadruple of edges $e_1,e_2,e_3,e_4$, we replace $Q$ by
the convex hull $\tilde{Q}$ of these edges, which is at most octagonal,
and note that any flip event of $Q$ involving these four edges is also
a flip event for $\tilde{Q}$. We therefore restrict our attention to
$\tilde{Q}$, which is a convex $k_0$-gon, for some $k_0\le 8$, and
argue that the number of \generic flip events for $\tilde{Q}$, with
contacts at the above four edges $e_1,\ldots, e_4$, is $O(n\lambda_r(n))$.

We note that if $pq$ is a Delaunay edge containing an edgelet which
is the locus of centers of placements of copies $\tilde{Q}'$ of
$\tilde{Q}$ where $p$ and $q$ touch two {\em adjacent} edges of
$\tilde{Q}'$, then $pq$ must be a corner edge. Indeed, shrinking
$\tilde{Q}'$ towards the vertex common to the two edges, so that it
continues to touch $p$ and $q$, will keep it empty, and eventually
reach a placement where either $p$ or $q$ touches a corner of
$\tilde{Q}'$.

Consider the situation just before the critical event takes place,
as depicted in Figure~\ref{Fig:ChewProof} (a).
The $\tilde{Q}$-Voronoi edge $e_{p_1p_3}$ (to simplify the notation, we
write this edge as $e_{13}$, and similarly for the other edges
and vertices in this analysis) is delimited by two $\tilde{Q}$-Voronoi vertices
$\nu_{123}$ and $\nu_{143}$, so that $\tilde{Q}[\nu_{123}]$
touches $p_1,p_2,p_3$ at the respective edges $e_1,e_2,e_3$,
and $\tilde{Q}[\nu_{143}]$ touches $p_1,p_4,p_3$ at the respective edges
$e_1,e_4,e_3$.

\begin{figure}[hbtp]
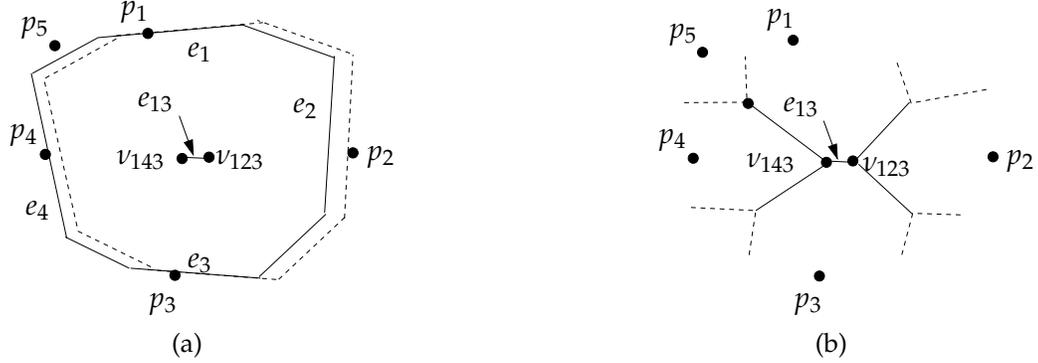

\centering
\begin{tabular}{ccc}
\input{ChewProof.pspdftex}&\hspace*{3cm}&\input{ChewProof1.pspdftex}\\
\small (a)&&\small (b)
\end{tabular}
\caption{(a) The edge $e_{13}$ in the diagram $\VD(P)$
before it disappears. The endpoint $\nu_{123}$ (resp.,
$\nu_{143}$) of $e_{13}$ corresponds to the homothetic
copy of $\tilde{Q}$ whose edges $e_1,e_2,e_3$ (resp., $e_1,e_4,e_3$)
are incident to the respective points $p_1,p_2,p_3$ (resp.,
$p_1,p_4,p_3$). (b) The tree of non-corner edges that contains
$e_{13}$. In the ``worst-case'' scenario depicted here, all
five solid edges belong to the tree, and the dashed edges are all
corner edges.}\label{Fig:ChewProof}
\end{figure}

Let $T$ be the maximum connected component of the union of the non-corner
$\tilde{Q}$-Voronoi edges that contains $e_{13}$.  We claim that
$T$ is a tree with at most five edges; specifically, it can include only
$e_{13}$ itself,  $e_{12}$, $e_{23}$ (the edges adjacent
to $\nu_{123}$), and $e_{14}$, $e_{34}$ (the edges
adjacent to $\nu_{143}$); see Figure~\ref{Fig:ChewProof}(b).
To see this, consider,
for specificity, the neighboring edge $e_{12}$ incident to
the vertex $\nu_{123}$, and assume that $e_{12}$ is a
non-corner edge (otherwise it does not belong to $T$, by
construction), so $e_{12}$ is part of $T$. As we move
the center of $\tilde{Q}$ along that edge away from $\nu_{123}$,
$\tilde{Q}$ loses the contact with $p_3$, shrinks on the side of
$p_1p_2$ that contains $p_3$ (and $p_4$, already away from $\tilde{Q}$),
and expands on the other side. Being a non-corner edge,
the other endpoint of $e_{12}$ is a placement at which
the (artificial) edge $e_{12}$ of $\tilde{Q}$ between $e_1$ and
$e_2$ touches another point $p_{5}$ (while $p_1$ and $p_2$ continue
to touch $e_1$ and $e_2$, respectively). Since $e_{12}$ is adjacent
to both edges $e_1$, $e_2$, the new Voronoi edges $e_{15}$ and
$e_{25}$ must both be corner edges. The same argument applies
for the other three Voronoi edges adjacent to $e_{13}$, and
we conclude that $T$ is a tree consisting of at most five edges.

We note that as long as no discrete change occurs at any of the
surrounding corner edges of $T$, it can undergo only $O(1)$ discrete
changes because all its edges are defined by a total of $O(1)$
points of $P$ (and the size of $\tilde{Q}$ is independent of $k$).
Furthermore when a corner edge undergoes a discrete change, this can
affect only $O(1)$ adjacent non-corner trees of the above kind.

Hence, the number of changes in non-corner edges, and in particular
the number of flip events they are involved in, is proportional to
the number of corner events, because only corner events affect the
actual number of breakpoints on an edge. More specifically, if
$e$ is an edge adjacent to $T$, its interaction with $T$ can
change only when its last edgelet (the one nearest to $T$) changes,
and that can happen only at a corner event.  By \lemref{corners} and
the reasoning preceding it (applied to $\tilde{Q}$), the number of
corner events is $O(n\lambda_r(n))$ (the same $r$ as in the previous
lemmas). Multiplying by the $O(k^4)$ choices of quadruples of edges
of $Q$, we conclude that the total number of \generic flip events is
$O(k^4n\lambda_r(n))$.
\end{proof}

Combining the above three lemmas, we obtain the following summary result.
%%%%%%%%%%%%%%%%%%%%%%%%%%%%%%%%%%%%%%%%%%%%
\begin{theorem} \label{Thm:PolygonalVoronoi}
Let $P$ be a set of $n$ moving points in $\reals^2$ along algebraic
trajectories of bounded degree, and let $Q$ be a convex $k$-gon.
%Assume that the motion is in general position with respect to $Q$.
Then the number of topological changes in $\VD(P)$ (and in
$\DT(P)$) with respect to $Q$, is $O(k^4n\lambda_r(n))$,
where $r$ is a constant that depends on the degree of the motion
of the points of $P$.
\end{theorem}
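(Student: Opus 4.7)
The plan is to derive the theorem by aggregating the event counts established earlier in this section. Every topological change in $\VD(P)$ or $\DT(P)$ is triggered by some event as classified in Section~\ref{sec:kinetic}: a bisector event, a \generic corner event, a \generic flip event, or---in the case of a \degenerate bisector event---by one of the \degenerate corner or flip events composing its associated \degenerate sequence. Hence the overall count is simply the sum of the counts over these four categories, and it remains to collect the four bounds and observe which one dominates.

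For the bisector events themselves I would reuse the elementary argument given just before \lemref{corners}: by assumption~(T2), for each ordered pair $p,q\in P$ the segment $pq$ can become parallel to any fixed edge or diagonal of $Q$ only $O(1)$ times, giving $O(k^2 n^2)$ \generic and $O(k n^2)$ \degenerate bisector events; in isolation each such event causes only $O(1)$ changes in the edgelet sequence of the single bisector $\bisect_{pq}$. Crucially, the potentially $\Omega(n)$-size cascade of changes triggered at a \degenerate bisector event is \emph{not} double-counted here: by the rotational-sweep decomposition described in Section~\ref{sec:kinetic} each such change is charged, one at a time, to a distinct \degenerate corner or flip event in the associated \degenerate sequence. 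The remaining three counts can then be imported directly: \lemref{corners} contributes $O(k^2 n\lambda_r(n))$ corner events (\generic together with the initial, final, and intermediate corner events of every \degenerate sequence), \lemref{degenerate} contributes $O(k^3 n\lambda_r(n))$ \degenerate flip events, and \lemref{flips} contributes $O(k^4 n\lambda_r(n))$ \generic flip events; once again each such event induces only $O(1)$ local topological changes in $\VD(P)$ and $\DT(P)$.

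Summing the four contributions and using $\lambda_r(n)=\Omega(n)$ (so that $k^2 n^2 = O(k^4 n\lambda_r(n))$), the bound $O(k^4 n\lambda_r(n))$ from \lemref{flips} subsumes the other three and yields the stated bound for both $\VD(P)$ and $\DT(P)$. Because the substantive combinatorial work has already been carried out inside the three preceding lemmas, there is no real obstacle remaining in this last step; the only point that requires care is the bookkeeping for \degenerate events, and the per-event decomposition from Section~\ref{sec:kinetic} has been set up precisely so that every change in the large compound effect of a \degenerate bisector event is attributed to a unique entry already counted in \lemref{corners} or \lemref{degenerate}.
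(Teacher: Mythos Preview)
Your proposal is correct and follows exactly the paper's approach: the theorem is stated as a direct summary of the three preceding lemmas together with the elementary bisector-event count, and your write-up simply makes explicit the aggregation (including the bookkeeping for \degenerate sequences and the observation that $\lambda_r(n)=\Omega(n)$ makes the $O(k^4 n\lambda_r(n))$ term dominant) that the paper leaves implicit in the one-line phrase ``Combining the above three lemmas.''
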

%%%%%%%%%%%%%%%%%%%%%%%%%%%%%%%%%%%%%%%%%%%%

\section{A KDS for kinetic maintenance of $\mathbf{VD(P)}$ and $\mathbf{DT(P)}$}
\label{sec:KDS}

In this section we turn the detailed analysis of the kinetic behavior of
$\VD(P)$ and $\DT(P)$, as studied in Section~\ref{sec:kinetic},
into an efficient algorithm for their kinetic maintenance, using the KDS
framework of Basch~\etal~\cite{bgh-dsmd-99}.
%Recall that we augment the point set $P$ with $k$ points at infinity, so
%that every Delaunay edge in $\DT(P)$ (except for those that connect pairs of points at infinity) is adjacent to two Delaunay triangles.
In particular, as the points of $P$ move continuously,
$\DT(P)$ and $\VD(P)$  are maintained
using a collection of local \emph{certificates} (Boolean predicates) that assert the
correctness of the topological structure of the current Voronoi diagram (and
thus also of the Delaunay triangulation). When one of these certificates fails,
one of the events discussed in Section~\ref{sec:kinetic} takes place and the KDS repair
mechanism updates $\VD(P), \DT(P)$, and the certificates.
A global \emph{event queue} is used to determine when the next certificate fails.
For simplicity, the KDS is described under the assumptions that the
orientations of all edges and diagonals of $Q$ are distinct, the trajectories
of points in $P$ are in $Q$-general position (i.e., satisfy
(T1)--(T5)), and $P$ is augmented with the points of $P_\infty$.
In cases where these assumptions do not hold, we apply an infinitisimal symbolic perturbation of the vertices of $Q$, thereby putting it in general position, and ensuring the distinctness of the orientations of all the edgese and diagonals of $Q$. With suitable (minor) adjustments, this will allow our KDS to operate in such degenerate situations too.

\begin{figure}[htbp]
\centering
\input{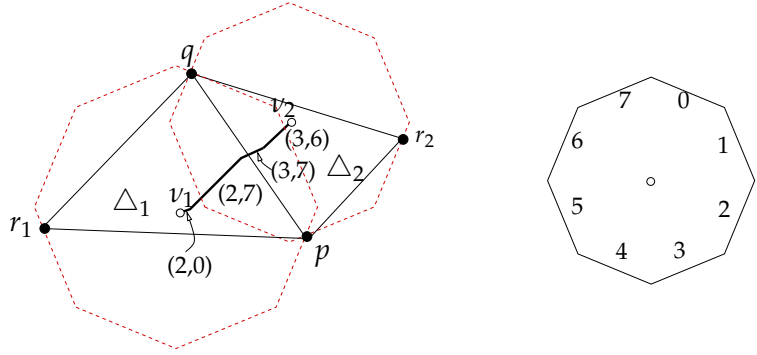}
\caption{A Delaunay edge $pq$ and its two adjacent triangles $\triangle_1,\triangle_2$.
We have $\delta(v_1,p)=2$, $\delta(v_1,q)=0$, $\delta(v_2,p)=3$, and $\delta(v_2,q)=6$,
using the edge numbering given on the right,
where $v_i$ is the Voronoi vertex corresponding to $\triangle_i$, for $i=1,2$.
The edgelets appearing in $e_{pq}$ are labeled $(2,0), (2,7), (3,7)$, and $(3,6)$.}
\label{Fig:bisector-certificate}
\end{figure}

We need the following definitions and notation to describe the KDS.
For a Voronoi vertex $\nu$ of some Voronoi region $\Vor(p)$, where
$p$ is a point of $P$ not lying at infinity (so $p$ is a vertex of
the Delaunay triangle $\triangle_\nu$ corresponding to $\nu$), we
define $\delta(\nu,p)$ to be the edge of $Q[\nu]$ that touches $p$
(at a \generic time instance, $p$ does not lie at a vertex of
$Q[\nu]$); see Figure \ref{Fig:bisector-certificate}.

Let $e_{pq}$ be a Voronoi edge with endpoints $\nu_1$ and $\nu_2$.
Let $g_1$ (resp., $g_2$) be the edgelet of $\bisect_{pq}$ that
contains the endpoint $v_1$ (resp., $v_2$). Writing
$\psi(g_1)=(i_1,j_1)$ and $\psi(g_2)=(i_2,i_2)$, we have
$\delta(\nu_1,p)=i_1$, $\delta(\nu_2,p)=i_2$, $\delta(\nu_1,q)=j_1$,
and $\delta(\nu_2,q)=j_2$. That is, $\delta(\nu_1,p)$ and
$\delta(\nu_1,q)$ together encode the edgelet of $\bisect_{pq}$
containing $\nu_1$, which is an external edgelet of $e_{pq}$, and an
analogous property holds for $\nu_2$, so the mapping
$\delta(\cdot,\cdot)$ encodes the external edgelets for all edges of
$\VD(P)$.  If $e_{pq}$ is a non-corner edge of $\VD(P)$, then
$g_1=g_2$, and therefore $\delta(\nu_1,p)=\delta(\nu_2,p)$ and
$\delta(\nu_1,q)=\delta(\nu_2,q)$.

 We focus on the more detailed representation of $\VD (P)$;  $\DT(P)$ is easier to represent, and for
the ongoing analysis we regard it merely as the dual of $\VD (P)$.
We assume for now that $\VD (P)$ is represented {\em explicitly};
that is, each Voronoi edge is represented as a sequence of its
edgelets.
 We represent $\VD (P)$ (and $\DT(P)$ too, if so desired) using the so-called \emph{doubly
connected edge list} (DCEL) data structure~\cite{BCKO}, a commonly
used structure for representing planar subdivisions. It stores a
record for each face, edgelet, and vertex (a Voronoi vertex or a
breakpoint) of the subdivision, along with some adjacency
information. The basic entities in this representation are {\em
half-edges}, where each half-edge is an oriented copy of some
Voronoi edgelet. Each half-edge stores pointers to its endpoints
(each of which is a breakpoint or a Voronoi vertex), the Voronoi
cells adjacent to it, its neighboring edgelets, and its label
$(i,j)$. Each Voronoi cell $\Vor(p)$ is associated with its input
point $p$, which stores its trajectory. With each breakpoint or
vertex $\nu$ we can either store its trajectory, or compute it on
the fly from the trajectories of the two or three points defining
$\nu$, and the edges of $Q[\nu]$ that they touch (we can determine
the identity of these edges of $Q[\nu]$ from the labels of the
edgelets adjacent to $\nu$). Note that as long as no event happens,
the trajectories of each vertex and breakpoint are algebraic of
constant complexity. See~\cite{BCKO} for more details on the DCEL
data structure. This explicit representation of $\VD (P)$ requires
$O(nk)$ space.

\subsection{Certificates}\label{Subsec:Certificates}
We maintain three types of certificates, each corresponding to a
different type of event. Each certificate is stored at the event
queue, with its next future failure time as its priority. In
addition, we store with each certificate information that will be
needed to update the diagram and the event queue when the
certificate fails.

\medskip
\noindent \textbf{\textit{Bisector certificates.}}\hspace{2mm} For
each internal edgelet $g$ of each Voronoi edge $e_{pq}$, with
$\psi(g) = (i,j)$, we maintain a certificate asserting that the
orientation of $pq$ lies between the two extreme orientations of
lines that cross both edges $e_i$ and $e_j$ of $Q$ (these are the
orientations of $v_iv_j$ and $v_{i+1}v_{j+1}$). We recall that $i$
and $j$ are not consecutive for internal edgelets. This certificate
fails when $\vec{pq}$ attains one of these extreme orientations; See
Figure \ref{Fig:bisector}. The failure causes $g$ to shrink and
disappear and be replaced by a new internal edgelet $g'$ with
$\psi(g') = (i-1,j-1)$ or $\psi(g') = (i+1,j+1)$, as appropriate. To
efficiently perform the required updates when this certificate
fails, the certificate also stores a direct pointer to the edgelet
itself (in the DCEL). We call this certificate a {\em generic
bisector certificate}.

\begin{figure}[htbp]
\centering
\input{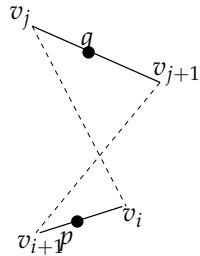}
\caption{The generic bisector certificate corresponding to an
internal edgelet $g$ such that $\psi(g) = (i,j)$. It fails when
$\vec{pq}$ becomes parallel to $\vec{v_iv_j}$ or
$\vec{v_{i+1}v_{j+1}}$} \label{Fig:bisector}
\end{figure}

External edgelets of an edge $e_{pq}$ come in two types: edgelets
that are portions of non-terminal (bounded) edgelets of the bisector
$b_{pq}$, and edgelets that are portions of the terminal rays of
$b_{pq}$; see Figure \ref{Fig:BisectLabel}. Our general position
assumption (T5) implies that we do not have to associate bisector
events with edgelets of the former type, because before such an
edgelet shrinks to a point, as an edgelet of $b_{pq}$, it will
disappear as an edgelet of $e_{pq}$, at a corresponding corner
event, if $e_{pq}$ is a corner edge, or at a flip event, if $e_{pq}$
is a non-corner edge.

Consider then external edgelets of the latter type.  Each such
edgelet $g$ has a label of the form $\psi(g) = (i,i+1)$, and its
bisector certificate fails when the corresponding points $p$ and $q$
are such that $\vec{pq}$ becomes parallel to either $e_i$ or
$e_{i+1}$. This is a singular bisector event. We refer to such a
certificate as a {\em singular bisector certificate}.

\medskip
\noindent \textbf{\textit{(Voronoi) Vertex
certificates.}}\hspace{2mm} For each vertex $\nu$ of $\VD(P)$ not
lying at infinity and for each point $p\in P$ such that $\Vor(p)$ is
adjacent to $\nu$ and $p$ is not at infinity, we maintain a
certificate  asserting that  $\delta(\nu,p) = v_iv_{i+1}$ for a
suitable index $i$, i.e.,
 $p$ lies on the edge $v_iv_{i+1}$ of the homothetic copy $Q[\nu]$ of $Q$.
Because the motion of $p$ is continuous, the certificate fails only
when $p$ coincides with $v_i$ or $v_{i+1}$ in $Q[\nu]$; that is, the
failure occurs at a corner event. This certificate stores pointers
to $\nu$, $p$. The index $i$ and $\delta(\nu,p)$  can  be identified
using the labels of the edgelets adjacent to $\nu$ in the DCEL.

Let $q$ and $r$ be the other two points of $P$ defining $\nu$; i.e.\
$\nu=\nu_{pqr}$. The corner event is either generic, when the vertex of $Q$
with which $p$ coincides is not adjacent to $\delta(\nu,q)$ or to
$\delta(\nu,r)$, or singular, when this vertex is adjacent to one of
these edges, say to $\delta(\nu,q)$. In the latter case, this is the
first singular corner event in the singular sequence associated with
the corresponding singular bisector event involving $p$ and $q$ at
which $\vec{pq}$ is parallel to $\delta(\nu,q)$. When the singular
corner event happens the placement $Q[\nu]$ is a placement at which
$p$ lies at the common vertex of $\delta(\nu,p)$ and
$\delta(\nu,q)$, and $q$ lies on the adjacent edge $\delta(\nu,q)$.
In  the terminology of Section \ref{Subsec:Degenerate},
$\nu=\eta^-$; see there for more details. We refer to a vertex
certificate that fails at a generic corner event as a {\em generic
vertex certificate} and to a vertex certificate that fails at a
singular corner event as a {\em singular vertex certificate}.

For any sequence of singular events, only the initial corner event is detected through the failure of a singular vertex certificate; the remaining corner events 
are detected and treated separately, as described  below.

\medskip
\noindent \textbf{\textit{(Voronoi) Edge certificates.}}\hspace{2mm}
For each non-corner Voronoi edge $e_{pq}$, we maintain the following
certificate. Let $\nu_{pqr}$ and $\nu_{pqw}$ be the two endpoints of
$e_{pq}$, i.e., $\triangle pqr$ and $\triangle pqw$ are the two
Delaunay triangles adjacent to the edge $pq$. We maintain the
certificate asserting that $e_{pq}$ exists, namely, that $\nu_{pqr}
\not= \nu_{pqw}$; since the trajectories of these vertices are
available, and each of them is of constant complexity (as long as no
other event affecting these vertices occurs), we can compute the
failure time of this certificate in $O(1)$ time. The edge
certificate fails when $\nu_{pqr} = \nu_{pqw}$ (while maintaining
$\delta(\nu_{pqr},p)=\delta(\nu_{pqw},p) \not=
\delta(\nu_{pqr},q)=\delta(\nu_{pqw},q)\not=\delta(\nu_{pqr},r)\not=\delta(\nu_{pqw},w)$),
that is when $p$,$q$,$r$, and $w$ become $Q$-circular while touching
distinct edges of $Q$. This is a generic flip event.

Singular flip events, occurring as part of a compound singular
bisector event, as described in Section \ref{Subsec:Degenerate}, are
detected and  treated separately, as described  below.

\subsection{Handling certificate failures}\label{Subsec:OnFailure}
We now describe the KDS repair mechanism that updates $\VD(P)$, $\DT(P)$,
the certificates, and the event queue, when a certificate fails.
The queue contains the failure times of generic bisector and vertex certificates, edge certificates, and singular bisector and vertex certificates. 
Note that singular certificates come in pairs, each consisting of a
singular bisector certificate and a singular vertex
certificate, both failing {\em at the same time}. When we extract
from the queue one certificate of such pair, we immediately extract
its sibling too, and process them together, as will be described
later. We note that, except for this kind of simultaneous failure of
two distinct certificates, assumption (T5) implies that there are no
other duplications of failure times in the queue.

The processing of the failure of a certificate, when it is extracted from the queue, is handled according to its type,
and proceeds as follows.

\medskip
\noindent
\textbf{\textit{Generic bisector certificates.}}\hspace{2mm}
Consider such a certificate involving the shrinking of some edgelet $g$ of some Voronoi edge
$e_{pq}$. We remove $g$ from $\VD(P)$ and replace it by a new (internal) edgelet $g'$. We compute
$\psi(g')$ from $\psi(g)$, by adding or subtracting $1$ from its two indices, as appropriate, generate the new bisector certificate for $g'$,
compute its failure time, and insert it into the queue.

\medskip
\noindent
\textbf{\textit{Generic vertex certificates.}}\hspace{2mm}
Let $\nu := \nu_{pqr}$ be a Voronoi vertex such that one of its certificates, say
$\delta(\nu, p) = e_{i}$, fails, and suppose that this is a generic vertex certificate.
 This happens when $p$ coincides with one of the endpoints of
$e_i$, say, $v_i$, in the homothetic copy $Q[\nu]$ of $Q$ (and $\delta(\nu,q)$ and $\delta(\nu,r)$ are not adjacent to $v_i$).
 Then $\nu$ lies at a
breakpoint of $e_{pq}$ and at a breakpoint of $e_{pr}$, and we
detect a generic corner event, at which, without loss of generality,
$e_{pq}$ loses an edgelet $g$ and $e_{pr}$ gains an edgelet $g'$. We
remove $g$ from the DCEL, including its adjacent breakpoint (which
is ``overtaken'' by $\nu$), create the new edgelet $g'$ on $e_{pr}$
and the corresponding new breakpoint delimiting it, and update the
DCEL accordingly. We next update the certificates as follows.

(i) We find the new edge $e'$ of $Q$ (which is either $e_{i-1}$ or $e_{i+1}$)
that supports $p$
after the event,
 replace the old certificate
with a new certificate asserting that $\delta(\nu,p)=e'$, compute the failure time of the new certificate, and
update the event queue.

(ii) Since $p$ lies on a new edge of $Q[\nu]$ after the event, the
algebraic representation of the trajectory of $\nu$ changes, and thus we also update
the failure times of the vertex
certificates that specify $\delta(\nu_{pqr},q)$ and $\delta(\nu_{pqr},r)$.

(iii) The new edgelet  $g'$ is an external edgelet of $e_{pr}$. If
its label consists of two consecutive edges of $Q$ then $g'$
generates a singular bisector certificate, which we process into the
queue (otherwise, as noted earlier, no action is needed here).

(iv) The edgelet of $e_{pr}$ adjacent to $g'$ may now become
internal, and then we generate the corresponding generic bisector
certificate. Otherwise $e_{pr}$ was a non-corner edge before the
event and becomes now a corner edge, so we remove the edge
certificate associated with $e_{pr}$ from the queue.

(v) The edgelet $g^+$ of $e_{pq}$ adjacent to $g$ becomes external. We first remove the generic bisector certificate of $g^+$.
Next, if $e_{pq}$ becomes a non-corner edge, then we generate
a new edge certificate associated with $e_{pq}$. 

(vi) If the third Voronoi edge $e_{qr}$ adjacent to $\nu$ is a non-corner edge then its certificate
failure time has to be updated, because of the new trajectory of $\nu$.

\medskip
\noindent \textbf{\textit{Edge certificates.}}\hspace{2mm} If a
non-corner edge $e_{pq}$, with endpoints $\nu_{pqr}$ and
$\nu_{pqw}$, shrinks to a point, i.e., $p, q, r, w$ become
$Q$-cocircular, then the edge certificate associated with $e_{pq}$
fails. We replace in $\VD(P)$ the non-corner edge $e_{pq}$ with the
non-corner edge $e_{rw}$ and the vertices $\nu_{pqr}$ and
$\nu_{pqw}$ with $\nu_{rwp}$ and $\nu_{rwq}$ (these are the
endpoints of the new edge). The edge $pq$ in $\DT(P)$ is flipped to
$rw$, and $\triangle pqr, \triangle pqw$ are replaced with
$\triangle rwp$ and $\triangle rwq$.

Next, we remove the six vertex certificates associated with the
vertices $\nu_{pqr}$ and $\nu_{pqw}$. We add the edge certificate
corresponding to the edge $e_{rw}$ and the six vertex certificates
corresponding to the vertices $\nu_{rwp}$ and $\nu_{rwq}$. Finally,
since the endpoints of the edges $e_{pr}, e_{qr}, e_{pw}$, and
$e_{qw}$ have changed, if any of them is a non-corner edge then we
also update the certificate corresponding to that edge, recompute
its failure time, and update the queue.

\medskip
\noindent \textbf{\textit{Singular bisector and vertex
certificates.}}\hspace{2mm} Finally we turn to the processing of a pair of
a singular bisector certificate and a singular initial corner
certificate that fail simultaneously. Let $p$ and $q$ be the points
generating the bisector event, such that at the corresponding corner
event $p$ touches $Q$ at a vertex $v_i$ and $q$ lies on, say, the
edge $v_iv_{i+1}$. Following the notations of Section
\ref{sec:kinetic}, we denote by $\eta^- = \nu_{pqr^-}$ the Voronoi
vertex at which the corner event occurs, where $r^-$ is the third
point defining this event, and by $\rho^-$ the terminal ray of
$\bisect_{pq}$ that contains $\eta^-$. 

Let $g$ be the edgelet of $e_{pr^-}$ that disappears at the event.
We remove $g$ from the DCEL structure, and remove its two endpoints,
one of which is $\eta^-$. We also remove the external edgelet of $e_{pq}$ contained in $\rho^-$ and the
certificates associated with these features. Recall (see Section
\ref{Subsec:Degenerate})  that the next edgelet $g^+$ of $e_{pr^-}$
and the external edgelet $g^-$ of $e_{qr^-}$ adjacent to $\eta^-$
were parallel before the event and become aligned after the event.

We now execute a process that effectively simulates the rotational
sweep of $\rho$ from $\rho^-$ to $\rho^+$, as described in Section
\ref{Subsec:Degenerate}, continuously traversing the corresponding portion $\gamma$ of
$\bd \Vor(p)$ between $\eta^-$ and $\eta^+$ with the endpoint $\eta=\rho\cap \gamma$ of $e_{pq}$, and transferring this
portion to $\Vor(q)$. Specifically, we iterate over the edges and
edgelets of $\gamma$ in order, starting with the edgelet $g^+$
following $g$ on $e_{pr^-}$, and do the following. Let $f$ be the
edgelet currently traversed by $\eta$, of some old Voronoi edge $e_{pr}$. Notice that $e_{pr}$ is now split by $\eta=v_{pqr}$ into
$e_{qr}$, the portion already traversed by $\rho$, and the remainder $e_{pr}$, still labeled as such.
We test whether
$\eta^+ \in f$, that is, whether the terminal corner event, where
$q$ lies at $v_{i+1}$, occurs before $f$ ends.

Suppose first that
$\eta^+ \not\in f$. In this case the entire $f$ becomes an edgelet
of the new edge $e_{qr}$ when $\eta$ coincides with the other endpoint $\nu$ of $f$.
If $\nu$ is a breakpoint, this corresponds to a singular corner event. Otherwise, $\nu$ is an old Voronoi vertex $\nu_{prr'}$, so its coincidence with $\eta$ indicates a singular flip event of $e_{pr}$ to $e_{qr'}$.
In both cases, we replace the old bisector certificate
associated with $f$, if there was one, by a corresponding
certificate in which $q$ replaces $p$. 
In case of a singular flip event, we replace it by $\nu_{qrr'}$
and update its vertex certificates.
In addition, if the old edge $e_{pr}$ was non-corner, we also replace its edge certificate by the corresponding edge
certificate in which $q$ replaces $p$ again. 

A somewhat special treatment applies to the first edgelet $g^+$ in the sequence. If $\eta^+\not\in g^+$ then we remove $g^+$, as it now merges with the last edget $g^-$ of $e_{pr^-}$ into a common larger edgelet, and we replace the endpoint of $g^-$ (which was $\eta^-$) by the endpoint of $g^+$. The handling of this endpoint, if it is a vertex, is done exactly as just described. Finally, if $e_{qr^-}$ was a non-corner edge, we remove its certificate from the queue as it is now obsolete. If in addition
$g^+$ ends at a Voronoi vertex, $e_{qr^-}$ continues to be non-corner, and we generate a new edge certificate for it.

Finally, consider the case where $\eta^+$ lies on the current
edgelet $f$. In this case we create (i) a new breakpoint $w$, for
which $q$ touches $Q[w]$ at $v_{i+1}$ (indicating the final corner event), (ii) the new Voronoi vertex
$\eta^+ = \nu_{pqr}$, and (iii) three new edgelets $f_1$, $f_2$, and
$f_3$ that replace $f$, where $f_1$ is the portion of $f$ preceding
$w$, $f_2$ is the new edgelet between $w$ and $\eta^+$ (which is
denoted by $h$ in Figure \ref{Fig:Change-degenerate-VD}(c)), and
$f_3$ is the remainder of $f$; $f_1$ and $f_2$ are now part of
$e_{qr}$ while $f_3$ is still an (external) edgelet of $e_{pr}$. 
We
also add a new external edgelet $e_{pq}$, which is contained in $\rho^+$ 
and is adjacent to $\eta^+$. 
 We generate new certificates associated with these new features of the diagram and delete the old ones, similar to the preceding discussion.

This completes the description of the KDS repair mechanism.

The KDS as we described it maintains $O(nk)$ certificates, so its
event queue  takes $O(nk)$ space (in addition to $\VD(P)$ which also
takes  $O(nk)$ space). When a generic certificate fails we make
$O(1)$ changes to the diagram in $O(1)$ time and $O(1)$ changes to
the event queue in $O(\log n)$ time. The simulation of the
rotational sweep  when we process a singular bisector event may take
 a long time. But
 it makes $O(1)$ changes to the diagram in $O(1)$ time and
$O(1)$ changes to the event queue in $O(\log n)$ time, for each
singular corner or flip event along $\gamma$, and  the analysis of
Section \ref{sec:count}  bounds the overall number of these singular
sub-events. The formulation of the following theorem (and of Theorem
\ref{Thm:MaintainPolygDT1}) assumes this point of view.

Putting everything together, obtain the following main result of
this paper.

%%%%%%%%%%%%%%%%%%%%%%%%%%%%%%%%%%%%%%%%%%
\begin{theorem}\label{Thm:MaintainPolygDT}
Let $P$ be a set of $n$ moving points in $\reals^2$ along algebraic
trajectories of bounded degree, and let $Q$ be a convex $k$-gon.
Then both $\DT(P)$ and $\VD(P)$, including the sequence of edgelets of each Voronoi edge, can be maintained using a KDS that
requires $O(nk)$ storage, encounters $O(k^4n\lambda_r(n))$ events,
and processes each event in $O(\log n)$ time, in a total of
$O(k^4n\lambda_r(n)\log n)$ time. Here $r$ is a constant that
depends on the degree of the motion of the points of $P$.
\end{theorem}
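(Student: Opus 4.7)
The plan is to combine three independent ingredients already in place: the event-count bound from Theorem~\ref{Thm:PolygonalVoronoi}, a static complexity bound on the representation of $\VD(P)$, and the local cost analysis of the repair mechanism of Section~\ref{Subsec:OnFailure}.

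First I would establish the $O(nk)$ storage bound. The diagram $\DT(P\cup P_\infty)$ has $O(n)$ vertices, edges, and triangles, so the corresponding dual $\VD(P)$ has $O(n)$ Voronoi edges. By \lemref{edgelet-labels}, each such edge is a concatenation of at most $k-1$ edgelets, so the explicit DCEL representation of $\VD(P)$ requires $O(nk)$ space. The certificate inventory (generic and singular bisector certificates, vertex certificates, and edge certificates) introduced in Section~\ref{Subsec:Certificates} contains $O(1)$ certificates per edgelet and $O(1)$ per Voronoi vertex, hence $O(nk)$ certificates in total; storing them in a priority queue keyed by failure time adds only another $O(nk)$ overhead.

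Next I would verify correctness, namely that the maintained collection of certificates genuinely certifies the combinatorial structure of $\VD(P)$ and $\DT(P)$ at all times. The argument is that the three families of certificates partition the possible causes of topological change: generic and singular bisector certificates detect the orientations at which an edgelet of a bisector $\bisect_{pq}$ is born or dies (the events of Section~\ref{Subsec:Generic} and the triggering event of Section~\ref{Subsec:Degenerate}); vertex certificates detect the coincidences in which a Voronoi vertex crosses a breakpoint, i.e.\ the generic and initial singular corner events; and edge certificates detect the shrinking of non-corner Voronoi edges, i.e.\ the generic flips. The remaining singular corner and flip events inside a singular sequence are detected not via the queue but by the rotational sweep initiated when the paired singular bisector/vertex certificates fire. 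Our general-position assumption~(T5) guarantees that, apart from these explicitly paired simultaneous failures, no two certificates fail at the same moment.

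For the event count, I invoke Theorem~\ref{Thm:PolygonalVoronoi} directly: counting each sub-event of a singular sequence as a separate event (consistent with the cost accounting), the total is $O(k^4n\lambda_r(n))$. For the per-event cost, the case analysis in Section~\ref{Subsec:OnFailure} shows that processing a generic bisector, vertex, or edge certificate failure performs $O(1)$ updates to the DCEL and invalidates and regenerates $O(1)$ certificates; each priority-queue insertion or deletion costs $O(\log n)$, for a total of $O(\log n)$ per event. For a singular sequence, the initial pair triggers the rotational sweep along $\gamma$; each step of the sweep (a singular corner event at a breakpoint of $\gamma$, or a singular flip at a Voronoi vertex along $\gamma$) again performs $O(1)$ local DCEL surgery and $O(1)$ certificate updates, costing $O(\log n)$ amortized per sub-event. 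Since we charge each such sub-event to the global count given by Theorem~\ref{Thm:PolygonalVoronoi}, the aggregate cost across all singular sequences is absorbed. Multiplying the number of events by the per-event cost yields the claimed $O(k^4n\lambda_r(n)\log n)$ total running time.

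The main obstacle I expect is bookkeeping for the singular sequences: ensuring that during the simulated rotational sweep of $\rho$, each edgelet $f$ currently hit by $\eta$ can be located, relabeled from $e_{pr}$ to $e_{qr}$, and re-certified in $O(\log n)$ time, without re-traversing portions of $\gamma$ already processed. This is handled by walking $\gamma$ in the rotational order guaranteed by Section~\ref{Subsec:Degenerate}, using the DCEL adjacency pointers to move from one edgelet (or Voronoi vertex) to the next in $O(1)$ time, so that the $O(\log n)$ cost is paid only per queue update and never per search. Once this is in place, the three bounds compose into the statement of Theorem~\ref{Thm:MaintainPolygDT}.
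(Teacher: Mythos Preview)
Your proposal is correct and follows essentially the same approach as the paper: the theorem is obtained by assembling the $O(nk)$ storage bound for the explicit DCEL plus certificates, the $O(k^4n\lambda_r(n))$ event count from Theorem~\ref{Thm:PolygonalVoronoi}, and the $O(\log n)$ per-event repair cost from the case analysis of Section~\ref{Subsec:OnFailure}, with singular sequences accounted for sub-event by sub-event via the rotational sweep along~$\gamma$.
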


\subsection{Implicit representation of $\VD(P)$}

The drawback of the maintenance of the explicit representation of
$\VD(P)$, including the edgelets comprising each Voronoi edge, is
that it uses $O(nk)$ storage. We can reduce the storage to $O(n)$ by
completely ignoring the edgelets, and storing each Voronoi edge as a
single entity in the DCEL connecting the two Voronoi vertices that
delimit it (breakpoints are not maintained either).

This in fact simplifies the KDS, because there is no need anymore to
maintain bisector certificates or detect bisector events. That is,
we only maintain generic and initial singular corner certificates,
and generic edge certificates. We also store $\delta(\nu,p)$,
$\delta(\nu,q)$, and $\delta(\nu,r)$ explicitly with each vertex
$\nu=\nu_{pqr}$. As discussed before, $\delta(\nu,p)$,
$\delta(\nu,q)$, and $\delta(\nu,r)$ encode the labels of the
external edgelets adjacent to $\nu$.\footnote{We cannot compute
these values now from the labels of the external edgelets as we do
not maintain the edgelets.} Suppose that a generic vertex
certificate asserting that $\delta(\nu,p)=e_i=v_iv_{i+1}$ fails when
$p$ reaches one of the endpoints $v_i$ or $v_{i+1}$ of $e_i$. We
compute using $\delta(\nu,p)$, $\delta(\nu,q)$, $\delta(\nu,r)$,
which among the edges $e_{pr}$ and $e_{qr}$ gains an edgelet and
which loses an edgelet, and the labels of the new external edgelets
of $e_{pr}$ and $e_{qr}$ that touch $\nu$.
 We  compute the new vertex certificates of $\nu$ and insert them
   into the queue replacing the old vertex certificates associated with $\nu$. We check,
by comparing the external edgelets of $e_{pr}$ and $e_{qr}$, if any
of these edges changes its  type from a corner edge to a non-corner
edge or vice versa. If indeed the type of such edge changes, we
remove or add the corresponding edge certificate, as appropriate. In
addition we update the edge certificate of $e_{pq}$ if it is a
non-corner edge and update $\delta(\nu,p)$. The processing of a flip
event is similar to its processing when the diagram is explicit.

When an initial singular corner event is detected, we run the same
process as in the previous subsection, simulating the rotational
sweep of Section \ref{Subsec:Degenerate}. We perform this
 process by traversing  the corresponding portion $\gamma$ of
$\bd \Vor(p)$ between $\eta^-$ and $\eta^+$, essentially as before,
but here we compute the sequence of edgelets of each edge $e_{pr}$
along $\gamma$ on the fly, starting and ending at its external
edgelets (which are encoded by the $\delta$ values of the vertices
adjacent to $e_{pr}$). The tracing of the edgelets is needed only
for finding the edgelet containing $\eta^+$, so that we can generate
the relevant vertex certificates involving $\delta(\eta^+,p)$,
$\delta(\eta^+,q)$, and $\delta(\eta^+,r^+)$ (where $r^+$ is the
third point defining $\eta^+$).

Putting everything together, we obtain the following theorem.

\begin{theorem}\label{Thm:MaintainPolygDT1}
Let $P$ be a set of $n$ moving points in $\reals^2$ along algebraic
trajectories of bounded degree, and let $Q$ be a convex $k$-gon.
Then both $\DT(P)$ and $\VD(P)$ (without the edgelet subdivision of
 its edges) can be maintained using a KDS
that requires $O(n)$ storage, encounters $O(k^4n\lambda_r(n))$
events, and processes each event in $O(\log n)$ time, in a total of
$O(k^4n\lambda_r(n)\log n)$ time. Here $r$ is a constant that
depends on the degree of the motion of the points of $P$.
\end{theorem}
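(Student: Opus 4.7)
The plan is to reuse the structural analysis of Section~\ref{sec:kinetic} and the event count of Theorem~\ref{Thm:PolygonalVoronoi} verbatim, and only redesign the data structure so that neither the edgelet decomposition of each Voronoi edge nor any breakpoint is stored. Concretely, I would maintain a DCEL representing $\VD(P)$ with a single record per Voronoi face, edge, and vertex (and the dual $\DT(P)$), augmenting each vertex $\nu=\nu_{pqr}$ with the three indices $\delta(\nu,p),\delta(\nu,q),\delta(\nu,r)$ of the edges of $Q[\nu]$ touched by the defining points. Since $\VD(P)$ has $O(n)$ faces, edges, and vertices when we forget the edgelet subdivision, the structure uses $O(n)$ space. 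Each Voronoi edge $e_{pq}$ can be classified as corner or non-corner in $O(1)$ time by comparing the $\delta$ values at its two endpoints.

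The event queue carries only three kinds of certificates, each contributing $O(n)$ in total: (i) generic vertex certificates $\delta(\nu,p)=e_i$, one per (vertex, incident site) pair; (ii) generic edge certificates, one per non-corner edge; and (iii) initial-singular certificates (a paired bisector/vertex certificate) for each terminal ray of each bisector that bounds a Voronoi edge, detecting when the corresponding pair $(p,q)$ becomes parallel to an edge of $Q$. Each failure time is the smallest future root of an algebraic equation of constant degree and is computed in $O(1)$ time. A generic vertex failure is handled as in Section~\ref{Subsec:OnFailure}, but without touching any edgelet: we update the $\delta$ values at $\nu$, reclassify the two adjacent edges $e_{pr}, e_{qr}$ (and possibly $e_{pq}$) as corner/non-corner from the updated $\delta$'s, and insert or delete the corresponding edge certificates. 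A generic flip event is processed verbatim as in the explicit KDS. Each such handler performs $O(1)$ DCEL and queue updates, for an $O(\log n)$ cost per event.

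The main obstacle, and the only place where we must do extra work compared with the explicit KDS, is the compound singular bisector event, at which an entire polygonal region $W$ is transferred from $\Vor(p)$ to $\Vor(q)$ and which may carry $\Omega(n)$ intermediate singular corner and flip sub-events. I would simulate the rotational sweep of Section~\ref{Subsec:Degenerate} along the polygonal chain $\gamma$, relabeling each traversed edge $e_{pr_j}$ as $e_{qr_j}$, updating $\delta$ values at its endpoints, and inserting/deleting vertex and edge certificates at each singular corner or flip sub-event, exactly as prescribed there. Since the edgelets of $e_{pr_j}$ are not stored, we regenerate them on the fly from the $\delta$ values at its two endpoints by walking along the chains $C_{pr_j}$ and $C_{qr_j}$ of $\partial Q$ as dictated by \lemref{edgelet-labels}, stopping at the next singular sub-event or at the edgelet containing $\eta^+$. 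Each enumerated edgelet either corresponds to an actual singular sub-event whose $O(\log n)$ processing cost absorbs it, or is the unique terminal edgelet that contains $\eta^+$; hence the work on a singular sequence is $O(\log n)$ per sub-event it contains.

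Correctness is immediate from the case analysis of Section~\ref{sec:kinetic}: the certificates above are exactly those whose failures mark topological changes of $\VD(P)$ detectable without edgelet information, and every such change is repaired by the handlers. For the bounds, storage is $O(n)$ throughout; the total number of events (counting singular sub-events separately) is $O(k^4 n\lambda_r(n))$ by Theorem~\ref{Thm:PolygonalVoronoi}; and each is processed with $O(1)$ DCEL updates and $O(1)$ priority-queue operations, for $O(\log n)$ time per event and $O(k^4 n\lambda_r(n)\log n)$ in total, proving the theorem.
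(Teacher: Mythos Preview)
Your proposal is correct and follows essentially the same approach as the paper: drop the edgelet records and breakpoints from the DCEL, retain only vertex and edge certificates (with the $\delta$ values stored at each vertex), and handle a singular bisector event by simulating the rotational sweep and regenerating the edgelets of each traversed edge on the fly from the $\delta$ values at its endpoints. The paper actually drops bisector certificates altogether in the implicit version (the initial singular corner event is detected purely through the vertex certificate), whereas you keep the paired singular bisector certificate, but this is cosmetic since the two fail simultaneously; your cost accounting for the on-the-fly tracing is also slightly more conservative than necessary, since intermediate singular corner sub-events require no queue update in the implicit representation, but the bound you state is still valid.
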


%
%
%\smallskip
%\noindent{\bf Remark.}
%Note that we can maintain the number of edgelets along each edge $e_{pq}$ within the same time bound as in Theorem
%\ref{Thm:MaintainPolygDT}.

\section{Conclusion}
\label{sec:concl}

In this paper we have presented a detailed and comprehensive analysis
of kinetic Voronoi diagrams and Delaunay triangulations under a convex distance function
induced by a convex polygon. We have shown that the number of topological
changes, when the input points move along algebraic
trajectories of constant degree, is nearly quadratic in the input size,
and that the diagrams can be maintained kinetically by a simple
KDS, which is efficient, compact, and responsive, in the terminology of
\cite{bgh-dsmd-99}. As stated in the introduction, our result, along with the
result in the companion paper~\cite{Stable}, lead to a compact, responsive, and
efficient KDS for maintaining the stable edges of the Euclidean Delaunay
triangulation of $P$, as defined earlier, that processes a near-quadratic
number of events if the motion of $P$ is algebraic of bounded degree.
See~\cite{Stable} for further details.

An interesting open problem is to improve the dependence on $k$ of the bounds
on the number of events, as obtained in Section~\ref{sec:count}.
We tend to conjecture a near-quadratic dependence on $k$ (down from the
current factor $k^4$). A more challenging question is to develop an efficient
KDS for Voronoi diagrams under polyhedral distance functions in $\reals^3$.
Of course, the major open problem on this topic is to extend the recent result by
Rubin~\cite{RubinUnit} to the case when the points of $P$ move with different speeds.


\begin{thebibliography}{10}

\small

\bibitem{AB}
M.~A.~Abam, M.~de Berg,
Kinetic spanners in $R^d$,
{\it Discrete Comput. Geom.} 45~(2011), 723--736.

\bibitem{Stable}
P. K. Agarwal, J. Gao, L. Guibas, H. Kaplan, V. Koltun, N. Rubin and M. Sharir,
Kinetic stable Delaunay graphs,
{\it Proc. 26th Annu.\ Sympos.\ Comput.\ Geom.}, 2010, pp.\ 127--136.
Also in CoRR abs/1104.0622: (2011).

%\bibitem{KineticNeighbors}
%P. K. Agarwal, H. Kaplan and M. Sharir,
%Kinetic and dynamic data structures for closest pair and all nearest neighbors,
%{\it ACM Trans. Algorithms} 5 (1) (2008), Art.~4.

%\bibitem{AWY}
%P. K. Agarwal, Y. Wang and H. Yu,
%A 2D kinetic triangulation with near-quadratic topological changes,
%{\it Discrete Comput. Geom.} 36 (2006), 573--592.

%\bibitem{Amenta}
%N. Amenta and M. Bern,
%Surface reconstruction by Voronoi filtering,
%{\it Discrete Comput. Geom.}, 22 (1999), 481--504.

%\bibitem{Crusts}
%N. Amenta, M. W. Bern and D. Eppstein, The crust and beta-skeleton: combinatorial curve reconstruction,
%{\it Graphic. Models and Image Processing} 60 (2) (1998), 125--135.

\bibitem{AKL}
F. Aurenhammer, R. Klein, and D.-T. Lee,
\emph{Voronoi Diagrams and Delaunay Triangulations,}
World Scientific, 2013.

\bibitem{bgh-dsmd-99}
J.~Basch, L.~J. Guibas and J.~Hershberger,
Data structures for mobile data,
{\it J. Algorithms} 31~(1999), 1--28.

\bibitem{BCKO}
M.~de~Berg, O.~Cheong, M.~van Kreveld, and M.~Overmars,
{\it Computational Geometry: Algorithms and Applications},
3rd ed., Springer-Verlag, Berlin, 2008.

\bibitem{BSTY}
J.-D.~Boissonnat, M.~Sharir, B.~Tagansky, and M.~Yvinec,
Voronoi diagrams in higher dimensions under certain polyhedral distance functions,
\emph{Discrete Comput.~Geom.} 19~(1998), 485--519.

\bibitem{Chew}
L. P. Chew,
Near-quadratic bounds for the $L_1$-Voronoi diagram of moving points,
{\it Comput. Geom. Theory Appl.} 7 (1997), 73--80.

\bibitem{CD}
L. P. Chew and R. L. Drysdale,
Voronoi diagrams based on convex distance functions,
{\it Proc. 1st Annu. Sympos. Comput. Geom.}, 1985, pp.~235--244.

%\bibitem{d-slsv-34}
%B.~Delaunay,
%Sur la sph{\`e}re vide. {A} la memoire de {Georges} {Voronoi},
%{\it Izv. Akad. Nauk SSSR, Otdelenie Matematicheskih i Estestvennyh
%Nauk} 7 (1934), 793--800.

%\bibitem{TOPP}
%E.~D.~Demaine, J.~S.~B.~Mitchell, and J.~O'Rourke,\\
%The Open Problems Project,
%\texttt{http://www.cs.smith.edu/\~{ }orourke/TOPP/}.

%\bibitem{Ed2}
%H. Edelsbrunner,
%{\it Geometry and Topology for Mesh Generation},
%Cambridge University Press, Cambridge, 2001.

\bibitem{Dr}
R.~L.~Drysdale III,
A practical algorithm for computing the Delaunay triangulation for
convex distance functions,
\emph{Proc. 1st Annu. ACM-SIAM Sympos. Discrete Algos.}, pp.\ 159--168.

\bibitem{CGAL}
E. Fogel, D. Halperin, and R. Wein,
{\it CGAL arrangements and their applications: A step-by-step guide},
Springer-Verlag, Heidelberg, 2012.

\bibitem{Gu}
L.~J. Guibas,
Modeling motion,
{\it Handbook of Discrete and Computational Geometry,} 2nd edition,
(J.~E. Goodman and J.~O'Rourke, eds.),
CRC Press, Boca Raton, 2004, pp.\ 1117--1134.

\bibitem{gmr-vdmpp-92}
L.~J. Guibas, J.~S.~B. Mitchell and T.~Roos,
Voronoi diagrams of moving points in the plane,
{\it Proc. 17th Internat. Workshop Graph-Theoret. Concepts Comput. Sci.},
1992, pp.\ 113--125.

\bibitem{Hw}
F.~K.~Hwang:
An $O(n \log n)$ algorithm for rectilinear minimal spanning trees,
\emph{J. ACM} 26~(1979), 177--182.

\bibitem{IKLM}
C. Icking, R. Klein, N.-M. L\^{e} and L. Ma,
Convex distance functions in 3-space are different,
{\it Fundam. Inform.} 22 (4) (1995), 331--352.

%\bibitem{KRS}
%H. Kaplan, N. Rubin and M. Sharir,
%A kinetic triangulation scheme for moving points in the plane,
%{\it Comput. Geom. Theory Appl.} 44 (2011), 191--205.

\bibitem{KLPS}
K. Kedem, R. Livne, J. Pach, and M. Sharir,
On the union of Jordan regions and collision-free translational motion
amidst polygonal obstacles,
{\it Discrete Comput. Geom.} 1 (1986), 59--70.

%\bibitem{Skeletons}
%D. Kirkpatrick and J. D. Radke,
%A framework for computational morphology,
%{\it Computational Geometry} (G. Toussaint, ed.), North-Holland (1985), 217--248.

\bibitem{KS}
V.~Koltun and M.~Sharir,
Polyhedral Voronoi diagrams of polyhedra in three dimensions,
{\it Discrete Comput. Geom.} 31~(2004), 83--124.

\bibitem{Lee}
D.-T. Lee,
Two-dimensional Voronoi diagrams in the $L_p$-metric,
\emph{J.~ACM} 27~(1980), 604--618.

\bibitem{LW}
D.-T. Lee and C. K. Wong,
Voronoi diagrams in $L_1$ ($L_\infty$) metrics with 2-dimensional storage applications,
\emph{SIAM J.~Comput.} 9~(1980), 200--211.

\bibitem{LS}
D. Leven and M. Sharir,
Planning a purely translational motion for a convex object in
two--dimensional space using generalized Voronoi diagrams,
{\it Discrete Comput. Geom.} 2 (1987), 9--31.

\bibitem{Ma}
L.~Ma,
\emph{Bisectors and Voronoi Diagrams for Convex Distance Functions},
PhD Thesis, Fern University, Hagen, 2000.

\bibitem{Rubin}
N. Rubin,
On topological changes in the Delaunay triangulation of moving points,
{\it Discrete Comput. Geom.} 49(4) (2013), 710--746.

\bibitem{RubinUnit}
N. Rubin,
On kinetic Delaunay triangulations: A near quadratic bound for unit speed motions,
{\it Proc. 54th Annual IEEE Sympos. Foundat. Comput. Science}, 2013, 519--528.

\bibitem{SA95}
M.~Sharir and P.~K. Agarwal,
{\it Davenport-Schinzel Sequences and Their Geometric Applications},
Cambridge University Press, New York, 1995.

\bibitem{Sk}
S. Skyum,
A sweepline algorithm for generalized Delaunay triangulations,
Tech.~Rept. DAIMI PB-373, Computer Science Department, Aarhus University, 1991.

\bibitem{WWW}
P.~Widmayer, Y.-F.~Wu, and C.~K.~Wong,
On some distance problems in fixed orientations,
\emph{SIAM J.~Comput.} 16~(1987), 728--746.

\bibitem{Yap}
C.~K. Yap,
A geometric consistency theorem for a symbolic perturbation scheme,
{\it J. Comput. Syst. Sci.} 40 (1990), 2--18.
\end{thebibliography}
\end{document}